\theoremstyle{plain}
\newtheorem{thm}{Theorem}[section]
\newtheorem{lem}[thm]{Lemma}
\newtheorem{prop}[thm]{Proposition}
\newtheorem{cor}[thm]{Corollary}
\theoremstyle{definition}
\newcommand{\Z}{\mathbb{Z}}
\newcommand{\A}{\mathcal{A}}
\newcommand{\C}{\mathbb{C}}
\newcommand{\M}{\mathcal{M}}
\newcommand{\Tr}{\textrm{Tr}}
\newcommand{\Irr}{\textrm{Irr}}
\newcommand{\Span}{\textrm{span}}
\newcommand{\Sym}{\textrm{Sym}}
\newcommand{\im}{\textrm{Im}}
\newcommand{\Fun}{\textrm{Fun}}
\newcommand{\End}{\textrm{End}}
\newcommand{\wt}{\textrm{ wt}}
\newcommand{\F}{\mathbb{F}}
\title{Quantum query complexity of symmetric oracle problems.}
\author{Daniel Copeland}
\email{daniel.copeland@gmail.com} 
\affiliation{UC San Diego}
\author{Jamie Pommersheim}
\email{jamie@reed.edu}
\affiliation{Reed College}
\begin{document}

\maketitle

% REQUIRED
\begin{abstract}
We study the query complexity of quantum learning problems in which the oracles form a group $G$ of unitary matrices. In the simplest case, one wishes to identify the oracle, and we find a description of the optimal success probability of a $t$-query quantum algorithm in terms of group characters.  As an application, we show that $\Omega(n)$ queries are required to identify a random permutation in $S_n$. More generally, suppose $H$ is a fixed subgroup of the group $G$ of oracles, and given access to an oracle sampled uniformly from $G$, we want to learn which coset of $H$ the oracle belongs to. We call this problem coset identification and it generalizes a number of well-known quantum algorithms including the Bernstein-Vazirani problem, the van Dam problem and finite field polynomial interpolation. We provide character-theoretic formulas for the optimal success probability achieved by a $t$-query algorithm for this problem. One application involves the Heisenberg group and provides a family of problems depending on $n$ which require $n+1$ queries classically and only $1$ query quantumly.
\end{abstract}

% REQUIRED
% \begin{keywords}
%   quantum algorithms, query complexity, oracle problems
% \end{keywords}

% REQUIRED
% \begin{AMS}
%   68Q12,81P68,68Q32,68Q17
% \end{AMS}

\section{Introduction}
An oracle problem is a learning task in which a learner tries to determine some information by asking certain questions to a teacher, called an oracle. In our setting the learner is a quantum computer and the oracle is an unknown unitary operator acting on some subsystem of the computer. The computer asks questions by preparing states, subjecting them to the oracle, measuring the results, and finally making a guess about the hidden information. How many queries to the oracle are needed by the computer to guess the correct answer with high probability?\\

This paper addresses the following oracle problem. Fix a finite group $G$ and a subgroup $H \leq G$. The elements of $G$ are encoded as unitary operators by some unitary representation $\pi: G \to U(V)$. Given oracle access to $\pi(a)$ (for some unknown $a \in G$) the learner must guess which coset of $H$ the element $a$ lies in.  We focus on average case success probability, though an easy averaging argument, given in Section \ref{sec:2}, shows that the worst case and average case query complexity are equal.\\

We call this problem {\it coset identification}. This task encompasses many of previously studied qauntum oracle problems, including univariate and multivariate polynomial interpolation over a finite field \cite{CvDHS:2016, CCH:2018}, the group summation problem \cite{MePo:2011, zhandry:2015, BBCMW:qlbpoly}, and symmetric oracle discrimination \cite{BCMP:sod}.  In addition, the coset identification problem we study generalizes the homomorphism evaluation problem for abelian groups studied by Zhandry in \cite{zhandry:2015}, which greatly inspired us.  Section \ref{sec:7} gives details of this connection. \\

In this paper, we analyze the query complexity of the general coset identification problem.  We prove that nonadaptive algorithms are optimal for any coset identification problem. We provide tools to reduce the analysis of query complexity to purely character theoretic questions (which are themselves often combinatorial). In particular we derive a formula for the exact quantum query complexity for coset identification\ in terms of characters. In the case of symmetric oracle discrimination (which itself includes polynomial interpolation as a special case) we find the lower and upper bound for bounded error query complexity. \\

Another motivation for our work is the study of nonabelian oracles. Much is known about quantum speedups when the oracle is a standard Boolean oracle. Less is known about whether oracle problems with nonabelian symmetries can offer notable speedups. To that end we study the follow scenario: suppose a group $G$ acts by permutations on a finite set $\Omega$ (we call $\Omega$ a $G$-set). A learner is given access to a machine which takes an element $\omega \in \Omega$ and returns $a \cdot \omega$ for some hidden group element $a \in G$. With as few queries as possible the learner should guess the hidden element $a \in G$. The classical query complexity for this problem is a long-known invariant of $G$-sets called the base size. For instance, if $G$ is the full permutation group of $\Omega = \{1, \dots, n\}$ then $n-1$ queries are required classically to determine the hidden permutation. This problem is a special case of symmetric oracle discrimination and we can express the bounded error quantum query complexity of this purely in terms of the character of the $G$-set $\Omega$. For instance, we find that when $G$ is the full permutation group of $X = \{1, \dots, n\}$ then $n - 2\sqrt{n} + \Theta(n^{1/6})$ queries are necessary (and sufficient) to determine the hidden element. \\

This result bears some similarity to other work on learning problems related to the symmetric group. Aaronson and Ambainis \cite{AA2006}, who prove that at most a polynomial speedup can be achieved in computing functions on $n$ inputs which are invariant under the action of the full symmetric group $S_n$ (using a standard evaluation oracle). Ben-David \cite{BenDavid2016} proves that at most a polynomial speedup is possible for Boolean functions defined on the full symmetric group. More recently, Dafni, Filmus, Lifshitz, Lindzay and Vinyals \cite{DFLLV2021} have studied the query complexity of Boolean functions defined on the symmetric group, again proving a polynomial relationship between the quantum and classical query complexities (as well as numerous other complexity measures). These results may be compared to the well-known fact that only polynomial speedups are possible in computing total Boolean functions \cite{BBCMW:qlbpoly}, the idea being that learning problems on the full symmetric group correspond to total functions, while learning problems on a subgroup correspond to partial functions. All of the results mentioned above are not directly comparable to ours, since they use a standard evaluation oracle, while we examine a more symmetric ``in-place" oracle model. \\

The task of oracle identification can be further refined: fix a group $G$, a $G$-set $\Omega$, and a function $f: G \to X$ which is constant on left cosets of some subgroup $H$, and distinct on distinct cosets. The {\it (left) coset identification problem} is to determine $f(a)$ given access to a permutational black-box hiding $a$ through the action on $\Omega$. For instance, when $G = S_n$ (the symmetric group), $\Omega = \{1, \dots, n\}$ its defining representation and $f$ the sign homomorphism, it requires $n-1$ classical queries to determine $f(a)$. As a counterpoint to the harsh lower bound above we provide a family of examples for this task parametrized by $n$ in which the quantum query complexity is $1$ while the classical complexity is $O(n)$. The groups we use are Heisenberg groups acting as small subgroups of the full permutation group. This example is a nonabelian analogue of the fact that good quantum speedups can be found in computing partial Boolean functions \cite{BV:1997}. \\

The paper is organized as follows. In \cref{sec:2} we formalize coset identification\ in the context of quantum learning algorithms and review the notions of adaptive and nonadaptive learning. In \cref{sec:3} we prove that parallel queries suffice to produce an optimal algorithm for this task. \Cref{sec:4} applies this theorem to symmetric oracle discrimination and addresses numerous example problems. In \cref{sec:5} we return to the general coset identification task and we prove the main theorem of this paper, Theorem \ref{CIThm}, which is a formula for the success probability of an optimal $t$-query algorithm in terms of characters. We use this in \cref{sec:6} to compute the exact and bounded error query complexity of some special examples (including the Heisenberg group example). We conclude in \cref{sec:7} by explaining how our work reproduces several previously known results involving abelian oracles. \\

Our paper uses the language of representation theory of finite groups. A suitable reference is the first third of Serre's textbook \cite{serre:linear}. We review some important notations later in Section 5 (in particular, the idea of induced representation is critical for the statement of our results.) Here we mention that a {\it representation} of a finite group $G$ always refers to a {\it finite dimensional} and {\it unitary} representation of $G$ over the complex numbers. In other words, a representation is a group homomorphism $\pi: G \to U(V)$ (the unitary group of a f.d. vector space $V$). We often think of $V$ as a left module for the group alegbra $\C G$, and use the notation $gv$ for $\pi(g)v$ when the map $\pi$ is clear from the context.

\section{Quantum learning from oracles}
\label{sec:2}
A quantum or classical oracle problem is described by a set of hidden information $Y$, a function $f: Y \to X$ (the function to learn or compute), and a representation of $Y$ as operations on inputs of some kind (which determines the oracles). Classically such a representation consists of a set of inputs $\Omega$ and an assignment taking each $y \in Y$ to a permutation of $\Omega$, i.e. a map $\pi: Y \to \Sym(\Omega)$. A {\it classical oracle problem} is specified by a tuple $(Y, \Omega, \pi, f)$. A classical computer has access to $\pi(y)$ for some unknown $y \in Y$ by spending one query to input $\omega \in \Omega$ to learn $\pi(y) \cdot \omega$. The goal is to determine $f(y)$ with a high degree of certainty with as few queries as possible. More concretely, we measure the efficacy of an algorithm by its average case success probability, namely the probability of correctly outputting $f(y)$ supposing the hidden information $y$ is sampled uniformly from $Y$. For the highly symmetric problems considered in this paper, this is the same as the worst-case success probability, as explained below. \\

The quantum representation of oracles is described by a Hilbert space $V$ and an assignment taking each $y \in Y$ to a unitary operator of $V$, in other words a map $\pi: Y \to U(V)$. Thus a {\it quantum oracle problem} is specified by a tuple $(Y, V, \pi, f)$. The quantum computer spends one query to input a state $|\psi \rangle \in V$ to $\pi(y)$ to acquire the state $\pi(y) |\psi \rangle$; the goal is to produce a state and measurement scheme which outputs the value $f(y)$. \\

Any classical oracle problem $(Y, \Omega, \pi, f)$ determines a quantum oracle problem via linearization: oracles will act on the Hilbert space $\C \Omega$ (spanned by the orthonormal basis $\{|\omega \rangle\ |\ \omega \in \Omega\}$) by permutation matrices. \\

We note that there are other oracle models used to encode permutations. One possibility is to require an oracle to act on a bipartite system, with one subsystem specified to be the ``input register" and the other a ``response register". \footnote{More precisely, one usually defines an abelian group structure on $\Omega$ (usually cyclic) by defining an operation $\oplus$ on $\Omega$. Then the oracle hiding the permutation $\pi(a)$ is defined to act by  $|\omega, b \rangle \mapsto |\omega, (\pi(a) \cdot \omega) \oplus b \rangle$. Here $\omega, b \in \Omega$, so both the input and response registers are copies of $\C \Omega$.} 
While we do not specifically consider this model here, we note that many oracle problems, such as polynomial interpolation and group summation, that are normally formulated in this two-register setup do have an easy reformulation in our setup.    Thus, our results and analyses apply to these problems in their original two-register formulation.  See Section 7.
However, 
in some cases, the two-register setup results in a set of oracles that do not form a group, for instance in  the work of Ambainis on permutation inversion \cite{ambainis:2002}. In general, it is an interesting question (and to our knowledge, open) whether these oracle models are the same, or if they lead to different query complexities. \footnote{As another modification, one may propose that having access to an oracle $\mathcal{O}$ means an algorithm may choose to access $\mathcal{O}$ or $\mathcal{O}^{-1}$ in any given query. This is a separate model which we do not consider here.}
 \\

A {\it symmetric oracle problem} is an oracle problem in which the hidden information is a group $G$ (so we are replacing $Y$ with $G$) and the map $\pi$ is a homomorphism $G\rightarrow \Sym(\Omega)$ in the classical case or  $G\rightarrow U(V)$ in the quantum case.  If $\pi: G\rightarrow U(V)$ is a homomorphism, then it is common practice to regard $V$ as a (left) $\C G$-module where $\C G$ is the group algebra of $G$ (spanned by an orthonormal basis sometimes written without kets as $\{g\ |\ g \in G\}$. In module notation we sometimes write $g\cdot v := \pi(g)(v)$ (for $g \in G, v \in V$) if the representation $\pi$ is understood from context. The quantum oracle arising from a symmetric classical problem is also symmetric. \\

Of special interest to us is the case when the function $f$ to be learned is compatible with the group structure $G$. An instance of the {\it coset identification problem} is a symmetric oracle problem $(G, V, \pi,  f)$ where the function $f: G \to X$ is constant on left cosets of a subgroup $H \leq G$ and distinct on distinct cosets. We also assume $f$ is onto. The typical example is when $X = \{gH\ |\ g \in G\}$ is the set of left cosets of $H$ and $f(g) = gH$. An equivalent formulation is to say that $X$ is a transitive $G$-set and the map $f: G \to X$ is a map of (left) $G$-sets (i.e., $f(gh) = gf(h)$ for all $g, h \in G$). Then the subgroup $H$ can be recovered as the preimage of $f(e)$. \\

For our analysis of the coset identification problem, we focus on average case success probability. The symmetry of the problem implies that worst case and average case success probabilites are equal, as the following argument shows:  provided an unknown oracle $\pi(a)$ we can select $g \in G$ uniformly at random and preprocess our input by applying $\pi(g)$. Then an optimal average-case algorithm will return the coset containing $ga$ with optimal average-case success probability. The coset which contains $a$ can then be retrieved by applying $g^{-1}$. Hence it suffices to consider the average case success probability of any algorithm for this task (with the unknown oracle $\pi(a)$ sampled uniformly from $G$).\\

We examine bounded error and exact measures of query complexity. The {\it exact} (or {\it zero error}) {\it query complexity} of a learning problem is the minimum number of queries needed by an algorithm to compute $f(y)$ with zero probability of error. The {\it bounded error query complexity} is the minimum number of queries needed by an algorithm to compute $f(y)$ with probability $\geq 2/3$. The bounded error query complexity is often studied for a family of problems growing with a parameter $n$ and so changing the constant $2/3$ above to any number strictly greater than $1/2$ will only change the query complexity by a constant factor mostly ignored in asymptotic analysis. \\

Broadly speaking, there are two qualitatively different approaches to solving an oracle problem. The first approach is to ask questions one at a time, carefully changing your questions as you receive more information. This is called using adaptive queries. The other approach is to prepare all your questions and ask them at once in one go (imagining the learner has access to multiple copies of the teacher). This is known as using non-adaptive, or parallel queries.\\

Classically the adaptive model is at least as strong as the nonadaptive model, since you can convert any nonadaptive algorithm into an adaptive one (by picking your questions in advance but asking them one at a time). This is well-known to be true also in the quantum setting. In the next section we will prove the converse for coset identification: \\

\begin{thm}\label{parallelqueries}
Suppose $(G, V, \pi, f)$ describes an instance of coset identification. Then there exists a $t$-query quantum algorithm to determine $f(a)$ with probability $P$ if and only if there exists a $t$-query nonadaptive query algorithm which does the same.
\end{thm}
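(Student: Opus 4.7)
The plan is to establish the non-trivial direction, converting any $t$-query adaptive algorithm for coset identification into a nonadaptive $t$-query algorithm with equal success probability; the reverse direction is routine (simply perform the $t$ parallel queries one at a time, delaying all post-processing to the end).

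First, I would write the output of a general $t$-query adaptive algorithm as the pure state
$$|\psi_a\rangle = U_t\,(\pi(a) \otimes I_W)\,U_{t-1}\,(\pi(a) \otimes I_W)\cdots U_1\,(\pi(a) \otimes I_W)\,U_0|\text{init}\rangle,$$
where $U_0,\ldots,U_t$ are fixed unitaries on $V\otimes W$ with $W$ a workspace. I would then argue that the success probability for coset identification is determined by the Gram matrix $M_{a,b}=\langle\psi_a|\psi_b\rangle$, since the optimal POVM for a $G$-symmetric ensemble of pure states is fixed by their pairwise inner products (and we have already reduced to the $G$-symmetric setting via the averaging argument given in the text).

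Second, I would characterize nonadaptive algorithms: such an algorithm prepares a fixed state $|\Phi_0\rangle \in V^{\otimes t}\otimes\widetilde W$, applies $\pi(a)^{\otimes t}$ in parallel, and measures. The output states $|\Phi_a\rangle = (\pi(a)^{\otimes t}\otimes I_{\widetilde W})|\Phi_0\rangle$ have Gram matrix
$$\langle\Phi_a|\Phi_b\rangle = \langle\Phi_0|(\pi(a^{-1}b)^{\otimes t}\otimes I_{\widetilde W})|\Phi_0\rangle,$$
which is a matrix coefficient of $\pi^{\otimes t}$; allowing the ancilla $\widetilde W$ to be arbitrary, these exhaust the positive-semidefinite matrix coefficients of subrepresentations of $\pi^{\otimes t}$.

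Third, I would exhibit a $|\Phi_0\rangle$ whose Gram matrix matches that of $|\psi_a\rangle$. The construction enlarges the workspace to carry $t$ fresh copies of $V$ and ``unfolds'' the adaptive circuit: each oracle call now acts on its own fresh copy of $V$ (so all $t$ calls commute and may be performed in parallel), and the interleaved unitaries $U_1,\ldots,U_t$ are absorbed into a single fixed post-query unitary that implements the sequencing. One prepares an input state placing the appropriate pre-query configuration on each $V$-register (possibly coherently, over an ancilla, when that configuration depends on earlier query outcomes), applies $\pi(a)^{\otimes t}$ in parallel, and finally applies the fixed unitary to recover the adaptive output state.

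The main obstacle lies in making this unfolding precise, because the configuration input to the $i$-th query in the adaptive circuit genuinely depends on the unknown element $a$ through the outcomes of earlier queries. The resolution is to store all necessary intermediate data coherently in superposition so that a single fixed post-query unitary can reassemble $|\psi_a\rangle$ from the parallel oracle applications; at the representation-theoretic level, this amounts to the observation that $a\mapsto|\psi_a\rangle$ is built from a degree-$t$ polynomial in matrix entries of $\pi(a)$ and hence embeds into the matrix coefficients of $\pi^{\otimes t}$, a space already exhausted by nonadaptive outputs. Verifying this embedding and matching it against the nonadaptive Gram structure is the technical heart of the proof.
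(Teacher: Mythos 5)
Your overall strategy---match the Gram matrix of the adaptive output ensemble by a nonadaptive preparation---has a gap that cannot be repaired as stated. Every nonadaptive Gram matrix has the form $\langle\Phi_a|\Phi_b\rangle=\langle\Phi_0|(\pi(a^{-1}b)^{\otimes t}\otimes I)|\Phi_0\rangle$, i.e.\ it is a function of $a^{-1}b$ alone, exactly as you note in your second step. But the adaptive Gram matrix
$$
\langle\psi_a|\psi_b\rangle=\langle\mathrm{init}|(\pi(a)^{*}\otimes I)U_1^{*}\cdots U_t^{*}\,U_t\cdots U_1(\pi(b)\otimes I)|\mathrm{init}\rangle
$$
is \emph{not} a function of $a^{-1}b$ in general, because the interleaved unitaries $U_i$ do not commute with $\pi(a)^{*}\otimes I$ and so cannot be cancelled against the $\pi(b)$'s. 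Hence no choice of $|\Phi_0\rangle$ can match the Gram matrix, and your third step fails. The ``degree-$t$ polynomial'' observation is true of the \emph{amplitudes} of $|\psi_a\rangle$ (this is the polynomial-method fact used by Childs--van Dam--Hung--Shparlinski for lower bounds), but it does not imply the ensemble $\{|\psi_a\rangle\}$ is isometric to one of the form $(\pi(a)^{\otimes t}\otimes I)|\Phi_0\rangle$; the span of degree-$t$ matrix coefficients is much larger than the set of positive-semidefinite functions of $a^{-1}b$. Relatedly, the ``unfolding'' cannot place the correct pre-query configuration on the $i$-th fresh copy of $V$, since that configuration depends on $a$, and storing it ``coherently'' does not remove this dependence.

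A second symptom of the gap: your argument never uses the hypothesis that $f$ is constant on cosets of $H$ (equivalently, that $f$ is a $G$-set map). Since the parallelization statement is false for general oracle problems (Grover), any correct proof must invoke this hypothesis. The paper's proof does so at a precise point: it first modifies the algorithm by adjoining a register $\C G$ in the uniform superposition and interleaving controlled-multiplication gates $CM|v,g\rangle=|\pi(g^{-1})v,g\rangle$, producing the state $\tfrac{1}{\sqrt{|G|}}\sum_g|\psi^{\A}_{g^{-1}a}\rangle\otimes|g\rangle$; one measures $g$, runs the old POVM, and outputs $gx$, which equals $f(a)$ precisely because $gf(g^{-1}a)=f(a)$. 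This modified algorithm has the \emph{same success probability} as $\A$ but a \emph{different ensemble} of output states; the intertwining identity $CM(\pi(h)\otimes L_h)=(I\otimes L_h)CM$ then pushes all oracle calls to the left regular representation acting once at the end, and a representation-theoretic bookkeeping lemma shows the relevant state lives in a subspace whose irreducible constituents are those of $V^{\otimes t}$. So the correct route is not to preserve the Gram matrix but to preserve only the success probability while replacing the ensemble, using the coset structure for the classical correction. You would need to rebuild your argument along these lines (or find another place where the $G$-set property of $f$ enters).
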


This theorem is certainly not true for arbitrary learning problems: Grover's algorithm provides an example in which any optimal algorithm must use adaptive queries \cite{zalka:1999}.  To prove the theorem we must precisely state what adaptive and nonadaptive algorithms are. 

\subsection{Adaptive vs. nonadaptive: definitions}\label{sec:algdefs}
Recall that a quantum learning problem is described by a tuple $(Y,V, \pi, f: Y \to X)$ where $Y$ indexes the set of hidden information, $V$ is a finite dimensional Hilbert space, $\pi: Y \to U(V)$ a representation of the unknown information by unitary operators, and $f$ is the function to learn.\\

The standard model for an adaptive algorithm is as follows (see {\it e.g.} \cite[Section 3.2]{BBCMW:qlbpoly}): \\

A {\it $t$-query adaptive quantum algorithm} for the quantum oracle problem $(Y, V, \pi, f: Y \to X)$ consists of a tuple $\A = (N, \psi, \{U_1, \dots, U_t\}, \{E_i\})$ where
\begin{itemize} \item{$N$ is the dimension of the auxiliary workspace used in the computation}
\item{$|\psi\rangle$ is a unit vector in $V \otimes \mathbb{C}^N$}
\item{$\{U_1, \dots, U_t \}$ is a set of unitary operators acting on $V \otimes \mathbb{C}^N$}
\item{$\{E_x\}_{x \in X}$ is a POVM with measurement outcomes indexed by $X$.}\\
\end{itemize}

The algorithm uses $t$ queries to the oracle $\pi(a)$ (with $a$ sampled uniformly from $Y$) to produce the output state
$$
|\psi_a^\A \rangle =U_t (\pi(a) \otimes I) U_{t -1} (\pi(a) \otimes I) \dots (\pi(a) \otimes I) U_1 (\pi(a) \otimes I) |\psi \rangle
$$
upon which the algorithm executes the measurement described by $\{E_x\}_{x \in X}$. Here and elsewhere $I$ denotes the identity operator (in this case acting on the space $\mathbb{C}^N$). \\

In quantum circuit notation the preparation of the state $|\psi_a^{\mathcal{A}}\rangle$ reads:
\begin{figure}[h!]\label{AdaptiveCircuit}
\centering
\includegraphics[scale=1.0]{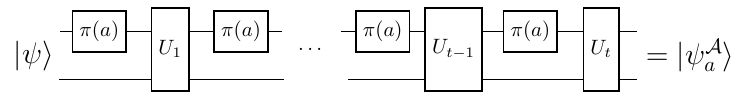}
\end{figure}

By contrast, an algorithm is {\it nonadaptive} if at any point during the algorithm, the input for some query does not depend on the results to any of the previous queries. Essentially this means that all the inputs are completely determined before the algorithm begins. Classically, $t$ nonadaptive queries are identical to $t$ simultaneous queries to $t$ copies of an oracle. This motivates the following definition ({\it cf}\  \cite[Section 2]{mont:2010}):\\

A {\it $t$-query nonadaptive quantum algorithm} for the oracle problem $(Y, V, \pi, f)$ is a tuple $\A = (N, \psi, \{E_x\})_{x \in X}$ where
\begin{itemize}
\item{$N$ is the dimension of the auxiliary register.}
\item{$| \psi \rangle$ is the input state, a unit vector of $V^{\otimes t} \otimes \C^N$.}
\item{$\{E_x\}$ is a POVM indexed by $X$}. \\
\end{itemize}

The algorithm operates on the input state to produce
$$
|\psi_a^{\A} \rangle = (\pi(a)^{\otimes t} \otimes I) |\psi \rangle
$$
which is then measured using the POVM $\{E_x\}$. The next fact is very useful and follows immediately from definitions.

\begin{lem}\label{tquery}
A $t$-query nonadaptive algorithm for the problem $(Y, V, \pi, f)$ is the same as a single-query nonadaptive algorithm for the oracle problem $(Y, V^{\otimes t}, \pi^{\otimes t}, f)$. 
\end{lem}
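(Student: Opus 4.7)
The plan is to unpack the two definitions side-by-side and observe that they specify the same data and produce the same output state, so the correspondence between the two classes of algorithms is just the identity map on tuples.

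First I would write down what a $t$-query nonadaptive algorithm for $(Y, V, \pi, f)$ is according to the definition: a tuple $(N, |\psi\rangle, \{E_x\}_{x\in X})$ with $|\psi\rangle\in V^{\otimes t}\otimes\C^N$, $\{E_x\}$ a POVM on $V^{\otimes t}\otimes\C^N$ indexed by $X$, and with the output state on input $a\in Y$ given by
$$
|\psi_a^{\A}\rangle = (\pi(a)^{\otimes t}\otimes I)|\psi\rangle.
$$
Then I would write down what a single-query nonadaptive algorithm for $(Y, V^{\otimes t}, \pi^{\otimes t}, f)$ is: a tuple $(N', |\psi'\rangle, \{E'_x\}_{x\in X})$ with $|\psi'\rangle\in V^{\otimes t}\otimes\C^{N'}$, $\{E'_x\}$ a POVM on $V^{\otimes t}\otimes\C^{N'}$ indexed by $X$, and output state $(\pi^{\otimes t}(a)\otimes I)|\psi'\rangle$.

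The key observation is that $\pi^{\otimes t}$ denotes the representation of $Y$ on $V^{\otimes t}$ sending $a\mapsto \pi(a)^{\otimes t}$, so the two output-state formulas agree literally. Moreover, the underlying data in both cases consists of an auxiliary dimension, a unit vector in $V^{\otimes t}\otimes\C^N$, and a POVM on the same space indexed by $X$. Thus the identity map on such tuples sends $t$-query nonadaptive algorithms for $(Y,V,\pi,f)$ bijectively to single-query nonadaptive algorithms for $(Y,V^{\otimes t},\pi^{\otimes t},f)$, and under this bijection the output state $|\psi_a^\A\rangle$, and hence the measurement statistics $\mathrm{tr}(E_x|\psi_a^\A\rangle\langle\psi_a^\A|)$ and the average-case success probability, are preserved.

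There is no real obstacle: the only thing worth stating explicitly is the identification $\pi^{\otimes t}(a)=\pi(a)^{\otimes t}$, which is the definition of the tensor product representation. This is presumably why the authors call the lemma immediate from the definitions; writing the proof amounts to aligning the two definitions and invoking this identification.
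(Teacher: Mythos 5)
Your proposal is correct and matches the paper's treatment: the paper offers no written proof, stating only that the lemma "follows immediately from definitions," and your unpacking of the two definitions together with the identification $\pi^{\otimes t}(a)=\pi(a)^{\otimes t}$ is precisely that argument made explicit.
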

The quantum circuit notation for the nonadaptive preparation of the state $|\psi_a^\A \rangle$ is drawn as follows. \\

\begin{figure}[h!]
\centering
\includegraphics[scale=1.0]{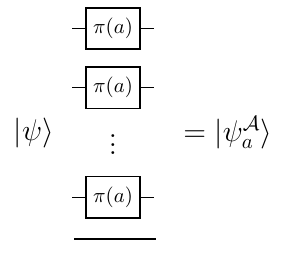}
\end{figure}

In either model, the algorithm $\A$ uses $t$ copies of the unitary $\pi(a)$ to produce a state $|\psi_a^A \rangle$. Using the POVM $\{E_x\}$ results in a measurement value $x \in X$ with probability 

$$P(x\ |\ a) = \langle \psi_a^\A | E_x | \psi_a^\A \rangle.$$

Since we assume the oracle is sampled uniformly from $Y$, the probability that $\A$ executes successfully is
$$
P_{\rm succ}(\A) = \frac{1}{|Y|} \sum_{a \in Y} P(f(a)\ |\ a) = \frac{1}{|Y|}\sum_{a \in Y} \langle \psi_a^\A | E_{f(a)} | \psi_a^\A \rangle.
$$ 

%  The final stage of a quantum algorithm can always be viewed as a state discrimination problem [cite]. For each $x \in X$ in the image of $f$ we define a mixed state
%$$
%\rho_x^\A := \frac{1}{|f^{-1}(X)|}\sum_{\substack{O \in \Oo \\ f(O) = x}}|\psi_O^\A \rangle \langle \psi_O^\A |.
%$$
%The final step of the algorithm is to pick a POVM $\{E_x\}_{x \in X}$ which optimally distinguishes the mixed states $\rho_x^\A$ which appear with probability $p_x = \frac{|f^{-1}(x)|}{|\Oo|}$. In our applications $f$ is onto and the preimages $f^{-1}(x)$ will all have the same cardinality so these mixed states are uniformly distributed with $p_x = \frac{1}{|X|}$. In this case we can rewrite the success probability of the algorithm as
%$$
%P_{\rm succ}(\A) = \frac{1}{|X|}\sum_{x \in X}\Tr(\rho^\A_x E_x).
%$$

\subsection{Symmetric oracle problems}
Suppose we have a symmetric oracle problem $(G, V, \pi, f)$. As mentioned in the introduction, since we are focusing on query complexity and not on issues of implementation, analysis of this problem depends only on the character $\chi_V$ of $\pi: G \to U(V)$, as we prove in the lemma below. In fact, a little more is true. Let $\Irr(G)$ denote the set of irreducible characters of $G$. Given a representation $\pi: G \to U(V)$ define the set
\begin{align*}
I(V) &:= \{\chi \in \Irr(G) \text{ appearing in the representation } V\} \\
&= \{\chi \in \Irr(G)\ |\ (\chi, \chi_V) > 0\}.
\end{align*}
Here we are using $(\cdot,\cdot)$ to denote the usual inner product of characters. If $\chi \in \Irr(G)$ and $(\chi, \chi_V) > 0$ we say that $\chi$ appears in the representation $V$.

\begin{lem}\label{RepDependence} The optimal success probability of a $t$-query algorithm to solve a symmetric oracle problem $(G, V, \pi, f)$ depends only on $I(V)$ and $f$.
\end{lem}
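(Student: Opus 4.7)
The plan is to show that any two representations $V$ and $V'$ of $G$ with $I(V) = I(V')$ yield the same optimal $t$-query success probability. The strategy is to convert any $t$-query adaptive algorithm on $V$ into one on $V'$ with identical success probability; by the symmetry of the hypothesis this forces the two optima to coincide. The conversion will be executed by lifting the algorithm along a $G$-equivariant isometric embedding of the computational Hilbert space of one algorithm into that of the other.

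The first step is to construct a $G$-equivariant isometry $\iota \colon V \otimes \C^N \hookrightarrow V' \otimes \C^{N'}$ for a sufficiently large workspace dimension $N'$. Decomposing both sides into isotypic components, with $G$ acting trivially on the workspace tensor factors, gives $V \otimes \C^N \cong \bigoplus_{\chi \in I(V)} V_\chi \otimes M_\chi$ and similarly $V' \otimes \C^{N'} \cong \bigoplus_{\chi \in I(V)} V_\chi \otimes M'_\chi$. Since $I(V) = I(V')$, every $\chi \in I(V)$ appears in $V'$ with positive multiplicity, so taking $N'$ large enough forces $\dim M'_\chi \geq \dim M_\chi$ for every $\chi$, and then $\iota$ can be assembled block-by-block from arbitrary isometries $M_\chi \hookrightarrow M'_\chi$ tensored with $\id_{V_\chi}$.

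The second step is to lift an adaptive algorithm $\A = (N, |\psi\rangle, \{U_i\}_{i=1}^t, \{E_x\})$ on $V$ to an algorithm $\A'$ on $V'$, by taking the input state to be $\iota|\psi\rangle$, the unitaries $U'_i$ to act as $\iota U_i \iota^*$ on the image of $\iota$ and as the identity on its orthogonal complement, and the POVM elements to be $E'_x = \iota E_x \iota^*$ with the surplus $I - \iota\iota^*$ absorbed into one fixed outcome so that $\sum_x E'_x = I$. Because $\iota$ intertwines the two oracle actions, $(\pi'(a) \otimes I)\iota = \iota(\pi(a) \otimes I)$, a direct induction on the number of query rounds will give $|\psi_a^{\A'}\rangle = \iota|\psi_a^\A\rangle$, and $\iota^* \iota = \id$ together with the fact that this output state lies in the image of $\iota$ then gives $\langle \psi_a^{\A'}|E'_x|\psi_a^{\A'}\rangle = \langle \psi_a^\A|E_x|\psi_a^\A\rangle$ for every $x$, so the success probabilities agree exactly.

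The main obstacle is the bookkeeping required to maintain the invariant that at every intermediate step the computation state remains in $\iota(V \otimes \C^N)$; this is what ensures that the extensions of the $U_i$ to the orthogonal complement and the POVM remainder contribute nothing. The invariant is forced by the $G$-equivariance of $\iota$ and the definition of the $U'_i$, but must be threaded carefully through the induction. A nice feature of this approach is that it uses neither Theorem \ref{parallelqueries} nor any structural hypothesis on $f$, so it applies to every symmetric oracle problem, well beyond the coset identification framework.
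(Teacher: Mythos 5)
Your proposal is correct and is essentially the paper's argument: the paper also reduces to (i) conjugating an algorithm through a $G$-equivariant Hilbert-space identification and (ii) enlarging the workspace so that every irrep of $I(V)$ acquires sufficient multiplicity, which is exactly what your single equivariant isometry $\iota$ packages together. Your write-up just makes explicit the embedding and the induction that the paper leaves implicit in its remark that workspaces ``increase the multiplicity of each character,'' so no further changes are needed.
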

\begin{proof}
First, note that if $U: V \to W$ is a Hilbert space isomorphism then we can define a new oracle problem $(G, W, U\pi U^{-1}, f)$ where the oracles now act on $W$. Any $t$-query algorithm to solve the original problem can be ``conjugated" by $U$ (e.g. the input state $|\psi \rangle$ becomes $U|\psi \rangle$ and the non-oracle unitaries and POVM are conjugated by $U$) to produce a $t$-query algorithm for the new problem which succeeds with the same probability. Conversely any algorithm to solve the new problem can be conjugated by $U^{-1}$ to solve the old problem with the same probability. Therefore oracle problems with isomorphic unitary representations of $G$ will have the same $t$-query optimal success probability. In other words, only the character $\chi_V$ is relevant.\\

Second, we claim that the multiplicities of irreducible characters in $V$ are not important; only whether they appear in $V$ or not. Indeed, adding a $d$-dimensional workspace to a computer's original system $V$ produces a new representation $V \otimes \C^d$ of $G$ with character $d\chi_V$. Since we allow our algorithm to introduce any such workspace, we are in effect allowing it to increase the multiplicity of each character by a factor of $d$. Note that this process will never produce irreps which did not appear in $V$ to begin with. Hence the optimal success probability depends only on which irreps appear in $V$, i.e. the set $I(V)$.
\end{proof}

It makes sense that if an algorithm is granted access to more representations to work with, its success probability cannot decrease. To be more precise, fix $t$, and let $P_{\rm opt}(G, V, \pi, f)$ denote the optimal success probability of a $t$-query algorithm for the symmetric  oracle problem $(G, V, \pi, f)$.
\begin{lem}\label{lem:monotone}
Suppose $\pi_V$, $\pi_W$ are representations of $G$ on the spaces $V$ and $W$, with $I(W) \subset I(V)$. Then
$$
P_{\rm opt}(G, W, \pi_W, f) \leq P_{\rm opt}(G, V, \pi_V, f).
$$
\end{lem}
\begin{proof}
The basic idea is any $t$-query algorithm to solve $(G, W, \pi_W, f)$ can be extended to produce a $t$-query algorithm for $(G,V, \pi_V, f)$. Suppose an algorithm $\A$ for $W$ uses an $N$ dimensional ancilla space, i.e. operates on $W \otimes \C^N$. Since $I(W) \subset I(V)$, there exists some $M$ so that $V\otimes \C^M$ contains a subrepresentation isomorphic to $W \otimes \C^N$. Hence we can write $V \otimes \C^M = W' \oplus Y$ where $W' \cong W \otimes \C^N$ as $\C G$-modules. Now we claim the initial state, intermediate unitaries, and POVM for the algorithm $\A$ can be extended to an algorithm $\A'$ acting on $V \otimes \C^M$. The initial state for $\A'$ is the vector in $W' \subset V \otimes \C^N$ corresponding to the initial state for $\A$ in $W \otimes \C^N$. The intermediate unitaries for $\A'$ act on $W'$ according to the unitaries for the algorithm $\A$ and are extended arbitrarily to $Y$. The measurement operators for $\A'$ all agree with the measurement operators for $\A$ on $W'$ and all but one of the operators act as $0$ on the subspace $Y$. To satisfy the completeness relation on $V \otimes \C^M$, exactly one of the POVMs should act as the identity on $Y$ (this modification is unimportant since $\A'$ ``takes place" entirely within $W'$). The success probability of $\A'$ is equal to that of $\A$.
\end{proof}

%\begin{ex} The oracle problem $(G, V, \id: G \to G)$ is called \OI\ \cite{kothari:oracle}. The optimal success probability for a single-query algorithm was computed in \cite{BCMP:sod} to be
%$$
%P_{\rm opt} = \frac{d_V}{|G|}
%$$
%where
%$$
%d_V := \sum_{\chi \in I(V)}\chi(1)^2.
%$$
%The quantity $d_V$ is the sum of the squared dimensions of the characters appearing in $V$. It is the dimension of the largest cyclic $\C G$-module appearing in $V$ (possibly after tensoring $V$ with a workspace to get sufficiently high multiplicities). Since $\sum_{\chi \in \Irr(G)}\chi(1)^2 = |G|$ there is a probability one algorithm to identify a symmetric oracle in one query if and only if every irrep of $G$ appears in $V$.
%\end{ex}

\section{Parallel queries suffice}
\label{sec:3}
Here we prove Theorem \ref{parallelqueries}, namely that the optimal success probability for coset identification\ can be attained by a parallel (nonadaptive) algorithm. We prove this by showing that any $t$-query adaptive algorithm can be converted to a $t$-query nonadaptive algorithm without affecting the success probability. Another way to say this is that every $t$-query adaptive algorithm can be simulated by a $t$-query nonadaptive one. This technique is greatly inspired by the work of Zhandry \cite{zhandry:2015} who proves this result when $G$ is abelian, and also bears resemblance to the lower bound technique of Childs, van Dam, Hung and Shparlinski \cite{CvDHS:2016}, where the special case of polynomial interpolation is addressed. \\

Let $\pi: G \to U(V)$ be a unitary representation of $G$. Let $\C G$ denote the group algebra of $G$. Each $h \in G$ acts on $\C G$ by left multiplication, an operator we denote $L_h$. We will use the {\it controlled multiplication} operator (\cite{dBCW:2002}) defined on $V \otimes \C G$ by
$$
CM|v, g \rangle = |\pi(g^{-1})v, g \rangle.
$$
This defines a unitary operator and is a generalization of the standard CNOT gate (take $G = \Z_2$ and $V = \C\Z_2$). As such we draw it using circuit diagrams as in \cref{CMgate}.
\begin{figure}[h!]
\label{CMgate}
\centering
\includegraphics[scale=.5]{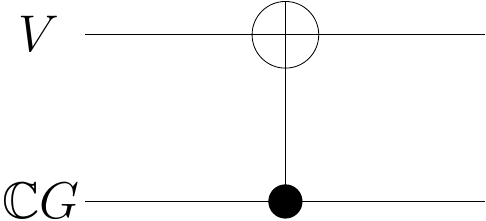}
\caption{Notation for the controlled multiplication gate $CM$.}
\end{figure}\\
There are two $G$-actions on $V \otimes \C G$ we use, one given by $\pi(h) \otimes L_h$ and the other $I \otimes L_h$. Our first observation is that $CM$ intertwines these actions.
\begin{lem}\label{CMlemma}
The controlled multiplication operator satisfies $$CM(\pi(h) \otimes L_h) = (I \otimes L_h)CM.
$$
\end{lem}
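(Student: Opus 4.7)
The plan is to verify the identity by direct computation on a spanning set for $V \otimes \C G$, namely vectors of the form $|v, g\rangle$ with $v \in V$ and $g \in G$. Both sides of the claimed equation are linear, so checking the identity on this spanning set suffices.

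First I would apply the left-hand side $CM(\pi(h) \otimes L_h)$ to $|v, g\rangle$: the operator $\pi(h) \otimes L_h$ sends this to $|\pi(h)v, hg\rangle$, and then $CM$ acts by inserting $\pi((hg)^{-1})$ on the first tensor factor, producing $|\pi(g^{-1}h^{-1})\pi(h)v, hg\rangle$. Next I would apply the right-hand side $(I \otimes L_h)CM$ to the same vector: $CM$ gives $|\pi(g^{-1})v, g\rangle$, and then $I \otimes L_h$ shifts the second factor to yield $|\pi(g^{-1})v, hg\rangle$.

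The crux is then the observation that $\pi$ is a group homomorphism, so $\pi(g^{-1}h^{-1})\pi(h) = \pi(g^{-1}h^{-1}h) = \pi(g^{-1})$, which matches the first tensor factor on the right. Thus both sides agree on every basis vector $|v, g\rangle$ and the lemma follows by linearity.

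There is essentially no obstacle here: the verification is a one-line calculation, and the only subtle point worth noting explicitly is the compatibility of the multiplicative structure of $G$ with the definition of $CM$ (the appearance of $g^{-1}$ rather than $g$ in the definition is precisely what makes the intertwining work). I would present the argument in two short displayed equations, one for each side, and conclude by invoking the homomorphism property of $\pi$.
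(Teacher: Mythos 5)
Your computation is correct and is exactly the argument the paper gives (the paper simply remarks that the proof follows by applying both sides to $|v,g\rangle$ and using the definition of $CM$). Both sides reduce to $|\pi(g^{-1})v, hg\rangle$ via the homomorphism property $\pi(g^{-1}h^{-1})\pi(h) = \pi(g^{-1})$, just as you write.
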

  
The proof follows by applying both sides to a vector $|v, g \rangle$ and using the definition of $CM$. The representation obtained by letting each $h \in G$ act by the identity on $V$ is a direct sum of $\dim V$ many copies of the trivial reprseentation, so we denote it $\mathbbm{1}^{\oplus \dim V}$. The lemma allows us to interpret $CM$ as a $\C G$-module isomorphism $V \otimes \C G \to \mathbbm{1}^{\oplus \dim V} \otimes \C G$. In pictures the lemma reads:
\begin{figure}[h!]
\centering
\includegraphics[scale=.5]{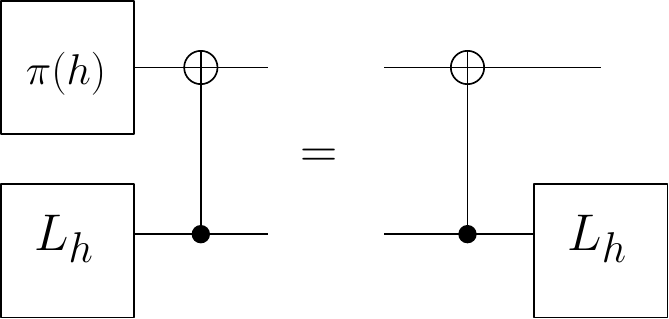}
\caption{Lemma \ref{CMlemma} in pictures.}
\end{figure}\\

The next property is crucial for our parallelization argument. Recall that if $W$ is a $\C G$-module then $I(W)$ denotes the set of irreducible characters of $G$ which appear in $W$.

\begin{lem}\label{CMlemma2}
Suppose $W$ is a subrepresentation of $\C G$. Then there is a subrepresentation $Y$ of $\C G$ such that the image of $V \otimes W$ under $CM$ is contained in $V \otimes Y$ and $Y$ satisfies $I(Y) = I(V \otimes W)$.
\end{lem}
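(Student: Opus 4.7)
The plan is to construct $Y$ explicitly by slicing the image $CM(V \otimes W)$ along a basis of $V_0$ and then verify the two inclusions in $I(Y) = I(V \otimes W)$ separately.

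First I would fix an orthonormal basis $\{v_1, \dots, v_d\}$ of $V_0$, where $d = \dim V$. Under the action $I \otimes L_h$, the space $V_0 \otimes \C G$ decomposes as an orthogonal direct sum of $G$-submodules $v_i \otimes \C G$, each isomorphic to $\C G$. Let $p_i : V_0 \otimes \C G \to v_i \otimes \C G$ denote the orthogonal projection; because $I \otimes L_h$ acts only on the second tensor factor, each $p_i$ is $G$-equivariant. Define $Y_i$ to be the image $p_i\bigl(CM(V \otimes W)\bigr)$, regarded via the identification $v_i \otimes \C G \cong \C G$ as a subrepresentation of $\C G$, and set $Y := \sum_i Y_i$. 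This $Y$ is a subrepresentation of $\C G$, and by construction $CM(V \otimes W) \subseteq V_0 \otimes Y$, so the containment requirement is immediate.

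For the character-theoretic identity, the two inclusions go as follows. For $I(V \otimes W) \subseteq I(Y)$: by Lemma \ref{CMlemma}, $CM$ intertwines $\pi(h) \otimes L_h$ on the source with $I \otimes L_h$ on the target, so $V \otimes W$ embeds $G$-equivariantly into $V_0 \otimes Y$; since $V_0$ carries the trivial action, $V_0 \otimes Y \cong Y^{\oplus d}$ as $G$-representations, and every irreducible appearing in $V \otimes W$ must therefore appear in $Y$. For $I(Y) \subseteq I(V \otimes W)$: each $Y_i$ is the image of $V \otimes W$ under the $G$-equivariant composite $p_i \circ CM$, hence a quotient of $V \otimes W$, and by semisimplicity of $\C G$-modules every irreducible appearing in $Y_i$ also appears in $V \otimes W$; summing over $i$ gives $I(Y) = \bigcup_i I(Y_i) \subseteq I(V \otimes W)$.

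The one bookkeeping point to watch is keeping the two $G$-actions straight: the diagonal action $\pi(h) \otimes L_h$ on the source $V \otimes W$ and the factored action $I \otimes L_h$ on the target $V_0 \otimes \C G$. The intertwining provided by Lemma \ref{CMlemma} is exactly what promotes the slice projections $p_i$, which are only equivariant for the factored action, into $G$-equivariant maps out of $V \otimes W$; once this is set up cleanly, the remainder is standard semisimple representation theory and there is no serious obstacle.
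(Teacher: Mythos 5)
Your proof is correct, but it constructs $Y$ differently from the paper. The paper first notes (as you do) that Lemma \ref{CMlemma} makes $CM$ a $\C G$-module isomorphism from $V \otimes \C G$ with the diagonal action onto $\mathbbm{1}^{\oplus \dim V} \otimes \C G$ with the factored action, so the image $Z = CM(V \otimes W)$ is a submodule isomorphic to $V \otimes W$; but it then takes $Y$ to be the \emph{maximal} submodule of $\C G$ whose irreducible constituents lie in $I(Z)$ (each appearing with full multiplicity $\chi(e)$). With that choice, $I(Y) = I(Z) = I(V \otimes W)$ holds by construction, and the work goes into the containment: $V_0 \otimes Y$ is the maximal subrepresentation of $V_0 \otimes \C G$ built from irreducibles in $I(V \otimes W)$, so it must contain $Z$. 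Your construction is dual: you take $Y$ to be the span of the coordinate slices $p_i(Z)$, which makes the containment $Z \subseteq V_0 \otimes Y$ immediate, and the work shifts to verifying $I(Y) = I(V \otimes W)$ via the equivariance of the projections $p_i$ and semisimplicity. Both arguments hinge on the same key input (Lemma \ref{CMlemma}); your $Y$ is generally a smaller submodule than the paper's, which is harmless since only $I(Y)$ matters downstream, and your version has the minor virtue of being explicit rather than defined by a maximality property.
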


\begin{proof} By Lemma \ref{CMlemma} $CM$ is a $\C G$-module isomorphism $V \otimes \C G \to \mathbbm{1}^{\oplus \dim V} \otimes \C G$ where $V$ and $\mathbbm{1}^{\oplus \dim V}$ have the same underlying vector space. Let $Z$ denote the image of $V \otimes W$ under $CM$. Then $CM$ restricts to a $\C G$-module isomorphism $V \otimes W \to Z$. Next let $Y$ be the submodule of $\C G$ which contains each irreducible of $I(Z)$ with maximal multiplicity (so if $\chi$ appears in $Y$ then $\chi$ appears with multiplicity $\chi(e)$). Now $Z \cong V \otimes W$ as $\C G$-modules so in particular $I(Z) = I(V \otimes W)$. Hence also $I(Y) = I(V \otimes W)$. 

It remains to prove $Z \subseteq V \otimes Y$. Indeed, in the $\C G$-module $\mathbbm{1}^{\oplus \dim V} \otimes \C G$ the subspace $\mathbbm{1}^{\oplus \dim V} \otimes Y$ is the maximal subrepresentation containing only irreducibles in $I(V \otimes W)$. As noted $Z$ contains only irreducibles in $I(V \otimes W)$ so therefore $Z \subseteq \mathbbm{1}^{\oplus \dim V} \otimes Y$, which is the same vector space as $V \otimes Y$.
\end{proof}

\sloppy Now suppose $(G, V, \pi, f)$ is an instance of coset identification\ and $\A = (N, |\psi \rangle, \{U_1, \dots, U_t\}, \{E_x\} )$ is a $t$-query adaptive algorithm to evaluate the homomorphism $f$. First, by replacing $\pi$ with $\pi \otimes I$ if necessary, we may assume that the algorithm does not use a workspace, that is $N = 1$. We will describe a new adaptive algorithm $\A'$ which is a modification of $\A$ as follows. We introduce a new workspace which is a copy of $\C G$. The new intermediate unitaries are $(U_1 \otimes I) CM,(U_2 \otimes I)CM, \dots, (U_t \otimes I) CM$. The input state is $|\psi \rangle \otimes |\eta \rangle$ where $\eta$ is the equal superposition state in $\C G$. When the oracle is hiding the unitary $\pi(a)$ this produces the following state: \\
\begin{figure}[h!]
\centering
\includegraphics[scale=1.1]{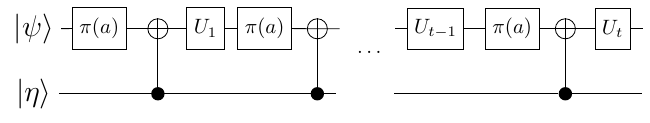}
\caption{Pre-measurement state for $\A'$.}\label{ModifiedAlg1}
\end{figure} \\

Next measurement is performed: first the second register is measured in the standard basis of $\C G$. Then the original POVM is applied to the first register. The result of these two measurements will be a pair $(g, x)$; the final output of the algorithm is $gx$. \footnote{Formally the algorithm $\A'$ is given by
$$
\A' = (|G|, |\psi, \eta \rangle, \{U_1 \otimes I \circ CM, \dots, U_t \otimes I \circ CM\}, \{E'_x = \sum_{g \in G}E_{g^{-1}x} \otimes |g \rangle \langle g| \}).
$$}
\begin{lem}\label{lemmaA}
The algorithm $\A'$ succeeds with the same success probability as $\A$.
\end{lem}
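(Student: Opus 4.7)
The plan is to directly compute the pre-measurement state of $\A'$ by expanding the workspace $|\eta\rangle = \frac{1}{\sqrt{|G|}}\sum_{g \in G} |g\rangle$ and tracking how each basis vector $|g\rangle$ evolves. Since the $CM$ gate acts by $CM|v,g\rangle = |\pi(g^{-1})v, g\rangle$, it leaves the second register untouched in the standard basis. Thus each branch of the superposition labeled by $g$ evolves independently, and I would verify by induction on the number of queries already applied that after the $i$-th iteration of oracle followed by $(U_i \otimes I)CM$, the $g$-branch contains $|U_i \pi(g^{-1}a) U_{i-1}\pi(g^{-1}a)\cdots U_1 \pi(g^{-1}a)\psi, g\rangle$ in the first register. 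The inductive step uses only the identity $CM(\pi(a)\otimes I)|v,g\rangle = |\pi(g^{-1}a)v,g\rangle$, which follows immediately from the definition of $CM$.

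From this I would conclude
\[
|\psi_a^{\A'}\rangle = \frac{1}{\sqrt{|G|}}\sum_{g \in G} |\psi_{g^{-1}a}^\A\rangle \otimes |g\rangle,
\]
where $|\psi_b^\A\rangle$ denotes the pre-measurement state of the original algorithm $\A$ run with oracle $\pi(b)$. The effective oracle seen in the first register of the $g$-branch is $\pi(g^{-1}a)$, not $\pi(a)$; this ``shift'' by $g^{-1}$ is the content of Zhandry's idea in the abelian case and is the crux of the argument.

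Next I would analyze the measurement. Measuring the second register in the standard basis produces outcome $g$ with probability $1/|G|$, and conditional on $g$ the first register collapses to $|\psi_{g^{-1}a}^\A\rangle$. Applying the POVM $\{E_x\}$ yields outcome $x$ with probability $\langle \psi_{g^{-1}a}^\A|E_x|\psi_{g^{-1}a}^\A\rangle$, and the algorithm outputs $g\cdot x$. Since $f$ is a map of $G$-sets, the success condition $g\cdot x = f(a)$ is equivalent to $x = g^{-1}\cdot f(a) = f(g^{-1}a)$. Summing over $g$, the success probability of $\A'$ conditioned on oracle $a$ is
\[
\frac{1}{|G|}\sum_{g \in G} \langle \psi_{g^{-1}a}^\A | E_{f(g^{-1}a)} | \psi_{g^{-1}a}^\A\rangle,
\]
and the substitution $b = g^{-1}a$ turns this into $P_{\rm succ}(\A)$ independently of $a$. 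Averaging over $a$ then gives $P_{\rm succ}(\A') = P_{\rm succ}(\A)$.

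The only real obstacle is the bookkeeping in the inductive step: one must be careful about the placement of the $CM$ gates relative to the $U_i$'s and the oracle, and about the fact that the workspace remains in a computational basis state $|g\rangle$ throughout each branch so that $CM$ acts as $\pi(g^{-1})$ on the first register at each application. Once the inductive formula for $|\psi_a^{\A'}\rangle$ is in hand, the remainder of the argument is essentially a change of variables using the $G$-equivariance of $f$.
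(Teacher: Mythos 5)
Your proof is correct and follows essentially the same route as the paper: both hinge on establishing $|\psi_a^{\A'}\rangle = \frac{1}{\sqrt{|G|}}\sum_{g}|\psi_{g^{-1}a}^{\A}\rangle\otimes|g\rangle$ and then using the $G$-equivariance of $f$ together with the change of variables $b=g^{-1}a$. The paper simply asserts the pre-measurement formula where you supply the branch-by-branch induction, which is a fine (and slightly more explicit) way to justify it.
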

\begin{lem}\label{lemmaB}
The algorithm $\A'$ can be simulated by a $t$-query parallel query algorithm.
\end{lem}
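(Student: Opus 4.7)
The plan is to strip away all oracle dependence from $|\psi_a^{\A'}\rangle$ until it has the form $(I \otimes L_a)|\Psi\rangle$ for an $a$-independent $|\Psi\rangle$ living in $V \otimes Y$, where $Y \leq \C G$ is a subrepresentation whose character support equals $I(V^{\otimes t})$. Once in this form, the character equality lets me $\C G$-linearly embed $Y$ into $V^{\otimes t}$ (with an auxiliary workspace), and thereby simulate the single $L_a$ by the $t$-fold parallel query $\pi(a)^{\otimes t}$.

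First I would show by induction on $t$ that
\[
|\psi_a^{\A'}\rangle = (I \otimes L_a)\, W_t\, |\psi, \eta\rangle, \qquad W_t := (U_t \otimes I)\, CM\, (U_{t-1} \otimes I)\, CM \cdots (U_1 \otimes I)\, CM,
\]
an $a$-independent unitary. For the base case, rewrite $(\pi(a) \otimes I)|\psi, \eta\rangle = (\pi(a) \otimes L_a)|\psi, \eta\rangle$ using $L_a|\eta\rangle = |\eta\rangle$. For the inductive step, apply Lemma \ref{CMlemma} in the form $CM(\pi(a) \otimes L_a) = (I \otimes L_a)CM$ to absorb each fresh $\pi(a)$ (which appears combined with the existing $L_a$ on the workspace) into a single updated $L_a$, and use that $(U_i \otimes I)$ commutes with $(I \otimes L_a)$ to slide the $L_a$ out to the far left.

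Next I would show by induction that $W_i|\psi,\eta\rangle \in V \otimes Y_i$ for subrepresentations $Y_i \leq \C G$ satisfying $I(Y_i) = I(V^{\otimes i})$. The base case $Y_0 = \mathbbm{1}$ is immediate from $|\psi,\eta\rangle \in V \otimes \mathbbm{1}$. For the inductive step, Lemma \ref{CMlemma2} applied with $W = Y_{i-1}$ produces a subrepresentation $Y_i \leq \C G$ with $CM(V \otimes Y_{i-1}) \subseteq \mathbbm{1}^{\oplus \dim V} \otimes Y_i$ and $I(Y_i) = I(V \otimes Y_{i-1})$; since $I$ of a tensor product depends only on the irreducible supports of the factors, this equals $I(V \otimes V^{\otimes i-1}) = I(V^{\otimes i})$. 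Applying $(U_i \otimes I)$ to an element of $\mathbbm{1}^{\oplus \dim V} \otimes Y_i$ keeps us in $V \otimes Y_i$ as a vector space, closing the induction. Setting $|\Psi\rangle := W_t|\psi,\eta\rangle$ gives $|\Psi\rangle \in V \otimes Y_t$ with $I(Y_t) = I(V^{\otimes t})$.

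Finally, since $I(Y_t) = I(V^{\otimes t})$, for $N$ large enough the $\C G$-module $V^{\otimes t} \otimes \C^N$ (with trivial action on $\C^N$) contains a submodule isomorphic to $Y_t$; fix a $\C G$-linear isometric embedding $\iota : Y_t \hookrightarrow V^{\otimes t} \otimes \C^N$. I would then define a $t$-query nonadaptive algorithm on $(G,V,\pi,f)$ with workspace $V \otimes \C^N$, input state $(I \otimes \iota)|\Psi\rangle$ (with the $V^{\otimes t}$ factor playing the role of the oracle register after reshuffling tensor factors), and POVM obtained by pulling back $\{E'_x\}$ through $I \otimes \iota$ on the image of the embedding and extending arbitrarily on its orthogonal complement. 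Because $\iota$ intertwines $L_a|_{Y_t}$ with $\pi(a)^{\otimes t} \otimes I_{\C^N}$, the state after the $t$ parallel queries equals $(I \otimes \iota)|\psi_a^{\A'}\rangle$, and by construction of the POVM the success probability is identical to that of $\A'$. The main obstacle is the second step: tracking the $\C G$-module structure through the arbitrary unitaries $U_i$, which are not $\C G$-module maps in general. This is handled by the observation that immediately after each $CM$ the first register carries only the trivial $G$-action, so $U_i$ mixes vectors within this ``rep-less'' register and does not disturb the submodule $Y_i$ sitting in the second register.
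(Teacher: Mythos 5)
Your proposal is correct and follows essentially the same route as the paper: the paper likewise uses Lemma \ref{CMlemma} to push all oracle dependence into a single left-multiplication $(I \otimes L_a)$ acting after the $a$-independent unitary $(U_t \otimes I)CM \cdots (U_1 \otimes I)CM$, then uses Lemma \ref{CMlemma2} inductively to confine the state to $V \otimes Y$ with $I(Y) = I(V^{\otimes t})$. The only cosmetic difference is that your final step constructs the intertwining embedding $\iota$ explicitly, whereas the paper delegates exactly that construction to Lemma \ref{RepDependence} together with Lemma \ref{tquery}.
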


\begin{proof}[ Proof of Theorem \ref{parallelqueries} from Lemmas \ref{lemmaA} and \ref{lemmaB}]
By the two lemmas, given any $t$-query adaptive algorithm $\A$ which solves coset identification\ with probability $P$, there exists a $t$-query parallel query algorithm which succeeds with the same probability. \end{proof}

\begin{proof}[Proof of Lemma \ref{lemmaA}]
Consider the pre-measurement state for $\A'$ given that the hidden group element is $a \in G$. It can be written
$$
|\psi_a^{\A'} \rangle = \frac{1}{\sqrt{|G|}}\sum_{g \in G}| \psi_{g^{-1}a}^{\A} \rangle \otimes |g \rangle.
$$
If the first measurement reads $g$ then the state collapses to $| \psi^{\mathcal{A}}_{g^{-1}a} \rangle \otimes |g \rangle$. If the second measurement is now performed, the result will read $f(g^{-1}a)$ with the same probability that the algorithm $\A$ would read this result given that the oracle was hiding $g^{-1}a$. The algorithm then classically converts the result to $gf(g^{-1}a)$ which is equal to $f(a)$ since $f$ is a left $G$-set map. So the following conditional probabilities are equal:
\begin{align*}
P(\A' \text{ outputs } f(a)\ &|\ a \text{ is hidden }, \text{ first measurement result is } g) \\
 &= P(\A \text{ outputs } f(g^{-1}(a))\ |\ g^{-1}a \text{ is hidden }).
\end{align*}
Denote these probabilities by $P_{\A'}(f(a)\ |\ a, g)$ and $P_{\A}(f(g^{-1}a)\ |\ g^{-1}a)$ respectively. Since the probability that the first measurement of $\A'$ reads $g$ is $1/|G|$ for all $G$  and $g$ is sampled independently of $a$, we compute the average case success probability by
\begin{align*}
P_{\textrm{succ}}(\A') &= \frac{1}{|G|^2} \sum_{g \in G} \sum_{a \in G} P_{\A'}(f(a)\ |\ a, g) \\
&= \frac{1}{|G|^2}\sum_{g \in G} \sum_{a \in G} P_{\A}(f(g^{-1}a)\ |\ g^{-1}a)\\
&= \frac{1}{|G|}\sum_{g \in G} P_{\textrm{succ}}(\A) = P_{\textrm{succ}}(\A).
\end{align*} \end{proof}

\begin{proof}[Proof of Lemma \ref{lemmaB}] We rewrite the pre-measurement state of $\A'$ expressed by Figure \ref{ModifiedAlg1} using Lemma \ref{CMlemma}. Denote the state that results when the hidden element is $a \in G$ by $|\psi_a^{\A'} \rangle$. We apply Lemma \ref{CMlemma} diagrammatically from left to right: \\
\begin{center}
\includegraphics[scale=1.0]{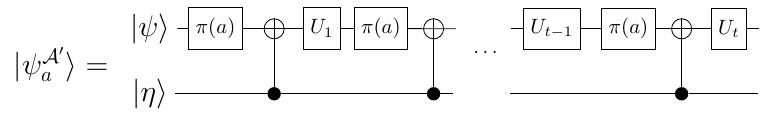}
\includegraphics[scale=1.0]{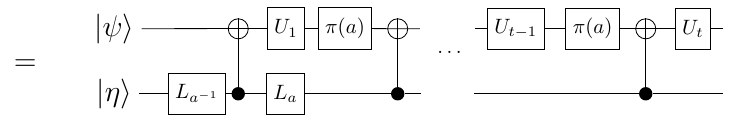}
\end{center}
\begin{center}
\includegraphics[scale=1.0]{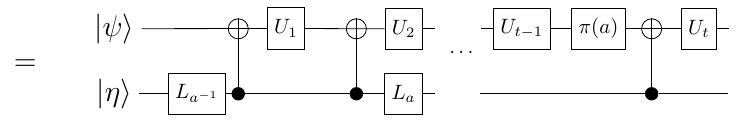}
\includegraphics[scale=1.0]{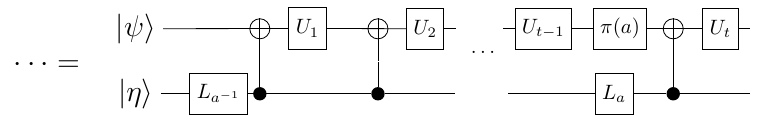}
\includegraphics[scale=1.0]{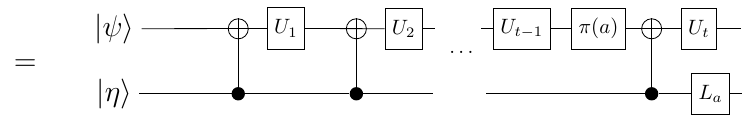}
\end{center}

In the last step, in addition to applying Lemma \ref{CMlemma} at the right of the diagram, we used the fact that $L_{a^{-1}}|\eta \rangle = | \eta \rangle$. In formulas we have
$$
|\psi_a^{\A'} \rangle = (I \otimes L_a) \circ \big((U_t \otimes I) \circ CM \circ \dots \circ (U_1 \otimes I) \circ CM \big)|\psi, \eta \rangle.
$$
Therefore we have converted this algorithm to a single-query algorithm using the oracle $I \otimes L_a$ with initial state $U|\psi, \eta \rangle$ where $U = (U_t \otimes I) \circ CM \circ \dots \circ (U_1 \otimes I) \circ CM$. \\

{\it Claim.} The image of $V \otimes \C|\eta \rangle$ under $U$ is contained in $V \otimes Y$ where $Y \subseteq \C G$ is a submodule satisfying $I(Y) = I(V^{\otimes t})$. \\

This is readily proved by induction and Lemma \ref{CMlemma2}. For instance, by Lemma \ref{CMlemma2} the image of $V \otimes \C |\eta \rangle$ under $CM$ is contained in $V \otimes Y_1$ where $Y_1$ is a submodule with $I(Y_1) = I(V)$. The next part of $U$ is $U_1 \otimes I$ which sends $V \otimes Y_1$ to itself. Now another $CM$ is applied and by Lemma \ref{CMlemma2} this sends $V \otimes Y_1$ to $V \otimes Y_2$ where $I(Y_2) = I(V \otimes Y_1) = I(V ^{\otimes 2})$. \\

Therefore the inital state $U|\psi, \eta \rangle$ belongs to the subspace $V \otimes Y$, which means that the algorithm $\A'$ may be simulated by a single query algorithm to the oracle $I \otimes L_a$ acting on the subspace $V \otimes Y$. Note that the irreducibles appearing in this subspace are $I(\mathbbm{1}^{\oplus \dim V} \otimes Y) = I(Y) = I(V^{\otimes t})$. Hence Lemma \ref{RepDependence} implies there exists a single-query algorithm using the representation $V^{\otimes t}$ which achieves the same success probability as $\A'$. As noted in Lemma \ref{tquery} this is the same as a $t$-query parallel algorithm using the representation $V$. This concludes the proof of Lemma \ref{lemmaB}. \end{proof} 

{

\begin{cor}\label{tQueryCor}
The optimal $t$-query success probability for an algorithm solving an instance of coset identification $(G, V, \pi, f)$ is equal to the optimal single-query success probability achievable solving the instance $(G, V^{\otimes t}, \pi^{\otimes t}, f)$. \qed
\end{cor}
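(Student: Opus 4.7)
The corollary is essentially a packaging of the two technical results just proved, so the plan is to chain them together rather than do anything new. First I would invoke Theorem \ref{parallelqueries} to replace "optimal $t$-query (adaptive) success probability" with "optimal $t$-query nonadaptive success probability" for the instance $(G,V,f)$, since the theorem asserts these two quantities coincide for any coset identification problem.

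Next I would apply Lemma \ref{tquery}, which identifies a $t$-query nonadaptive algorithm for $(Y,V,\pi,f)$ with a single-query nonadaptive algorithm for $(Y,V^{\otimes t},\pi^{\otimes t},f)$. Taking $Y=G$, this turns the optimal $t$-query nonadaptive success probability for $(G,V,\pi,f)$ into the optimal single-query nonadaptive success probability for $(G,V^{\otimes t},\pi^{\otimes t},f)$. Finally, I would note that for a single query there is nothing to adapt to — no earlier measurement results exist — so the single-query adaptive and nonadaptive models coincide trivially. Hence the optimal single-query success probability on $(G,V^{\otimes t},\pi^{\otimes t},f)$ equals the quantity named in the corollary.

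There is no real obstacle here; the corollary is a corollary in the genuine sense. The only mildly subtle point to spell out is the last one — that with a single oracle query the distinction between adaptive and nonadaptive collapses — but this is immediate from the definitions in Section \ref{sec:2}, where the only ingredients of a nonadaptive algorithm are an input state, the oracle applied once, and a POVM, which is precisely the form of any $1$-query adaptive algorithm $(N,\psi,\{U_1\},\{E_x\})$ after absorbing $U_1$ into the POVM. Putting these three steps in order gives the stated equality.
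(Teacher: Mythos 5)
Your proposal is correct and matches the paper's (implicit) argument exactly: the corollary is obtained by chaining Theorem \ref{parallelqueries} with Lemma \ref{tquery}, just as the paper does in the proof of Theorem \ref{mainSOD}. Your extra remark that the single-query adaptive and nonadaptive models coincide after absorbing $U_1$ into the POVM is a correct and worthwhile clarification, but it does not change the route.
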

}

\section{Application to symmetric oracle identification}
\label{sec:4}
Symmetric oracle discrimination is the following task: given oracle access to a symmetric oracle hiding a group element $a \in G$, determine $a$ exactly. This is the special case of coset identification in which $H = \{e\}$. Thus an instance of this problem is determined by a finite group $G$ and a (finite-dim) unitary representation $\pi: G \to U(V)$. The following theorem computes the success probability of a single-query algorithm and is proved by Bucicovschi, Copeland, Meyer and Pommersheim:

\begin{thm}\label{singlequerySOD} (\cite{BCMP:sod}, Theorem 1) Suppose $G$ is a finite group and $\pi: G \to U(V)$ a unitary representation of $G$. Then an optimal single-query algorithm to solve symmetric oracle discrimination succeeds with probability
$$
P_{\rm opt} = \frac{d_V}{|G|}
$$
where
$$
d_V = \sum_{\chi \in I(V)} \chi(e)^2.
$$
\end{thm}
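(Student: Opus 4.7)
My plan is to establish matching upper and lower bounds of $d_V/|G|$, reducing first to a group-covariant POVM and then solving the resulting optimization via SDP duality.

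First I would reduce to covariant POVMs. Given any POVM $\{E_a\}$ with input state $|\psi\rangle \in V \otimes \C^N$, define $E_e' := \frac{1}{|G|}\sum_g \pi(g)^\dagger E_g \pi(g)$ and the symmetrized POVM $\tilde E_a := \pi(a) E_e' \pi(a)^\dagger$. A short reindexing (using $\sum_g E_g = I$) shows $\sum_a \tilde E_a = I$, and both $\tilde E$ and $E$ have success probability $\langle \psi | E_e' | \psi\rangle$ on the input $|\psi\rangle$. So the problem reduces to maximizing $\langle\psi|E_e|\psi\rangle$ over unit vectors $|\psi\rangle$ and PSD operators $E_e$ satisfying $\sum_a \pi(a) E_e \pi(a)^\dagger = I$.

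For the lower bound I would give an explicit construction. By Lemma \ref{RepDependence} I may take the representation to be the subrepresentation $Y \subseteq \C G$ consisting of the isotypic components indexed by $\chi \in I(V)$, so $\dim Y = \sum_{\chi \in I(V)} \chi(e)^2 = d_V$. The standard formula $\Pi_\chi = \frac{\chi(e)}{|G|}\sum_g \overline{\chi(g)} L_g$ for the projection onto the $\chi$-isotypic component of $\C G$ yields $\langle e|\Pi_Y|e\rangle = d_V/|G|$, so $|\psi\rangle := \sqrt{|G|/d_V}\,\Pi_Y|e\rangle$ is a unit vector. Since $\pi = L$ commutes with $\Pi_Y$, $\pi(a)|\psi\rangle = \sqrt{|G|/d_V}\,\Pi_Y|a\rangle$ and hence $\sum_a \pi(a)|\psi\rangle\langle\psi|\pi(a)^\dagger = (|G|/d_V) \Pi_Y$. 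Setting $E_a := (d_V/|G|)\,\pi(a)|\psi\rangle\langle\psi|\pi(a)^\dagger$ gives a valid POVM on $Y$ (with $\sum_a E_a = \Pi_Y = I_Y$) achieving success probability exactly $d_V/|G|$.

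For the upper bound I would apply SDP duality. For fixed $|\psi\rangle$, the dual of the primal $\max_{E_e \geq 0,\ \sum_a \pi(a) E_e \pi(a)^\dagger = I}\langle\psi|E_e|\psi\rangle$ is, after replacing the dual variable by its $G$-average, the minimum of $\Tr(Y)/|G|$ over $G$-invariant PSD $Y$ with $Y \geq |\psi\rangle\langle\psi|$. Decomposing $V \otimes \C^N = \bigoplus_\chi V_\chi \otimes M_\chi$ into isotypics and writing $Y = \bigoplus_\chi I_{V_\chi} \otimes Y_\chi$ and $\tilde\rho_\chi := \Tr_{V_\chi}(|\psi_\chi\rangle\langle\psi_\chi|)$ on $M_\chi$, the constraint becomes $\sum_\chi \Tr(Y_\chi^{-1}\tilde\rho_\chi) \leq 1$, and a Lagrange multiplier calculation gives the optimal dual value $\frac{1}{|G|}\bigl(\sum_\chi \sqrt{\chi(e)}\,\Tr(\tilde\rho_\chi^{1/2})\bigr)^2$. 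Since $|\psi_\chi\rangle \in V_\chi \otimes M_\chi$ has Schmidt rank at most $\chi(e)$, two applications of Cauchy--Schwarz give $\sum_\chi \sqrt{\chi(e)}\,\Tr(\tilde\rho_\chi^{1/2}) \leq \sqrt{d_V}$ uniformly in $|\psi\rangle$, hence $P \leq d_V/|G|$ for every input state. The main obstacle is the dual setup and the Lagrange computation; once these are done, both Cauchy--Schwarz steps are saturated by the state constructed in the lower bound, confirming the bound is exactly tight.
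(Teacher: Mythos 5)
Your proposal is correct. Note that the paper does not actually prove this statement itself --- it imports it from \cite{BCMP:sod} --- but it does prove the generalization Theorem \ref{CIThm}, whose specialization to $H=\{e\}$ (where $Y$ is the trivial representation of the trivial group, $Y^{\uparrow}=\C G$, and $(Y^{\uparrow})_V$ has dimension $d_V$) is exactly this theorem, so that is the natural point of comparison. Your lower bound is in substance the same as the paper's (Proposition \ref{inducedProp1} applied to the $I(V)$-isotypic part of the regular representation produces the same balanced superposition state and the same orbit-of-projections measurement). The upper bounds genuinely diverge: after the common reduction to covariant (``orbital'') measurements, the paper passes to \emph{projective} orbital measurements via a Naimark-type dilation (Lemmas \ref{equivMsmt} and \ref{MsmtInduced}), identifies the dilated space as an induced representation, and bounds $\langle\psi|E\psi\rangle$ with the averaging operator $R_H$ (Proposition \ref{inducedProp2}); you instead invoke the Yuen--Kennedy--Lax dual SDP for minimum-error discrimination, reduce the $G$-invariant dual to a scalar optimization over the multiplicity spaces, and close with a Schmidt-rank plus double Cauchy--Schwarz estimate. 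Both work. The paper's route has the advantage of generalizing verbatim to arbitrary $H$ (which is the point of Section 5), whereas your route is self-contained for the $H=\{e\}$ case, produces an explicit dual certificate, and makes the tightness conditions transparent (both Cauchy--Schwarz inequalities are saturated precisely by the maximally entangled isotypic components of your lower-bound state). Two details you should flesh out in a full write-up: the Lagrange computation over the $Y_\chi$ needs care when some $\tilde\rho_\chi$ is singular (use pseudo-inverses or a limiting argument, and recall that $Y\geq|\psi\rangle\langle\psi|$ is equivalent to $\langle\psi|Y^{-1}|\psi\rangle\leq 1$ only when $\psi$ lies in the range of $Y$), and the appeal to Lemma \ref{RepDependence} in the lower bound should be made explicit, since you are swapping $V$ for a subrepresentation of $\C G$ with the same irreducible content.
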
 

The result of the previous section tells us that parallel algorithms are optimal for symmetric oracle discrimination. 
\begin{thm}\label{mainSOD}
Suppose $G$ is a finite group and $\pi: G \to U(V)$ a unitary representation of $G$. Then an optimal $t$-query algorithm to solve symmetric oracle discrimination succeeds with probability
$$
P_{\rm opt} = \frac{d_{V^{\otimes t}}}{|G|}
$$
where
$$
d_{V^{\otimes t}} = \sum_{\chi \in I(V^{\otimes t})} \chi(e)^2.
$$
\end{thm}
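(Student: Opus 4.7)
The proof plan is essentially a one-line reduction combining the two results already available in the excerpt. The strategy is to view a $t$-query algorithm through the lens of \cref{tQueryCor}, which reduces it to a single-query problem, and then apply the known single-query formula of \cref{singlequerySOD}.

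More concretely, I would argue as follows. Symmetric oracle discrimination is the instance of coset identification\ where $H = \{e\}$ and $f: G \to G$ is the identity map (which is trivially a map of $G$-sets). Therefore \cref{tQueryCor} applies: the optimal $t$-query success probability for the instance $(G, V, \pi, \id)$ equals the optimal single-query success probability for the instance $(G, V^{\otimes t}, \pi^{\otimes t}, \id)$. Now \cref{singlequerySOD} computes this single-query success probability directly, applied to the representation $W = V^{\otimes t}$: it equals $d_W/|G| = d_{V^{\otimes t}}/|G|$, with $d_{V^{\otimes t}} = \sum_{\chi \in I(V^{\otimes t})} \chi(e)^2$ by definition.

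There is essentially no obstacle, since the heavy lifting is done by \cref{parallelqueries} (which shows nonadaptive queries suffice) and \cref{singlequerySOD} (which handles the single-query case). The only thing worth verifying explicitly is that symmetric oracle discrimination really is a legitimate instance of coset identification with $H = \{e\}$, so that \cref{tQueryCor} legitimately applies; this is immediate because the trivial subgroup has all singleton cosets and the identity map $f(g) = g$ is onto and obviously a left $G$-set map. So the proof will literally be a two-sentence invocation of these results.
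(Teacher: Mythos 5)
Your proposal is correct and matches the paper's own proof essentially verbatim: the paper likewise invokes the parallelization theorem to reduce to a single query with the representation $V^{\otimes t}$ and then applies Theorem \ref{singlequerySOD}. Nothing further is needed.
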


\begin{proof} Theorem \ref{parallelqueries}  tells us that a $t$-query parallel algorithm achieves the optimal success probability. As noted this is equivalent to a single-query algorithm using the representation $\pi^{\otimes t}: G \to U(V^{\otimes t})$. Now apply Theorem \ref{singlequerySOD}. \end{proof} 

To express the exact and bounded error query complexity of symmetric oracle discrimination we're compelled to make the following definitions.\\

Let $V$ denote a $\C G$-module. The {\it quantum base size}, denoted $\gamma(V)$, is the minimum $t$ for which every irrep of $G$ appears in $V^{\otimes t}$. If no such $t$ exists then $\gamma(V) = \infty$. The {\it bounded error quantum base size}, denoted $\gamma^{\rm bdd}(V)$ is the minimum $t$ for which
$$
\frac{1}{|G|}\sum_{\chi \in I(V^{\otimes t})} \chi(e)^2 \geq 2/3.
$$ \\

If $(G, V, \pi)$ is a case of symmetric oracle discrimination then by Theorem \ref{mainSOD} the number of queries needed to produce a probability $1$ algorithm is $\gamma(V)$. That is, the exact quantum query complexity of the problem is equal to the quantum base size of $V$. Similarly the bounded error query complexity is $\gamma^{\rm bdd}(V)$. \\

It may happen that one of these quantities is infinite. However when $V$ is a faithful representation then a classical result attributed to Brauer and Burnside (\cite{isaacs:char}, Theorem 4.3) guarantees that every irrep of $G$ appears in one of the tensor powers $V^{\otimes 0}, V, V^{\otimes 2}, \dots, V^{\otimes m -1}$ where $m$ is the number of distinct values of the character of $V$. If $V$ contains a copy of the trivial representation, then we can say that every irrep of $G$ is contained in some tensor power $V^{\otimes t}$ for some $t$. Hence in this case (with $V$ faithful and containing a copy of the trivial irrep) both $\gamma(V)$ and $\gamma^{\rm bdd}(V)$ are finite. \\

In particular, this occurs whenever we ``quantize'' a classical symmetric oracle discrimination problem. This is the learning problem specified by a finite set $\Omega$ and a homomorphism $G \to \Sym(\Omega)$. A query to an oracle hiding $a \in G$ consists of inputting $\omega \in \Omega$ and receiving $a \cdot \omega$. The learner must determine the hidden group element (or permutation) $a$. The quantized learning problem uses the homomorphism $G \to U(\C \Omega)$ sending elements of $G$ to permutation matrices. (Such a representation is called a {\it permutation representation.}) Then the quantized learning problem is faithful if the original problem is faithful and the $\C G$-module contains a copy of the trivial representation, namely $\Span \{\sum_{\omega \in \Omega} |\omega \rangle\}$. \\

This is precisely the situation we would like to study because we can compare the classical and quantum query complexity. Classically the exact and bounded error query complexities are equal, since if a classical algorithm does not use enough queries to identify the hidden permutation with certainty then it must make a guess between at least $2$ equally likely permutations which behave the same on all the queries that were used, resulting in a success rate of at most $1/2$. \\

\begin{itemize} \item Suppose $\Omega = \{1, \dots, n\}$ hosts the defining permutation representation of $G = S_n$. Then $n - 1$ queries are required to determine a hidden permutation $\sigma$. \\

\item If we take the same action but restrict the group to $A_n \leq S_n$ then we need $n-2$ queries to determine a hidden element $\sigma \in A_n$. \\

\item Consider the action of the dihedral group $D_n$ on the set of vertices of an $n$-gon. Then $2$ queries are required to determine a hidden group element. \\
\end{itemize}

In general the classical query complexity is a well-known invariant of a permutation group $G$ denoted $b(G)$ called the {\it minimal base size} or just {\it base size} of $G$ \cite{LS:primbase}. It may be defined to be the length of the smallest tuple $(\omega_1, \dots, \omega_t) \in \Omega^t$ with the property that $(g \cdot \omega_1, \dots, g \cdot \omega_t) = (\omega_1, \dots, \omega_t)$ if and only if $g = 1$. From the definition it is clear that the base size agrees with the non-adaptive classical query complexity of the problem. In fact, it is also equal to the adaptive query complexity, since if a sequence of adaptive guesses $(\omega_1, \dots, \omega_t)$ suffices to identify a particular hidden $g \in G$, then the same sequence of guesses works for {\it every} element of the group. This means any optimal algorithm may be implemented non-adaptively. Thus the classical query complexity of symmetric oracle discrimination of $G \leq \Sym(\Omega)$ is the base size of $G$ and the quantum exact (bounded error) query complexity is the (bounded error) quantum base size. We are naturally led to a broad group theoretic problem: \\

{\it Question.} What are the relationships between $b(G), \gamma(\C \Omega)$ and $\gamma^{\rm bdd}(\C \Omega)$? \\

We are not aware of any direct comparison of these quantities in the group theory literature.
Here we only compute the various quantities for some special cases. We saw earlier that $b(S_n) = n -1$. We will prove

\begin{thm}\label{guesstheperm}  Let $\gamma, \gamma^{\rm bdd}$ denote the quantum base sizes for $S_n$ acting on $\{1, \dots, n\}$. Then
\begin{enumerate}
\item $\gamma = n - 1$ queries are necessary and sufficient for exact learning.
\item $\gamma^{\rm bdd} = n - 2\sqrt{n} +  \Theta(n^{1/6})$  queries are necessary and sufficient to succeed with probability $2/3$.
\item In fact,  for any $\epsilon \in (0,1)$, $n - 2\sqrt{n} +  \Theta(n^{1/6})$ queries are necessary and sufficient to succeed with probability $1-\epsilon$.

\end{enumerate}
\end{thm}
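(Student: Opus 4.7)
The plan is to reduce the quantum base size computations to the asymptotics of the longest increasing subsequence of a uniform random permutation, via the Robinson--Schensted correspondence and the Baik--Deift--Johansson theorem. The first step is to identify which irreducible $S_n$-representations appear in $(\C\Omega)^{\otimes t}$. I would decompose the $S_n$-set $\Omega^t$ into orbits indexed by set partitions of $[t]$: an orbit corresponding to a set partition with $k \leq \min(t,n)$ blocks is isomorphic to the coset space $S_n/S_{n-k}$, contributing the representation $\text{Ind}_{S_{n-k}}^{S_n}\mathbbm{1}$. By Frobenius reciprocity and the branching rule, the Specht module $S^\lambda$ appears in this induced representation iff $(n-k)\subseteq \lambda$, i.e., $\lambda_1 \geq n-k$. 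Taking the union over $k$ up to $\min(t,n)$ yields
$$I\bigl((\C\Omega)^{\otimes t}\bigr) = \bigl\{S^\lambda : \lambda_1 \geq \max(0, n-t)\bigr\}.$$

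Claim (1) follows immediately: every irrep appears in $(\C\Omega)^{\otimes t}$ iff the sign representation $S^{(1^n)}$ does (this being the partition of smallest first part), which requires $n-t \leq 1$, yielding $\gamma = n-1$. For claims (2) and (3), Theorem \ref{mainSOD} together with the identity $\sum_{\lambda\vdash n}(\dim S^\lambda)^2 = n!$ gives the optimal $t$-query success probability as
$$P^t_{\rm opt} = \frac{1}{n!}\sum_{\substack{\lambda \vdash n \\ \lambda_1 \geq n-t}} (\dim S^\lambda)^2 = \mathbb{P}_{\rm Planch}\bigl[\lambda_1 \geq n-t\bigr],$$
where the Plancherel measure assigns mass $(\dim S^\lambda)^2/n!$ to $\lambda \vdash n$. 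By the Robinson--Schensted correspondence this equals $\mathbb{P}[L_n \geq n-t]$, where $L_n$ denotes the length of the longest increasing subsequence of a uniform random permutation of $[n]$.

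The final step is asymptotic analysis of $\mathbb{P}[L_n \geq n-t]$ via the Baik--Deift--Johansson theorem, which asserts that $(L_n - 2\sqrt{n})/n^{1/6}$ converges in distribution to the Tracy--Widom $F_2$ law. Setting $n-t = 2\sqrt{n} + s\,n^{1/6}$, the success probability tends to $1 - F_2(s)$, so the minimum $t$ for which $P^t_{\rm opt} \geq 1-\epsilon$ satisfies $s = F_2^{-1}(\epsilon) + o(1)$, yielding
$$t = n - 2\sqrt{n} - F_2^{-1}(\epsilon)\,n^{1/6} + o(n^{1/6}) = n - 2\sqrt{n} + \Theta(n^{1/6}).$$
The main obstacle lies in this last step: BDJ is a pointwise distributional statement, so upgrading it to a genuine quantile convergence (and thus to a statement about the threshold $t$) requires continuity of $F_2$, which promotes pointwise to uniform convergence of CDFs by Polya's theorem, together with the tail estimates on $L_n$ proved in BDJ and refined in subsequent work. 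The representation-theoretic portion is routine once the orbit decomposition is written down; the substantive input is this deep probabilistic asymptotic about longest increasing subsequences.
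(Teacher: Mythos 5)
Your proposal is correct and follows essentially the same route as the paper: identify $I(V^{\otimes t})$ as the partitions with $\lambda_1 \geq n-t$, recognize the resulting success probability as the Plancherel probability that $\lambda_1 \geq n-t$, translate via RSK to the longest increasing subsequence of a random permutation, and invoke Baik--Deift--Johansson. The only (minor) difference is that you derive $I(V^{\otimes t})$ from the orbit decomposition of $\Omega^t$ into modules $\mathrm{Ind}_{S_{n-k}}^{S_n}\mathbbm{1}$ plus Young's rule, whereas the paper iterates the add-then-remove-a-box tensor branching rule; your closing remark about upgrading the BDJ distributional limit to a quantile statement is a legitimate technical point the paper passes over silently.
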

\begin{proof}
Recall that the irreducible characters of $S_n$ are parametrized by partitions of $n$ which can be written either as a sequnce $[\lambda_1, \dots, \lambda_n]$ or as a Young diagram with $n$ total boxes and $\lambda_i$ boxes in the $i$th row. Let $V = \C \{1, \dots, n\}$ denote the $\C G$-module  corresponding to the defining permutation representation of $S_n$. Then $V$ decomposes as a sum of two irreducibles:
$$
V = V_{[n]} \oplus V_{[n-1, 1]}.
$$
We note that $V_{[n]}$ is the trivial representation. A well-known rule says that if $V_{\lambda}$ is a simple representation corresponding to the Young diagram $\lambda$ then the irreps appearing in $V \otimes V_{\lambda}$ 
$$
I(V \otimes V_{\lambda}) = \{V_{\mu}\ |\ \mu \in \lambda^{\pm}\}.
$$
  where $\lambda^{\pm}$ is the set of Young diagrams obtained from $\lambda$ by adding then removing a box from lambda. In particular, this shows by induction that
$$
I(V^{\otimes t}) = \{V_{\mu}\ |\ \mu \text{ has at least $n-t$ columns} \}.
$$
We see that $n-1$ queries are required until every irreducible is contained in $V^{\otimes t}$ (in particular, the sign representation corresponding to the partition $[1^n] = [1,1,\dots,1]$ is not included in $V^{\otimes t}$ unless $t \geq n-1$). This proves part (1) of the theorem.

To prove part (2) we must examine more closely the set $I_t = I(V^{\otimes t})$ consisting of all partitions with at least $n-i$ columns (i.e. $\lambda_1 \geq n-i)$. We are interested in the sum
$$
d_t := d_{V^{\otimes t}} = \sum_{\chi \in I(V^{\otimes t})} \chi(e)^2.
$$
It is well known that if $\chi$ is an irrep corresponding to the Young diagram $\lambda$ then $\chi(e)$ is equal to the number of standard tableaux of shape $\lambda$ (\cite{sagan:symgroup},  Theorem 2.5.2). Hence $\chi(e)^2$ is equal to the number of pairs of standard tableaux of shape $\lambda$. Now by the Robinson-Schensted correspondence, the sum above is equal to the number of sequences of the numbers $\{1, \dots, n\}$ whose longest increasing subsequence is at least $n-t$ (see e.g. \cite{sagan:symgroup}, Theorem 3.3.2). Next, a deep result of Baik, Deift and Johannson \cite{BDJ:1999} identifies the distribution of the $l_n$, the length of the longest increasing subsequence of a random permutation of $n$ elements, as the Tracy-Widom distribution (which also governs the largest eigenvalue of a random Hermitian matrix) of mean $2 \sqrt{n}$ and standard deviation $n^{1/6}$. In particular, Theorem 1.1 of \cite{BDJ:1999} asserts that if $F(x)$ is the cumulative distribution function for the Tracy-Widom distribution, then 
$$
\lim_{n\to\infty}Prob\biggl(\frac{l_n-2\sqrt{n}}{n^{1/6}}\leq x\biggr) = F(x)
$$
Let  $c$ be any real number.  If we use $t=n - 2\sqrt{n} +  cn^{1/6}$ queries, then our success probability will be
$$
Prob(l_n\geq n-t) = 1 - Prob( l_n < 2\sqrt{n} -cn^{1/6}) = 1 - Prob\biggl(\frac{l_n-2\sqrt{n}}{n^{1/6}} < -c\biggr)  \to 1 - F(-c)
$$
Thus for any $\epsilon\in (0,1)$, if we wish to succeed with probability $1-\epsilon$, it will be necessary and sufficient to 
use $t=n - 2\sqrt{n} +  cn^{1/6}$ queries, where $c=-F^{-1}(\epsilon)$ (for $n$ sufficiently large).
\end{proof}

Here is the analogous result for identifying an element of the alternating group.  

\begin{thm}\label{guesstheeggplantperm}  Consider the standard action of $A_n$ acting on $\{1, \dots, n\}$. Then the quantum base sizes are given as follows.
\begin{enumerate}
\item $\gamma = n - \lceil \sqrt{n} \rceil$ are necessary for exact learning.
\item $\gamma^{\rm bdd} = n - 2\sqrt{n} +  \Theta(n^{1/6})$  are necessary and sufficient to succeed with probability $2/3$. In fact,  for any $\epsilon \in (0,1)$, $n - 2\sqrt{n} +  \Theta(n^{1/6})$ are necessary and sufficient to succeed with probability $1-\epsilon$.
\end{enumerate}
\end{thm}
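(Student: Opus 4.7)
The plan is to reduce the problem to the $S_n$-analysis carried out in the proof of Theorem~\ref{guesstheperm} and then account for the branching $S_n\downarrow A_n$. Recall the standard rule: for each non-self-conjugate pair $\{\mu,\mu^T\}$ of partitions of $n$, the restrictions $V_\mu|_{A_n}$ and $V_{\mu^T}|_{A_n}$ coincide as a single $A_n$-irreducible of dimension $\dim V_\mu$; for each self-conjugate $\mu=\mu^T$, the restriction $V_\mu|_{A_n}$ splits into two distinct $A_n$-irreducibles $W_\mu^\pm$ each of dimension $\dim V_\mu/2$. Combined with the identification $I(V^{\otimes t})=\{V_\mu:\mu_1\ge n-t\}$ from the $S_n$ case, this shows that the $A_n$-irreducible(s) associated to a partition $\mu$ of $n$ appear in $V^{\otimes t}|_{A_n}$ if and only if
$$\max(\mu_1,\ell(\mu))\ge n-t,$$
where $\ell(\mu)$ denotes the number of parts of $\mu$.

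Part~(1) then reduces to minimising $\max(\mu_1,\ell(\mu))$ over partitions of $n$. The inequality $n=|\mu|\le\mu_1\cdot\ell(\mu)$ forces this max to be $\ge\lceil\sqrt n\rceil$, and the bound is saturated by any partition of $n$ fitting inside a $\lceil\sqrt n\rceil\times\lceil\sqrt n\rceil$ box (which exists since $\lceil\sqrt n\rceil^2\ge n$). Hence every $A_n$-irreducible appears in $V^{\otimes t}|_{A_n}$ iff $t\ge n-\lceil\sqrt n\rceil$, giving $\gamma=n-\lceil\sqrt n\rceil$.

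For part~(2), I would compute the success probability by summing squared dimensions of the appearing $A_n$-irreducibles and using $|A_n|=n!/2$. Splitting $P_t=\{\mu:\mu_1\ge n-t\}$ into (i) partitions $\mu\in A_t:=\{\mu\in P_t:\ell(\mu)<n-t\}$ (whose conjugate lies outside $P_t$), (ii) non-self-conjugate $\mu\in P_t\setminus A_t$ (grouped in conjugate pairs), and (iii) self-conjugate $\mu\in P_t$ (contributing two half-dimensional $A_n$-irreducibles), careful bookkeeping yields
$$q(t)=\frac{1}{n!}\left(d_t+\sum_{\mu\in A_t}(\dim V_\mu)^2\right),$$
where $d_t=\sum_{\mu\in P_t}(\dim V_\mu)^2$ is the $S_n$-numerator of Theorem~\ref{mainSOD}. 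Because $A_t\subseteq P_t$, this immediately gives the sandwich $p(t)\le q(t)\le 2p(t)$, where $p(t)=d_t/n!$ is the $S_n$-success probability controlled via the Baik--Deift--Johansson asymptotic $p(n-2\sqrt n+cn^{1/6})\to 1-F(-c)$ used in Theorem~\ref{guesstheperm}. For any target $1-\epsilon\in(0,1)$, choosing $c$ large enough forces $p(t)\ge 1-\epsilon$ and hence $q(t)\ge 1-\epsilon$; conversely $q(t)\ge 1-\epsilon$ forces $p(t)\ge(1-\epsilon)/2$, which also requires $c$ to exceed a finite $\epsilon$-dependent constant. Both thresholds lie in the window $n-2\sqrt n+\Theta(n^{1/6})$, proving~(2).

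The main technical obstacle is the correct bookkeeping in the restriction $S_n\downarrow A_n$: the factor-of-two dimension halving for self-conjugate partitions has to be tracked against the factor $[S_n:A_n]=2$ in the denominator, and the three types of contributions (asymmetric singletons, conjugate pairs, and self-conjugate partitions) must all be accounted for. Once the closed-form expression for $q(t)$ is in hand, everything else is an immediate appeal to the $S_n$-analysis already completed in the proofs of Theorems~\ref{mainSOD} and~\ref{guesstheperm}.
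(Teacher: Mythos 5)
Your proposal is correct and follows essentially the same route as the paper: part (1) via the observation that every partition $\mu$ of $n$ satisfies $\max(\mu_1,\ell(\mu))\ge\lceil\sqrt n\rceil$ with equality attainable (the square-packing argument), and part (2) via the sandwich $p_t\le q_t\le 2p_t$ combined with the Baik--Deift--Johansson asymptotics already established in Theorem \ref{guesstheperm}. Your closed-form expression for $q(t)$ is a slightly more explicit version of the paper's contribution-by-contribution accounting of the branching $S_n\downarrow A_n$, but it yields the identical bounds and conclusion.
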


\begin{proof}

Recall the following facts about the representation theory of $A_n$. The conjugate of a partition $\lambda$ is the partition $\lambda^*$ obtained by swapping the rows and columns of $\lambda$; in other words $\lambda^* = (\lambda^*_1, \lambda^*_2, \dots)$ where $\lambda_i^* = $ the number of boxes in the $i$th column of $\lambda$. For each partition $\lambda$ of $n$ that is not self-conjugate, i.e. $\lambda\neq\lambda^*$, the restriction of $V_{\lambda}$ to $A_n$ is an irreducible representation $W_{\lambda}$ of $A_n$.  Also, $W_{\lambda}= W_{\lambda^*}$.  For self conjugate $\lambda$, the representation $V^{\lambda}$ breaks up into two distinct irreducible representations $W_{\lambda}^+$ and $W_{\lambda}^-$ of equal dimension.

Recall from the previous proof that after $t$ queries, we get copies of all the $V_{\lambda}$ such that $\lambda_1\geq n-t$.  Observe that for any partition $\lambda$, we must have either $\lambda_1\geq \lceil \sqrt{n} \rceil$ or $\lambda^*_1\geq \lceil \sqrt{n} \rceil$.  (If both fail, the partition fits into a square of side length $\lceil \sqrt{n} \rceil - 1$, which contains fewer than $n$ boxes.) It follows that after $t= n - \lceil \sqrt{n} \rceil$ queries, for any $\lambda$, we have picked up a copy of $V_{\lambda}$ or $V_{\lambda^*}$.  Hence we have every irreducible representation of $A_n$. Therefore,  $ n - \lceil \sqrt{n} \rceil$ queries suffice for exact learning. Showing that that fewer queries cannot suffice is similar.  Here we make the observation that there exists a partition  $\lambda$ such that $\lambda_1 < \lceil \sqrt{n} \rceil + 1$ and $\lambda^*_1< \lceil \sqrt{n} \rceil + 1$, since $n$ boxes can be packed into a square of side length $\lceil \sqrt{n} \rceil $.  It follows that $t= n - \lceil \sqrt{n} \rceil - 1$ queries do not pick up the $V_{\lambda}$ or $V_{\lambda^*}$ for such $\lambda$.  Thus, we do not get every irrep of $A_n$. 

We now examine the bounded error case.  For a positive integer $t$, let $p_t$ be the success probability of the optimal $t$-query algorithm for identifying a permutation of $S_n$ and let $q_t$ be the corresponding probability for $A_n$.  

Let $V$ denote the $t$-fold tensor power of the defining representation of $S_n$. We can decompose $V$ as a direct sum of irreps of $S_n$ and if we know which $V_{\lambda}$ appear we can determine which irreps of $A_n$ appear in $V$.  In particular, each time we have a non-self-conjugate $\lambda$ such that $V_{\lambda}$ appears in $V$, we will have $W_{\lambda}$ appearing in $V$.   Let's consider the contribution of this appearance to the success probability $p_t$ and $q_t$, which is the square of the dimension divided by the order of the group.  Since the dimension of $V_{\lambda}$ equals the dimension of $W_{\lambda}$, while the order of $S_n$ is twice the order of $A_n$, the contribution to $q_t$ is twice the contribution to $p_t$. 
 
Now if $\lambda$ is self-conjugate then $V_{\lambda}$ decomposes into two irreps of $S_n$  of equal dimensions.  The sum of the squares of these two irreps is thus one-half the square of the dimension of $V_{\lambda}$.  Once we've divided by the sizes of the groups, we see that the contribution to $q_t$ is equal to the contribution to $p_t$.  
 
We have thus seen that for any $\lambda$ the contribution to $q_t$ is either 2 or 1 times the contribution to $p_t$.  It follows that 
$$
p_t \leq q_t \leq 2 p_t
$$
Thus for $q_t\geq 2/3$ we must have $p_t \geq 1/3$, which as we showed in Theorem \ref{guesstheperm} requires $n - 2\sqrt{n} +  \Theta(n^{1/6})$ queries.  On the other hand, if we are given  $n - 2\sqrt{n} +  \Theta(n^{1/6})$ queries, we achieve $p_t \geq 2/3$, which forces $q_t \geq 2/3$.
\end{proof}

The two theorems above show that there is very little speedup possible when trying to identify a permutation from the symmetric group or the alternating group. For the alternating group, one can at least get by with $\sqrt{n}$ fewer queries for exact quantum learning. Here there is an analogy to Van Dam's problem of exactly learning the value of an $n$-long bitstring using queries to its bits \cite{vanDam:1998}.  Exact learning requires $n$ queries. However, if we are guaranteed in advance that the parity of the string is even, then only $\lfloor n/2 \rfloor$ queries are required for exact learning.  To see this using the techniques of the current paper, we argue as follows.  Let $G$ be the subgroup $\Z_2^n$ consisting of all strings of even parity.  If we are allowed $t$ queries, then we can access those representations $\rho_x$ of $\Z_2^n$ corresponding to strings $x$ of Hamming weight less than or equal to $t$ (see also the remarks in Section 7.3).  If $\bar{x}$ is the bitwise complement of $x$, then $\rho_x$ and $\rho_{\bar{x}}$ take the same values on $G$.  Now, for any string $x$, one of $x$ and $\bar{x}$ will have Hamming weight less than or equal to $\lfloor n/2 \rfloor$.  Hence every representation of $G$ can be accessed by $\lfloor n/2 \rfloor$ queries to the oracle, and we will succeed with probability 1.

\section{Query complexity of coset identification}
\label{sec:5}
%  In this section we return to the general coset identification problem. First we examine the exact query complexity of the problem.
%
%\begin{lem} A collection of density matrices $\{ \rho_{\alpha} \}$ can be distinguished with certainty if and only if ${\rm Im }\ \rho_{\alpha}$ is orthogonal to ${\rm Im }\ \rho_{\beta}$ for all $\alpha, \beta$.
%\end{lem}
%
%\begin{thm}
%Suppose $(G, V, \pi, f)$ describes an instance of coset identification with subgroup $H \leq G$. There is a single-query algorithm to evaluate $f(a)$ with probability $1$ if and only if $V$ contains an induced character of $H$.
%\end{thm}
%\begin{cor}
%The minimum number of queries required by a probability $1$ algorithm for coset multiplication is
%$$
%\min \{t:\ V^{\otimes t} \text{contains an induced character of } H\}.
%$$
%\end{cor}
%
%\\

In this section we derive a formula for the optimal success probability of a $t$-query algorithm to solve coset identification. In light of our previous result on parallelizability (Corollary \ref{tQueryCor}), this boils down to finding a formula in the single-query case. This will directly generalize the single-query results of \cite{BCMP:sod} used in Section 4. \\

To state the result we fix some notation. Suppose $(G, V, \pi, f)$ is an instance of coset identification with $H$ the preimage of $f(e)$. Given an $H$-representation $W$ let $W ^{\uparrow}$ denote the induced representation of $W$  (which is a representation of $G$; see Section \ref{InducedReps} below for more details.) Likewise if $W$ is a $\C G$-module then we denote by $W^{\downarrow}$ the $\C H$-module obtained by restriction to $H$. Recall that if $V$ is a $\C G$-module then $I(V)$ denotes the set of all irreducible characters of $G$ appearing in $V$. We sometimes use the notation $I_G(V), I_H(V)$ to emphasize which group we are considering. Finally, given two representations $A$ and $B$ we let
$$
A_B := \text{ the maximal subrepresentation of $A$ such that } I(A_B) \subseteq I(B).
$$
Thus $A_B$ denotes the sum of all the isotypical components of $A$ which correspond to an irreducible isotype appearing in $B$. We will be interested in the quantities
$$
\frac{\dim A_B}{\dim A}
$$\
which can be understood as the fraction of $A$ which is shared with $B$. \\

\begin{thm}\label{CIThm}
An optimal single-query algorithm to solve the instance $(G, V, \pi, f)$ of coset identification succeeds with probability
$$
P_{\rm opt} = \max_{Y \in \Irr(H)} \frac{\dim\ (Y^{\uparrow})_V}{\dim\ Y^{\uparrow}}.
$$
\end{thm}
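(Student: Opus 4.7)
My plan is a three-step reduction: use symmetry to restrict to $G$-covariant measurements, then decompose by $H$-isotypes and invoke Frobenius reciprocity, and finally identify the optimum as a maximum over a single $Y \in \Irr(H)$. I would begin with an arbitrary single-query algorithm $\mathcal{A} = (N, |\psi\rangle, \{E_x\})$ on a working representation $V' = V \otimes \C^N$; by Lemma \ref{RepDependence}, $V'$ may be replaced by any $\C G$-module with $I(V') \subseteq I(V)$. A standard twirling argument (replacing $E_x$ by its $G$-average $\frac{1}{|G|}\sum_g \pi(g) E_{g^{-1}x} \pi(g)^{\dagger}$ and using that $f$ is a $G$-set map) shows that without loss of generality the POVM is $G$-covariant, $E_{gx} = \pi(g) E_x \pi(g)^{\dagger}$. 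Such a POVM is determined by its seed $F := E_{x_0}$, which must satisfy $F \ge 0$, $F \in \End_{\C H}(V')$, and the normalization $\sum_{x \in X} \pi(g_x) F \pi(g_x)^{\dagger} = I_{V'}$; the success probability then reduces to $P_{\rm succ}(\mathcal{A}) = \mathrm{tr}(F\tau)$ where $\tau := \frac{1}{|H|}\sum_{h \in H} \pi(h)|\psi\rangle\langle\psi|\pi(h)^{\dagger}$.

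Both $\tau$ and $F$ commute with $\pi(H)$, so they respect the $H$-isotypic decomposition $V'^{\downarrow} = \bigoplus_{Y \in \Irr(H)} V'^{(Y)}$, yielding $P_{\rm succ} = \sum_Y \mathrm{tr}(F^{(Y)}\tau^{(Y)})$. By Schur's lemma and Frobenius reciprocity, $V'^{(Y)} \cong Y \otimes \Hom_{\C G}(Y^{\uparrow}, V')$, so the $Y$-block of $V'$ is controlled precisely by the $G$-constituents of $Y^{\uparrow}$ appearing in $V'$. Using Lemma \ref{RepDependence} to freely inflate multiplicities of irreps from $I(V)$, I can arrange $V'$ to contain any $G$-submodule of $Y^{\uparrow}$ whose irreducible constituents lie in $I(V)$; the maximal such submodule is $(Y^{\uparrow})_V$ itself.

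The global trace constraint on $F$, after $H$-averaging (under which it is already invariant), becomes $|H|^{-1}\sum_{g \in G} \pi(g)F\pi(g)^{\dagger} = I_{V'}$, which by Schur's lemma pins down $\mathrm{tr}(F|_{V'_\chi})$ on each $G$-isotype of $V'$. A convexity argument then shows the optimum is attained by concentrating $|\psi\rangle$ on a single $Y$-isotype and taking $F^{(Y)}$ to be a scaled projection. For the matching lower bound, I'd fix $Y \in \Irr(H)$, take $V'$ to contain $(Y^{\uparrow})_V \subseteq Y^{\uparrow} = \bigoplus_i g_i \otimes Y$, let $Q$ be the orthogonal projection onto $(Y^{\uparrow})_V$ and $P_i$ the projection onto $g_i \otimes Y$ in $Y^{\uparrow}$, prepare $|\psi\rangle = Q(e \otimes y)/\|Q(e\otimes y)\|$ for a unit $y \in Y$, and measure via $\{E_i := Q P_i Q\}$. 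Using the $G$-equivariance of $Q$ and the identity $\pi(g_j h)(e\otimes y) = g_j \otimes \pi_Y(h)y$, a short computation gives $P_{\rm succ} = \dim(Y^{\uparrow})_V/\dim Y^{\uparrow}$. Maximizing over $Y$ yields the stated formula.

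The principal obstacle is the upper bound: rigorously showing that no cross-$Y$-block ``mixing'' strategy can beat the pure single-$Y$ strategy, despite the single global trace constraint coupling the blocks through the $G$-isotypic structure. I expect this to follow from the bilinearity of $(|\psi\rangle, F) \mapsto \mathrm{tr}(F\tau)$ combined with a Cauchy--Schwarz / Lagrange-multiplier analysis of the constraint. As a useful sanity check, when $H = \{e\}$ the only $Y$ is trivial, $Y^{\uparrow} = \C G$, and $\dim(Y^{\uparrow})_V = \sum_{\chi \in I(V)} \chi(e)^2 = d_V$, recovering Theorem \ref{singlequerySOD}.
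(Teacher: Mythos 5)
Your overall architecture (twirl to a $G$-covariant POVM, reduce to $P_{\rm succ}=\mathrm{tr}(F\tau)$ with $F=E_{x_0}$ an $H$-intertwiner, decompose by $H$-isotypes, and construct the matching lower bound from a state in $(Y^{\uparrow})_V$ measured by the coset projections) matches the paper's, and your lower bound and the $H=\{e\}$ sanity check are correct. But the upper bound is not proved: you yourself flag the ``cross-$Y$-block mixing'' problem as the principal obstacle and then only assert that you ``expect'' it to follow from bilinearity plus Cauchy--Schwarz or Lagrange multipliers. That step is the actual content of the theorem, and the difficulty is exactly where you locate it: the single normalization constraint $|H|^{-1}\sum_{g}\pi(g)F\pi(g)^{\dagger}=I_{V'}$ constrains $F$ only through its $G$-average on each $G$-isotype of $V'$, and since one $G$-irrep $\chi$ generally restricts to several distinct $H$-irreps, the blocks $F^{(Y)}$ for different $Y$ are coupled through shared $\chi$'s. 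A naive Lagrange-multiplier computation on $\mathrm{tr}(F\tau)$ does not obviously decouple, so as written the argument is incomplete.

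The paper closes this gap with a structural move you don't have: an equivariant Neumark dilation. Lemma \ref{equivMsmt} embeds $V'$ into $W=V'\otimes\C G$ and replaces the orbital POVM by a \emph{projective} orbital measurement $\{\overline{E}_x\}$ on $W$ with identical statistics; Lemma \ref{MsmtInduced} then shows that the images $W_x=\im \overline{E}_x$ force $W\cong W_{x_0}^{\uparrow}$ with the measurement given by the block projections of the induced module. Writing $W_{x_0}=Y_1\oplus\dots\oplus Y_r$ into $H$-irreducibles, the input state decomposes as $\sum_i\lambda_i|\psi_i\rangle$ with $|\psi_i\rangle\in(Y_i^{\uparrow})_{V}$, the cross terms in $\langle\psi|E\psi\rangle$ vanish because the $Y_i^{\uparrow}$ are orthogonal $G$-submodules preserved by $E=\sum_i E_i$, and Proposition \ref{inducedProp2} (proved via the averaging operator $R_H$ and the estimate $R_H(|\psi\rangle\langle\psi|)\leq\tfrac{1}{\dim Y}\Pi_V$) bounds each diagonal term by $\dim(Y_i^{\uparrow})_V/\dim Y_i^{\uparrow}$. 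The convex combination then gives the maximum over $Y\in\Irr(H)$. If you want to complete your direct approach without the dilation, you would need to prove the analogue of Proposition \ref{inducedProp2} for a general positive $H$-intertwining seed $F$ subject to the averaged normalization, which is essentially equivalent to redoing these lemmas.
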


In words: to find the optimal success probability, you look at an irrep $Y$ of $H$ which appears in $V^{\downarrow}$. Then you examine the fraction of $Y^{\uparrow}$ which is shared with $V$. Finally take the maximum over all irreps $Y$ appearing in $V^{\downarrow}$. \\

From this theorem we can quickly deduce Theorem \ref{mainSOD}, the single-query result for symmetric oracle identification. This is the special case when $H$ is the trivial group. Then $H$ has only one irrep, namely the trivial representation $\mathbbm{1}$, and $\mathbbm{1}^{\uparrow}$ is isomorphic to $\C G$. Hence the formula we get from Theorem \ref{CIThm} is
$$
P_{\rm opt} = \frac{\dim (\C G)_V}{|G|} = \frac{1}{|G|}\sum_{\chi \in I(V)
} \chi(e)^2
$$
which is the formula of Theorem \ref{mainSOD}. \\

The next two sections are devoted to the proof of Theorem $\ref{CIThm}$. First we prove the lower bound (i.e. existence of a state and measurement achieving the desired success probability) and then we prove the upper bound (optimality of that success probability).

\subsection{The lower bound}
First we collect some facts concerning induced representations and averaging operators needed for the proof of Theorem \ref{CIThm}. A fine treatment of the subject is contained in Serre's book \cite{serre:linear}. \\

\subsubsection{Induced Representations}\label{InducedReps}
Suppose $H$ is a subgroup of a finite group $G$ and let $Y$ denote a representation of $H$. Note that $\C G$ admits a right $H$-action. The representation of $G$ {\it induced from $Y$} is
$$
Y^{\uparrow_H^G} = \C G \otimes_{\C H} Y.
$$
When $H$ and $G$ are understood we simply write $Y^{\uparrow}$. Similarly if $W$ is a representation of $G$ then it is also a representation of $H$, called the restriction of $W$ to $H$. We denote it by $W^{\downarrow^G_H}$ or simply $W^{\downarrow}$. \\

From the definition of induced representations, we can write
$$
Y^{\uparrow} = \bigoplus_t t \otimes Y
$$
where $t$ ranges over a set of coset representatives for $H$. Conversely, if a representation $W$ of $G$ contains an $H$-invariant subspace $W_0$ such that
$$
W = \bigoplus_t tW_0
$$
where $t$ again ranges over a set of coset representatives for $H$, then $W$ is isomorphic to $W_0^{\uparrow}$ as $G$ representations. \\

In our situation all representations are unitary. In particular if $Y$ is a unitary representation of $H$ then $Y^{\uparrow}$ is equipped with the inner product determined by requiring the subspaces $t \otimes Y$ to be pairwise orthogonal, and translating the inner product of $Y$ to each subspace $t \otimes Y$. With this inner product $Y^{\uparrow}$ is a unitary representation of $G$. We will often denote the orthogonal projection onto $e \otimes Y$ by $E$. Then the orthogonal projection onto $t \otimes Y$ is $tEt^{-1}$, and we have $\sum_t tEt^{-1} = I$. \\
\subsubsection{Averaging operators}\label{sec:avgops}
Given a $\mathbb{C} G$-module $V$ we can define the averaging operator, which turns an arbitrary linear map $A: V \to V$ into a $G$-invariant one:
\begin{align*}
R_G&:\ \End_{\mathbb{C}}(V) \to \End_G(V) \\
R_G(A)& := \frac{1}{|G|}\sum_{g \in G}gAg^{-1}.
\end{align*}
Note that $R_G(A)$ commutes with every $g \in G$ so that indeed $R_G(A)$ is $G$-invariant, i.e. $R_G(A) \in \End_G(V)$. If $B$ is a $G$-invariant operator then $R_G(BA) = BR_G(A)$. The map $R_G$ is trace-preserving, so in particular if $p$ is a projection then $R_G(p)$ is non-zero, since it has non-zero trace. If $V$ contains only a single isotype of irrep, i.e. $V \cong Y \otimes \mathbb{C}^m$ for some irrep $Y$ then $R_G$ is closely related to the partial trace with respect to the subspace $Y$:
\begin{equation}\label{eq-averagepart}
R_G(A) = \frac{1}{\dim Y}I \otimes \Tr_{Y}(A).
\end{equation}\\
\subsubsection{Proof of the lower bound}

Before giving the proof of the lower bound in Theorem 5.1 we prove a preliminary proposition. \\

If $R$ is a algebra over $\C$, $V$ an $R$-module and $W \leq V$ a linear subspace, we let $R \cdot W$ denote the submodule of $V$ generated by $W$ (i.e. the smallest submodule containing the subspace $W$). Similarly for $r \in R$ we let $r \cdot W$ denote the subspace $\{rw:\ w \in W\}$.

\begin{prop}\label{inducedProp1}
Suppose $Y$ is an irreducible unitary representation of $H$ (a subgroup of $G$). Also suppose $V$ is a $G$-subrepresentation of $Y^{\uparrow}$. Let $E$ denote orthogonal projection onto $e \otimes Y \subset Y^{\uparrow}$. Then there exists a unit vector $\psi \in V$ such that
$$
\langle \psi | E | \psi \rangle = \frac{\dim V}{\dim Y^{\uparrow}}.
$$
\end{prop}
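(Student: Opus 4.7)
The plan is to study the positive $H$-equivariant operator $T := P_V E P_V$ on $V$, where $P_V$ denotes orthogonal projection onto $V$, and show that its image is a single $H$-irreducible copy of $Y$. Schur's lemma will then force $T$ to have a single nonzero eigenvalue whose value can be read off from a trace computation, and any unit vector in the image of $T$ will achieve the claim.

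First I would establish the trace identity $\Tr(P_V E) = \dim V / [G:H]$. Using the orthogonal decomposition $Y^\uparrow = \bigoplus_t t \otimes Y$, the projections $tEt^{-1}$ onto the summands sum to the identity. Since $V$ is $G$-invariant, $P_V$ commutes with every $t \in G$, so cyclicity of trace gives $\Tr(P_V \cdot tEt^{-1}) = \Tr(P_V E)$ for each coset representative $t$. Summing over the $[G:H]$ representatives yields $\dim V = [G:H] \,\Tr(P_V E)$.

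Next I would analyze the image of $T$. Write $W = e \otimes Y$, so that $E$ is projection onto $W$ and $W \cong Y$ is an irreducible $H$-representation. Since $E$ is $H$-equivariant and $V$ is $H$-invariant, $E(V) \subseteq W$ is $H$-invariant, hence equals $0$ or $W$. The first case combined with the trace identity forces $\dim V = 0$, so we may assume $E(V) = W$. Then $\phi := P_V|_W : W \to V$ is an $H$-equivariant map whose kernel $W \cap V^\perp$ is $H$-invariant in the irreducible $W$; it cannot equal $W$ (else $V \subset W^\perp$ and $E(V)=0$), so $\phi$ is injective. Consequently the image of $T$ equals $\phi(W) \cong Y$, a single $H$-isotype of type $Y$ sitting inside $V^\downarrow$.

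Finally I would extract the eigenvalue. Decompose $V^\downarrow$ into $H$-isotypic components; by Schur's lemma, $T$ vanishes on every isotype other than the $Y$-isotype $\C^k \otimes Y$, on which $T$ acts as $A \otimes I_Y$ for some positive self-adjoint $k \times k$ matrix $A$. Since the image of $T$ is a single copy of $Y$, $A$ has rank one, so $T$ has a unique nonzero eigenvalue $\lambda$ of multiplicity $\dim Y$. The trace identity then gives $\lambda \dim Y = \dim V / [G:H]$, i.e.\ $\lambda = \dim V / \dim Y^\uparrow$. Taking $\psi$ to be any unit vector in the image of $T$ (which lies in $V$), one computes $\langle \psi | E \psi \rangle = \langle \psi | P_V E P_V \psi \rangle = \langle \psi | T \psi \rangle = \lambda$. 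The main obstacle is the Schur-lemma step: pinning down the single nonzero eigenvalue requires simultaneously using positivity, $H$-equivariance, and the fact that the image of $T$ is exactly one irreducible copy of $Y$; once this is in place, the opening trace computation delivers the value of that eigenvalue for free.
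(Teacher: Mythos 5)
Your proposal is correct and takes essentially the same approach as the paper: both analyze the positive $H$-equivariant operator $P_V E P_V$, use Schur's lemma to identify it as $\lambda$ times a projection onto a single copy of $Y$ inside $V$, and extract $\lambda = \dim V/\dim Y^{\uparrow}$ from the trace identity furnished by $\sum_t tEt^{-1} = I$. The only cosmetic differences are that the paper rules out the degenerate case by showing $\C G \cdot \im(P_VEP_V) \supseteq V$ rather than by trace positivity, and evaluates $\langle \psi | E\psi\rangle$ by averaging $\psi\psi^*$ over $H$ rather than by observing that $\psi$ is a $\lambda$-eigenvector of $P_VEP_V$.
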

{\bf Remark.} In Proposition \ref{inducedProp2} we will prove this is an upper bound for $\langle \psi | E | \psi \rangle$ over all unit vectors $\psi \in V$. \\
\begin{proof} Let $\Pi_V$ denote the $G$-invariant orthogonal projection onto $V$. Since $Y$ is irreducible, $E$ is a minimal idempotent in $\End_H(Y^{\uparrow})$. Therefore, since $\Pi_V$ also belongs to $\End_H(Y^{\uparrow})$, we know $E \Pi_V E$ is a scalar times $E$. In turn this implies $\Pi_VE\Pi_V$ is a scalar multiple of an orthogonal projection, since it is self-adjoint and
$$
(\Pi_V E \Pi_V)^2 = \Pi_V E \Pi_V E \Pi_V = \text{ scalar} \cdot \Pi_V E \Pi_V.
$$

The image of $\Pi_V E \Pi_V$ is an $H$-invariant subspace of $V$ which is either $0$ or isomorphic to $Y$. Let this subspace be $Y'$, so we have
\begin{equation}\label{eq:minidem}
\Pi_V E \Pi_V = \lambda \Pi_{Y'}
\end{equation}
for some non-zero scalar $\lambda \in \C$. 
We will also use the fact that
\begin{equation}\label{eq:prop52}
\Pi_{Y'}E\Pi_{Y'} = \lambda \Pi_{Y'},
\end{equation}
which results from Eq. (\ref{eq:minidem}) by multiplying the equation by $\Pi_{Y'}$ on the left and right. Next, we claim that $Y'$ is not zero (so it is in fact isomorphic to $Y$ as an $H$-module). Indeed, we have
$$
\C G \cdot Y' = \sum_t t \cdot Y' = \sum_t \im(\Pi_V t E t^{-1} \Pi_V) \supset \im(\Pi_V (\sum_t tEt^{-1} )\Pi_V) = \im(\Pi_V) = V
$$
where the sum is over a set of coset representatives of $H$. This shows that $Y'$ is non-zero. In particular we have $\dim Y' = \dim Y$. We can now compute the scalar $\lambda$ via

\begin{align*}
\dim V &= \Tr(\Pi_V) = \Tr(\sum_t \Pi_V tEt^{-1} \Pi_V) \quad \quad \text{(since $\sum_t t E t^{-1} = I)$} \\
&= \Tr(\sum_t t \Pi_V E \Pi_V t^{-1}) \quad \quad \text{(since $\Pi_V$ commutes with the action of $G$)} \\
&= \lambda \sum_t \Tr(t \Pi_{Y'} t^{-1}) \quad \quad \text{(by Eq. \ref{eq:minidem})} \\
&= \lambda |G:H |\dim Y' = \lambda \dim Y^{\uparrow}
\end{align*}

which yields $\lambda = \frac{\dim V}{\dim Y^{\uparrow}}$. \\

Finally, let $|\psi \rangle$ be any unit vector in $Y'$. Consider the rank-1 projection $|\psi \rangle \langle \psi|: Y' \to Y'$. We apply the averaging operator $R_H$ (see Section \ref{sec:avgops}) to get $R_H(|\psi \rangle \langle \psi|) = \frac{1}{|H|}\sum_{h \in H} h |\psi \rangle \langle \psi| h^{-1}$. The space of $H$-invariant maps from $Y'$ to $Y'$ is 1-dimensional (by Schur's Lemma) and spanned by $\Pi_{Y'}$. Hence $R_H(|\psi \rangle \langle \psi|)$ is a scalar multiple of $\Pi_{Y'}$, and by taking traces we find $R_H(|\psi \rangle \langle \psi|) = \frac{1}{\dim Y}\Pi_{Y'}$. 

Using this we compute
\begin{align*}
\langle \psi | E \psi \rangle &= \Tr(|\psi \rangle \langle \psi| E) = \Tr( \frac{1}{|H|}\sum_{h \in H} h|\psi \rangle \langle \psi| h^{-1} E ) \quad \quad \text{(since $E$ is $H$-invariant)} \\
&=  \Tr(\frac{1}{\dim Y}\Pi_{Y'} E) \quad \quad \quad \quad \text{(by above discussion)} \\
&= \frac{1}{\dim Y} \Tr(\Pi_{Y'} E \Pi_{Y'}) \\ 
&= \frac{1}{\dim Y} \Tr(\lambda \Pi_{Y'}) \quad \quad \quad \quad \text{(by Eq. \ref{eq:prop52} above)} \\
&= \frac{\dim V}{\dim Y^{\uparrow}} \quad \quad \quad \quad \text{(since $\Tr(\Pi_{Y'}) = \dim Y$, and $\lambda = \frac{\dim V}{\dim Y^{\uparrow}}$)}
\end{align*}

as needed. \end{proof}

\begin{proof}[Proof of Theorem \ref{CIThm}, lower bound] Let $Y$ be an irreducible constituent of $V_{\downarrow}$ which maximizes the quantity
$$
\frac{\dim (Y^{\uparrow})_V}{\dim Y^{\uparrow}}.
$$
Let $V'$ denote the $G$-subrepresentation $(Y^{\uparrow})_V$ of $Y^{\uparrow}$ and again let $E$ denote the orthogonal projection onto the subspace $e \otimes Y \subset Y^{\uparrow}$. Then by Proposition \ref{inducedProp1} there exists a unit vector $|\psi \rangle \in V'$ such that
$$
\langle \psi | E | \psi \rangle = \frac{\dim V'}{\dim Y^{\uparrow}} = \frac{\dim Y^{\uparrow}_V}{\dim Y^{\uparrow}}.
$$

Now consider the oracle problem given by $(G, V', \pi', f)$ (i.e. the coset identification problem where the oracle is represented on $V'$ rather than $V$). Let $\Pi_{V'}$ denote the $G$-invariant orthogonal projection onto $V'$. We define a single-query algorithm for $(G, V', \pi', f)$ using no ancilla, the input state $|\psi \rangle$, and projective measurement $\{t \Pi_{V'}E \Pi_{V'}t^{-1}\}_t$ where $t$ ranges over a set of coset representatives for $H$ (so measuring outcome $t$ uniquely determines a coset of $H$). The measurement is used to distinguish the density operators $\{\rho_t = t \rho t^{-1}\}$ where $\rho = \frac{1}{|H|}\sum_{h \in H} h |\psi \rangle \langle \psi | h^{-1}$. Note that the support of $\rho$ is contained in $V'$, since $|\psi \rangle \in V'$ and $V'$ is $G$-invariant. Therefore $\rho \Pi_{V'} = \Pi_{V'}\rho = \rho$. Using this, we compute the success probability as
\begin{align*}
P_{\rm succ} &= \frac{1}{|G:H|}\sum_t \Tr(\rho_t t \Pi_{V'} E t^{-1} \Pi_{V'}) \\
&= \frac{1}{|G:H|} \sum_t \Tr(\rho_t t \Pi_{V'}E \Pi_{V'} t^{-1}) \quad \quad \text{ (since $\Pi_{V'}$ is $G$-equivariant, so commutes with $t^{-1}$)}\\
&= \frac{1}{|G:H|} \sum_t \Tr(\rho \Pi_{V'} E \Pi_{V'}) \quad \quad \text{(since the trace is cyclic, and $t^{-1}\rho_t t= \rho)$} \\
&= \Tr(\rho E) \quad \quad \text{(since the trace is cyclic, and $\rho \Pi_{V'} = \Pi_{V'}\rho = \rho$)} \\
&= \langle \psi | E |\psi \rangle = \frac{\dim Y^{\uparrow}_V}{\dim Y^{\uparrow}}.
\end{align*}
This shows that there is an algorithm for $(G, V', \pi', f)$ which succeeds with probability $\frac{\dim Y^{\uparrow}_V}{\dim Y^{\uparrow}}$. Since $V' = (Y^{\uparrow})_V$ only contains irreps which are also contained in $V$, Lemma \ref{lem:monotone} implies there is also an algorithm for $(G, V, \pi, f)$ which succeeds with the same probability.
\end{proof}
{\bf Remark.} In applying Lemma \ref{lem:monotone} to produce an algorithm for $(G, V, \pi, f)$, one may have to introduce an ancilla register, to ensure that irreps appear with sufficiently large multiplicity to allow an embedding of $V'$ into the workspace.

\subsection{The upper bound} In this section we prove the upper bound of Theorem \ref{CIThm} using a minimum-error quantum state discrimination approach \cite{EMV:2006}. Before explaining the strategy to obtain the bound, we review the set-up. We fix an instance of coset identification $(G, V, \pi, f:G \to X)$. The subgroup $H$ is the preimage of $f(e)$, and the elements of $X$ may be identified with the left cosets of $H$. A single-query algorithm uses an initial state $|\psi \rangle \in V$ and feeds it to the oracle, which is a hidden element $a$ sampled uniformly from $G$. Afterwards, a measurement $\{E_x\}_{x \in X}$ is applied with the goal of recovering $f(a)$. With a choice of initial state fixed, the task of finding an optimal measurement $\{E_x\}$ amounts to finding an optimal measurement to discriminate the mixed states $\{\rho_x\}_{x \in X}$, where
\begin{equation*}
    \rho_x = \frac{|X|}{|G|}\sum_{\substack{g \in G \\ f(g) = x}}g|\psi \rangle \langle \psi |g^{-1}.
\end{equation*}
Indeed, the success probability of the algorithm is equal to the probability that the measurement $\{E_x\}$ successfully discriminates the mixed states $\{\rho_x\}$, namely
$$
P_{\rm succ} = \frac{1}{|X|}\sum_{x \in X}\Tr(E_x \rho_x).
$$
We will prove that this success probability is bounded above by the quantity given in Theorem \ref{CIThm}, which involves induced representations. We now provide an outline of the proof to give an indication of how induced representations enter the picture. \\

To take advantage of symmetry in the problem, note that the density matrices $\{\rho_x\}$ always satisfy
$$
\rho_{g \cdot x} = g \rho_x g^{-1}.
$$
We say a set of operators with this symmetry is {
\it orbital} (a precise definition is given below). We first argue that any optimal measurement to distinguish an orbital set of density matrices can be modified to produce another optimal measurement which is itself orbital (Lemma \ref{lem:orblem}). Next we aim to simplify the problem further by showing that any orbital POVM can be replaced by a measurement which is both orbital {\it and} projective. To do so requires embedding the original $\C G$-module $V$ into a larger one $W$ by adding an ancilla register. This is the content of Lemma \ref{equivMsmt}, which is a ``symmetric" version of the usual result that any POVM can be simulated using projective measurements and ancilla registers. As a result of this lemma we may make the following assumptions about an optimal single query algorithm, which uses the larger Hilbert space $W$:
\begin{enumerate}
    \item The measurement operators $\{E_x\}_{x \in X}$ are projective and orbital.
    \item The initial state $|\psi \rangle$ belongs to a $G$-invariant subspace of $W$ which is isomorphic to $V$.
\end{enumerate}
The existence of a projective orbital measurement implied by (1) is a strong condition on the structure of $W$: using the completeness relation, $W$ can be written as the direct sum of the images of the measurement operators $\{E_x\}$. The subspace $Y$ which is the image of $E_{x_0}$ is left invariant by $H$, and the other subspaces are obtained through translation by a coset representative. This realizes $W$ as the induced representation $Y^{\uparrow}$. Finally, in this restricted setting (incorporating the assumption (2) that $|\psi \rangle \in V$) we are able to bound the success probability by decomposing $Y$ into irreducible $H$-subrepresentations, and then applying a critical inequality (Proposition \ref{inducedProp2}) that covers the situation when $Y$ is irreducible. We now give the details. \\

With a given unitary representation $V$ of $G$ and a fixed $G$-set $X$ understood we say a set of operators $\{A_x\}_{x \in X}$ (on $V$) is {\it orbital} if $gA_{x}g^{-1} = A_{g \cdot x}$ for all $x$ and $g$. The density matrices for a single query algorithm for coset identification form an orbital set. \\

\begin{lem}\label{lem:orblem}
Suppose $\{\rho_x \}_{x \in X}$ is an orbital set of density matrices. Then there exists an optimal measurement to distinguish the states $\{\rho_x\}$ which is orbital.
\end{lem}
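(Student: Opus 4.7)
The plan is to use the standard group-averaging (``twirling'') argument. Given any optimal POVM $\{E_x\}_{x \in X}$ for discriminating the states $\{\rho_x\}$, I would symmetrize it by defining
$$
\tilde{E}_x := \frac{1}{|G|} \sum_{g \in G} g^{-1} E_{g \cdot x}\, g
$$
and verify three things: (i) $\{\tilde{E}_x\}$ is again a POVM, (ii) it is orbital, and (iii) it attains the same success probability as $\{E_x\}$, hence is also optimal.

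For (i), each $\tilde{E}_x$ is a convex combination of the positive operators $g^{-1} E_{g\cdot x} g$, so is positive, and
$$
\sum_{x \in X} \tilde{E}_x \;=\; \frac{1}{|G|} \sum_{g \in G} g^{-1} \Bigl( \sum_{x \in X} E_{g \cdot x} \Bigr) g \;=\; \frac{1}{|G|} \sum_{g \in G} g^{-1} I\, g \;=\; I,
$$
using that $x \mapsto g \cdot x$ is a bijection of $X$ for each $g$. For (ii), the substitution $g' = g h^{-1}$ (equivalently $g = g' h$) gives
$$
h \tilde{E}_x h^{-1} \;=\; \frac{1}{|G|} \sum_{g \in G} (hg^{-1})\, E_{g \cdot x}\, (g h^{-1}) \;=\; \frac{1}{|G|} \sum_{g' \in G} (g')^{-1} E_{g' \cdot (h \cdot x)}\, g' \;=\; \tilde{E}_{h \cdot x},
$$
so $\{\tilde{E}_x\}$ is orbital as required.

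For (iii), the success probability in coset identification is $P(\{E_x\}) = \frac{1}{|X|} \sum_x \Tr(\rho_x E_x)$, since sampling $a \in G$ uniformly induces the uniform distribution on $X$ (the fibers of $f\colon G \to X$ are the $|X|$ cosets of $H$, each of size $|H|$). Using cyclicity of the trace together with the orbital identity $g \rho_x g^{-1} = \rho_{g \cdot x}$,
\begin{align*}
P(\{\tilde{E}_x\})
&= \frac{1}{|X|\,|G|} \sum_{x,g} \Tr\!\bigl(\rho_x\, g^{-1} E_{g\cdot x}\, g\bigr)
= \frac{1}{|X|\,|G|} \sum_{x,g} \Tr\!\bigl(\rho_{g\cdot x}\, E_{g\cdot x}\bigr) \\
&= \frac{1}{|X|\,|G|} \sum_{g} \sum_{y \in X} \Tr(\rho_y E_y)
= \frac{1}{|X|} \sum_{y} \Tr(\rho_y E_y) \;=\; P(\{E_x\}),
\end{align*}
where in the middle step we substituted $y = g \cdot x$ and used that this is a bijection on $X$. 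Thus $\{\tilde{E}_x\}$ is an orbital POVM with the same success probability as the original optimal POVM, and hence is itself optimal.

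There is no real obstacle here; the argument is a clean symmetrization. The only point requiring attention is to track how the prior and the density matrices transform jointly under $G$, but this is taken care of by the orbital hypothesis on $\{\rho_x\}$ together with the $G$-invariance of the uniform prior on the transitive $G$-set $X$. I would expect that the subsequent step in the paper (further reducing an orbital optimal POVM to an orbital projective measurement) will require a separate argument exploiting the structure of optimal minimum-error discrimination rather than pure averaging.
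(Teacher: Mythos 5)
Your proof is correct and is exactly the standard group-averaging argument that the paper invokes by citing Eldar--Megretski--Verghese (Section 4.3 of that reference), where the case $X=G$ is treated; the paper does not write out the details, but your symmetrization $\tilde{E}_x = \frac{1}{|G|}\sum_g g^{-1}E_{g\cdot x}\,g$, the verification that it is an orbital POVM, and the linearity-of-trace computation showing the success probability is unchanged are precisely the intended argument. Your closing remark is also accurate: the reduction to a \emph{projective} orbital measurement is handled separately in the paper (Lemma \ref{equivMsmt}) by an equivariant Naimark-type dilation rather than by averaging.
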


\begin{proof} Eldar, Megretski, Verghese  give the proof when $X = G$ with the action of left multiplication (\cite{EMV:2004}, Section 4.3) and it works in this setting as well. We give the proof for the reader's convenience. Suppose $\{E_x\}_{x \in X}$ is an optimal measurement. Then we define new measurement operators $\{\widehat{E}_x\}_x$ by
$$
\widehat{E}_x = \frac{1}{|G|}\sum_{g \in G}gE_{g^{-1} \cdot x}g^{-1}.
$$
We claim that $\{\widehat{E}_x\}$ is an orbital POVM which discriminates the states $\{\rho_x\}$ with the same success probability as $\{E_x\}$. Each operator $\widehat{E}_x$ is a nonnegative combination of positive semi-definite operators, hence is positive semi-definite. They satisfy the completeness relation:
\begin{equation*}
    \sum_{x \in X} \widehat{E}_x = \frac{1}{|G|}\sum_{\substack{x \in X \\ g \in G}}g E_{g^{-1} \cdot x}g^{-1} = \frac{1}{|G|}\sum_{g \in G}I = I.
\end{equation*}
The completness relation for $\{E_x\}$ is used in the second equality. We check that the POVM $\{\widehat{E}_x\}$ is orbital:
\begin{equation*}
    h\widehat{E}_x h^{-1} = \frac{1}{|G|}\sum_{g \in G} hgE_{g^{-1} \cdot x}g^{-1}h^{-1} = \frac{1}{|G|}\sum_{k \in G}kE_{k^{-1}h \cdot x}k^{-1} = \widehat{E}_{h \cdot x}.
\end{equation*}
To complete the proof it suffices to show that the new measurement discriminates the states $\{\rho_x\}$ with the same probability as the original measurement. Indeed, we have
\begin{align*}
\frac{1}{|X|}\sum_{x \in X}\Tr(\widehat{E}_x \rho_x) &= \frac{1}{|X||G|}\sum_{x \in X}\sum_{g \in G}\Tr(gE_{g^{-1} \cdot x}g^{-1} \rho_x) = \frac{1}{|X||G|}\sum_{g \in G}\sum_{x \in X}\Tr(E_{g^{-1} \cdot x}\rho_{g^{-1} \cdot x}) = \\
&= \frac{1}{|G|} \sum_{g \in G} \left(\frac{1}{|X|} \sum_{y \in X} \Tr(E_y \rho_y)\right) = \frac{1}{|X|}\sum_{y \in X}\Tr(E_y \rho_y).
\end{align*}
The second equality follows from the orbital assumption $\rho_{g^{-1} \cdot x} =g^{-1} \rho_x g$, and the other steps are index substitutions.
\end{proof} 

For the next result we use the following fact (cf. \cite{NC:2011}, Exercise 2.67):
\begin{lem}
Suppose $V$ is a unitary representation of $G$, $W$ a subrepresentation, and $C: W \to V$ a $\C G$-module map which preserves inner products. Then $C$ can be extended to $V$, meaning there is a unitary $\C G$-module isomorphism $U: V \to V$ such that $U$ coincides with $C$ on $W$.
\end{lem}
\begin{proof}
Let $Y$ denote the orthogonal complement of $W$ and $Y'$ the orthogonal complement of $C(W)$. Since $C$ preserves inner products, it is injective, so $C(W) \cong W$ as $\C G$-modules. Hence $Y \cong Y'$ as $\C G$-modules, and there exists an inner product preserving isomorphism $D: Y \to Y'$. Now the desired unitary operator $U$ is given by
$$
U(x) = \begin{cases} C(x) & x \in W \\
D(x) & x \in Y.
\end{cases}
$$
\end{proof}

The following result is an equivariant version of the argument given by Chuang and Nielsen to show that arbitrary measurement operators can be simulated using projective measurements and ancilla spaces (see \cite{NC:2011}, Section 2.2.8). 
\begin{lem}\label{equivMsmt}
Suppose $\{E_x\}$ is an orbital POVM on the space $V$. Then there exists a unitary representation $W$ and an inner product preserving $\C G$-module embedding
$$
\iota: V \to W
$$
together with a projective orbital measurement $\{\overline{E}_x\}$ on $W$ such that for any state $|\psi \rangle$, the measurement statistics by measuring $|\psi \rangle$ with $\{E_x\}$ are identical to those given by the state $\iota |\psi \rangle$ and measurement $\{\overline{E}_x\}$.
\end{lem}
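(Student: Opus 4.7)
The plan is to mimic the standard Neumark-style dilation (as presented in Nielsen and Chuang, Section 2.2.8) while tracking the $G$-action at every step. Since $X$ is a $G$-set, the permutation module $\C X$ carries a natural unitary representation of $G$, and I will take $W := V \otimes \C X$. On this space the natural candidates for a projective orbital measurement are
\[
\overline{E}_x := I_V \otimes |x\rangle\langle x|,
\]
which are pairwise orthogonal projections summing to $I_W$, and which satisfy $(g\otimes g)\,\overline{E}_x\,(g\otimes g)^{-1} = \overline{E}_{gx}$ because conjugation by $g$ on $\C X$ sends $|x\rangle\langle x|$ to $|gx\rangle\langle gx|$.

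For the embedding, the natural choice is
\[
\iota(|\psi\rangle) := \sum_{x \in X} \sqrt{E_x}\,|\psi\rangle \otimes |x\rangle.
\]
That $\iota$ is an isometry follows immediately from $\sum_x E_x = I_V$, and the measurement-statistics identity $\langle \iota\psi|\,\overline{E}_x\,|\iota\psi\rangle = \langle \psi | E_x | \psi\rangle$ is a direct calculation.

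The one step that really uses the orbital hypothesis, and the only place where anything could go wrong, is verifying that $\iota$ is $\C G$-linear. The crucial observation is that the orbital relation $g E_x g^{-1} = E_{gx}$ for the positive operators $E_x$ passes to their unique positive square roots, giving $g\sqrt{E_x}g^{-1} = \sqrt{E_{gx}}$; this is because $A \mapsto gAg^{-1}$ is a $*$-automorphism of $\End(V)$ and so commutes with the continuous functional calculus. Granting this, a one-line reindexing $x \mapsto g^{-1}x$ in the sum defining $(g\otimes g)\iota(\psi)$, combined with $\sqrt{E_{g^{-1}x}} = g^{-1}\sqrt{E_x} g$, gives $(g\otimes g)\iota(\psi) = \iota(g\psi)$, so $\iota$ is a $\C G$-module embedding.

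The main (mild) obstacle is exactly this equivariance of square roots; once it is in hand, the rest is the textbook Neumark dilation transported verbatim to the $G$-equivariant setting, with the ancilla chosen to be the permutation module $\C X$ so that the projective measurement on the ancilla factor is automatically orbital.
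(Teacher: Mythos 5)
Your proof is correct, and it takes a genuinely different (and somewhat leaner) route than the paper's. The paper dilates into $W = V \otimes \C G$: it fixes measurement operators $M_x$ with $M_x^*M_x = E_x$, builds a controlled-measurement operator $C_{\M}|\psi,g\rangle = \sqrt{|X|}\,M_{g\cdot x_0}|\psi\rangle\otimes|g\rangle$, checks it is $\C G$-linear using the orbital relation $M_{g\cdot x_0}=gM_{x_0}g^{-1}$, extends its restriction to $V\otimes|\eta\rangle$ to a full unitary $U$ of $W$, and then conjugates the standard-basis projections $\sum_{g:\,g\cdot x_0=x} I\otimes|g\rangle\langle g|$ by $U$. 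You instead dilate into the smaller ancilla $\C X$, write the isometry $\iota(\psi)=\sum_x \sqrt{E_x}\,\psi\otimes|x\rangle$ explicitly, and take $\overline{E}_x = I\otimes|x\rangle\langle x|$ directly. The essential input is the same in both arguments---equivariance of the measurement operators under conjugation, which in your version is the statement $g\sqrt{E_x}g^{-1}=\sqrt{E_{gx}}$, correctly justified by uniqueness of the positive square root (or, equivalently, because conjugation by a unitary is a $*$-automorphism commuting with the functional calculus). What your version buys is that you never need to extend a partial isometry to an equivariant unitary on all of $W$ (a step the paper asserts with minimal justification), and the projective measurement is orbital by inspection rather than by a conjugation computation; the ancilla is also minimal ($\C X$ rather than $\C G$). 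Both proofs yield exactly the conclusion needed for Lemma \ref{MsmtInduced} downstream, so your argument is a valid substitute.
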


\begin{proof} Let $W$ be the space $V \otimes \C G$ and fix a basepoint $x_0$ of the $G$-set $X$.Let $M_x = (E_x)^{1/2}$ be the non-negative square root of $E_x$. The uniqueness of square roots implies that the set $\mathcal{M} = \{M_x\}$ is orbital. In addition, these constitute a set of measurement operators for the POVM, meaning $M_x^*M_x = E_x$. \\

Now let $C_{\M}$ be the {\it controlled-$\M$ operator} acting on $W$ via
$$
C_{\M}|\psi, g \rangle = \sqrt{|X|} M_{g \cdot x_0}|\psi \rangle \otimes |g \rangle.
$$
  Note that $C_{\M}$ is a $\C G$-module endomorphism of $W = V \otimes \C G$, since
\begin{align*}
C_{\M}(h \cdot |\psi, g \rangle) = C_{\M}|h \psi, hg \rangle &= \sqrt{|X|}M_{hg \cdot x_0}h|\psi \rangle \otimes |hg \rangle \\ 
&= \sqrt{|X|}hgM_{x_0}g^{-1}|\psi \rangle \otimes |hg \rangle = h \cdot C_{\M}|\psi, g \rangle.
\end{align*}
For the third equality we used the fact that $\M$ is orbital, i.e. $M_{g \cdot x_0} = gM_{x_0}g^{-1}$. Now $C_{\M}$ is not necessarily invertible, but we claim that $C_{\mathcal{M}}$ preserves inner products on the subspace $V \otimes |\eta \rangle$, where $|\eta \rangle = \frac{1}{\sqrt{|G|}}\sum_{g \in G}|g \rangle$ is the equal superposition vector in $\C G$:
\begin{align*}
    \langle C_{\M}\left(|\psi , \eta \rangle \right)|\ C_{\M}\ |\phi, \eta \rangle
    &= \frac{1}{|G|}\sum_{g,h \in G} \langle C_{\M}\left(|\psi, g \rangle\right) |\ C_{\M}\ |\phi, h \rangle \quad \quad \text{(by the def. of $|\eta \rangle$)} \\ &= \frac{|X|}{|G|}\sum_{g,h \in G}\langle M_{g \cdot x_0}|\psi\rangle \otimes | g \rangle\ |\ M_{h \cdot x_0}| \phi\rangle \otimes |h \rangle \rangle \quad \quad \text{(by the def. of $C_{\M}$)} \\
    &= \frac{|X|}{|G|}\sum_{g \in G} \langle \psi | E_{g \cdot x_0} | \phi \rangle \quad \quad \text{(since $\langle g | h \rangle = \delta_{gh}$ and $M_x^*M_x= E_{x}$)} \\ &= \frac{|X|}{|G|}\sum_{h \in H}\sum_{x \in X} \langle \psi | E_x |\phi \rangle \quad \quad \text{(by writing $g = th$ where $t \cdot x_0 = x)$} \\
    &= \frac{|X|}{|G|}\sum_{h \in H} \langle \psi | \phi \rangle = \langle \psi | \phi \rangle \quad \quad \text{(by the completeness relation and $|H| = |G|/|X|$).}
\end{align*}
Therefore, by the previous lemma, there exists a unitary $\C G$-module endomorphism $U$ which restricts to $C_{\M}$ on $V \otimes |\eta \rangle$. We are ready to define the embedding $\iota$ and measurement $\{\overline{E}_x\}$ that satisfy the claim of the theorem.
\\

We take $\iota$ to be the inclusion of $V$ as $V \otimes |\eta \rangle$:
$$
\iota |\psi \rangle = |\psi \rangle \otimes | \eta \rangle.
$$
Clearly $\iota$ is an inner product preserving $\C G$-module embedding. We define the projective measurement $\{\overline{E}_x\}$ by
$$
\overline{E}_x = U^{-1} \left(\sum_{g: g \cdot x_0 = x} I \otimes |g \rangle \langle g| \right)U.
$$
Here $I$ denotes the identity on $V$. The operators $\{\overline{E}_x\}$ constitute a projective measurement, and we check that they form an orbital set. Let $h \in G$. Then
\begin{align*}
h\overline{E}_{x}h^{-1} &= U^{-1}h\left(\sum_{g: g \cdot x_0  = x}I \otimes |g \rangle \langle g| \right)h^{-1}U \quad \quad \text{(since $U$ is a $\C G$-module map)} \\
&= U^{-1}\left(\sum_{g: g \cdot x_0 = x} I \otimes |hg \rangle \langle hg| \right)U \quad \quad \text{(since $h$ acts by $\pi_V(h) \otimes h$ on $W = V \otimes \C G$)} \\
&= U^{-1}\left(\sum_{k: k \cdot x_0 = h}I \otimes 
|k \rangle \langle k | \right)U
=\overline{E}_{h \cdot x}.
\end{align*}

Now suppose $\iota |\psi \rangle = |\psi \rangle \otimes |\eta \rangle$ is measured with the projective measurement $\{\overline{E}_x\}$. Then the probability of reading outcome $x$ is
\begin{align*}
\langle \psi, \eta &| \overline{E}_x | \psi, \eta \rangle = \langle \psi, \eta | U^{-1} \left(\sum_{g: g \cdot x_0 = x}I \otimes |g \rangle \langle g| \right)U|\psi, \eta \rangle \\
&= \big\langle C_{\M} (|\psi, \eta \rangle) | \left(\sum_{g: g \cdot x_0 = x}I \otimes |g \rangle \langle g|\right) C_{\M}|\psi, \eta \big \rangle \\
&= \frac{|X|}{|G|}\big\langle \left(\sum_{h \in G}M_{h \cdot x_0}|\psi \rangle \otimes |h \rangle \right) |\left(\sum_{g: g \cdot x_0 = x}I \otimes |g \rangle \langle g| \right) \left(\sum_{h' \in G}M_{h' \cdot x_0}|\psi \rangle \otimes |h' \rangle \right) \big\rangle \\
&= \frac{|X|}{|G|}\sum_{g: g \cdot x_0 = x}\langle M_{g \cdot x_0} \psi| M_{g \cdot x_0} \psi \rangle = \frac{|X|}{|G|}\sum_{g: g \cdot x_0 = x}\langle M_x \psi | M_x \psi \rangle \\
&= \langle \psi | E_x |\psi \rangle.
\end{align*}
The first three equalities are definitions, the fourth expands the multiplication, the fifth is notational and the last follows since the number of $g$ for which $g \cdot x_0 = x$ is equal to $|G|/|X|$ for all $x \in X$ (since $X$ is a transitive $G$-set). This proves the lemma. \end{proof}

As a result of the lemma, any orbital measurement to distinguish orbital states in a $\C G$-module $Y$ can be simulated by a projective orbital measurement in a larger $\C G$-module $W$. The next lemma explains that the existence of a projective orbital measurement implies a decomposition of the Hilbert space $W$ that realizes $W$ as a representation induced from $H$.

\begin{lem}\label{MsmtInduced} Suppose $\{E_x\}_{x \in X}$ is a projective orbital measurement on a $\C G$-module $W$. Let $W_x$ denote the image of $E_x$. Then $W_{f(e)}$ is an $H$-representation and $W \cong W_{f(e)}^{\uparrow}$.
\end{lem}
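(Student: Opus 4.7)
The plan is to verify the criterion for induced representations that was recalled just before Proposition~\ref{inducedProp1}: namely, that $W$ contains an $H$-invariant subspace $W_0$ satisfying $W = \bigoplus_t t W_0$ for $t$ ranging over a set of left coset representatives of $H$ in $G$. I will take $W_0 = W_{x_0}$, so the proof splits into checking $H$-invariance and the direct sum decomposition.

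First I would establish that $W_{x_0}$ is $H$-invariant. Since $H$ is the stabilizer of $x_0$, the orbital condition gives $h E_{x_0} h^{-1} = E_{h \cdot x_0} = E_{x_0}$ for every $h \in H$, so $h$ commutes with $E_{x_0}$. Consequently $h$ preserves $\im(E_{x_0}) = W_{x_0}$, exhibiting $W_{x_0}$ as an $H$-subrepresentation of $W^{\downarrow}$.

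Next I would verify the direct sum decomposition. Because $\{E_x\}$ is a projective POVM, the $E_x$ are pairwise orthogonal orthogonal projections summing to the identity, so $W = \bigoplus_{x \in X} W_x$ is an internal orthogonal direct sum. For any $g \in G$, the orbital condition yields
$$
g \cdot W_{x_0} = g\, \im(E_{x_0}) = \im\!\left( g E_{x_0} g^{-1} \right) = \im(E_{g \cdot x_0}) = W_{g \cdot x_0}.
$$
Since $X$ is a transitive $G$-set (the coset identification setup has $f$ onto and $G$-equivariant), as $t$ runs over a complete set of left coset representatives of $H$, the points $t \cdot x_0$ run without repetition over all of $X$. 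Therefore
$$
W \;=\; \bigoplus_{x \in X} W_x \;=\; \bigoplus_{t} W_{t \cdot x_0} \;=\; \bigoplus_{t} t \cdot W_{x_0}.
$$

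The main obstacle is essentially bookkeeping rather than conceptual: one must confirm that the summands $t \cdot W_{x_0}$ are genuinely disjoint (they are, by orthogonality of the $E_x$'s for distinct $x$) and that their span is all of $W$ (which follows because every $E_x$ appears as $t E_{x_0} t^{-1}$ for some coset representative $t$). Once both conditions are in place, the criterion from Section~5.1 immediately gives a $G$-equivariant isomorphism $W \cong W_{x_0}^{\uparrow}$, finishing the proof.
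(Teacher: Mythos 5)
Your proof is correct and follows essentially the same route as the paper: establish $H$-invariance of $W_{x_0}$ from the orbital condition, use orthogonality and completeness of the projections to get $W = \bigoplus_x W_x$, identify $W_{t\cdot x_0} = t\cdot W_{x_0}$, and invoke the characterization of induced representations from Section 5.1. Your version is slightly more careful than the paper's in making explicit where transitivity of $X$ is used.
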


\begin{proof} If $\{E_x\}$ is an orbital measurement then $hE_{f(e)}h^{-1} = E_{h \cdot f(e)} = E_{f(e)}$ for all $h \in H$, i.e. $E_{f(e)}$ is a $\C H$-module homomorphism. Hence the image of $E_{f(e)}$ is invariant under $H$. \\

Since the set $\{E_x\}_x$ constitutes a measurement, $W = \bigoplus_x W_x$. Furthermore, since $E_{g \cdot f(e)} = gE_{f(e)}g^{-1}$, we have $W_{g \cdot f(e)} = gW_{f(e)}$. Hence $W = \bigoplus_t tW_{f(e)}$ where the sum is over a set of left coset representatives for $H$. By the characterization of induced representations discussed in Section 5, this shows $W \cong W_{x_0}^{\uparrow}$. \end{proof} 

The lemmas above show that as long as we are willing to embed our original representation $V$ into a larger representation $W$, we may assume that $W$ is induced from some representation $Y$ of $H$ and that the measurement operators are projections corresponding to the direct sum decomposition of $W$ as an induced representation. In other words, the measurement operator corresponding to outcome $x \in X$ is projection onto $t \otimes Y$ where $t$ is any element such that $t \cdot f(e) = x$. The next lemma is the final key to unlocking the upper bound.

\begin{prop}\label{inducedProp2}
Suppose $Y$ is an irreducible unitary representation of $H$ (a subgroup of $G$). Let $V$ be a $G$-subrepresentation of $Y^{\uparrow}$. Let $E$ denote orthogonal projection onto the subspace $e \otimes Y \subset Y^{\uparrow}$. Then for any unit vector $\psi \in V$ we have
$$
\langle \psi | E | \psi \rangle \leq \frac{\dim V}{\dim Y^{\uparrow}}.
$$
\end{prop}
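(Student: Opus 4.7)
The plan is to reuse the key structural observation from the proof of Proposition \ref{inducedProp1}: namely, that the compressed operator $\Pi_V E \Pi_V$ (with $\Pi_V$ the orthogonal projection of $Y^{\uparrow}$ onto $V$) equals $\lambda\,\Pi_{Y'}$ for some $H$-invariant subspace $Y' \subseteq V$ and scalar $\lambda = \dim V/\dim Y^{\uparrow}$. Once that identity is in hand, the upper bound is a one-line consequence.

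First I would observe that $E$ is $H$-equivariant (since $H$ stabilizes the subspace $e \otimes Y$) and $\Pi_V$ is $G$-equivariant (since $V$ is a $G$-subrepresentation); therefore $\Pi_V E \Pi_V$ is a positive semidefinite $H$-equivariant operator on $V$. Its image lies in $\Pi_V(e \otimes Y)$, which as a quotient of the irreducible $H$-module $e \otimes Y \cong Y$ is either zero or isomorphic to $Y$. By Schur's lemma, $\Pi_V E \Pi_V$ then acts as a single nonnegative scalar $\lambda$ on its image, so $\Pi_V E \Pi_V = \lambda\,\Pi_{Y'}$ where $Y' := \Pi_V(e \otimes Y)$.

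Next I would pin down $\lambda$ exactly as in the proof of Proposition \ref{inducedProp1}. Using $\sum_t t E t^{-1} = I$ (summed over a set of coset representatives $t$ for $H$ in $G$) together with $G$-equivariance of $\Pi_V$, one gets $\Pi_V (tEt^{-1}) \Pi_V = \lambda\,\Pi_{t \cdot Y'}$, each summand of dimension $\dim Y$. Taking traces of the resulting identity $\sum_t \Pi_V t E t^{-1} \Pi_V = \Pi_V$ then yields $\lambda \cdot [G:H]\dim Y = \dim V$, i.e.\ $\lambda = \dim V/\dim Y^{\uparrow}$. (Incidentally, this forces $Y' \neq 0$ whenever $V \neq 0$, so the degenerate case causes no trouble; if $V = 0$ the proposition is vacuous.)

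With this in hand, the proposition follows in a single line: for any unit vector $\psi \in V$,
\[
\langle \psi \mid E\,\psi\rangle \;=\; \langle \psi \mid \Pi_V E \Pi_V\,\psi\rangle \;=\; \lambda\,\langle \psi \mid \Pi_{Y'}\,\psi\rangle \;\leq\; \lambda \;=\; \frac{\dim V}{\dim Y^{\uparrow}},
\]
since $\Pi_{Y'}$ is an orthogonal projection and $\|\psi\|=1$. There is essentially no hard step: Proposition \ref{inducedProp1} already does all of the representation-theoretic work by locating $\psi$ inside $Y'$ to saturate $\langle \psi \mid \Pi_{Y'}\psi\rangle = 1$, and Proposition \ref{inducedProp2} is the companion trivial inequality $\langle \psi \mid \Pi_{Y'}\psi\rangle \leq 1$ for arbitrary unit $\psi \in V$.
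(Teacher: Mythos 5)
Your proof is correct, and it takes a genuinely different route from the paper's. The paper proves Proposition \ref{inducedProp2} independently of Proposition \ref{inducedProp1}: it passes to the $Y$-isotypic component $W \cong Y \otimes \C^m$ of $Y^{\uparrow}$ restricted to $H$, replaces $|\psi\rangle\langle\psi|$ by its $H$-average $R_H(|\psi\rangle\langle\psi|)$, bounds that average by $\tfrac{1}{\dim Y}\Pi_V$ via the partial-trace description of $R_H$ on an isotypic component, and then evaluates $\Tr(E\,\Pi_V) = \dim V/|G:H|$ by averaging over coset representatives. You instead extract from the proof of Proposition \ref{inducedProp1} the single identity $\Pi_V E \Pi_V = \frac{\dim V}{\dim Y^{\uparrow}}\,\Pi_{Y'}$ and observe that the lower bound is the saturated case $\psi \in Y'$ while the upper bound is the trivial estimate $\Pi_{Y'} \leq I$; this is shorter and makes the two propositions visibly two halves of one statement, whereas the paper's argument is self-contained and showcases the averaging formalism it sets up in Eq.\ (\ref{eq-averagepart}). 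All the steps you use check out: $\Pi_V E \Pi_V = (E\Pi_V)^*(E\Pi_V)$ is positive semidefinite and $H$-equivariant, the trace computation against $\sum_t tEt^{-1} = I$ pins down $\lambda$, and the degenerate case $Y'=0$ forces $V=0$. The one spot worth a sentence more of care is the appeal to Schur's lemma: the image of the self-adjoint operator $\Pi_V E \Pi_V$ is an $H$-invariant subspace of $Y' = \Pi_V(e\otimes Y)$, hence equals $0$ or $Y'$ by irreducibility, and self-adjointness forces the operator to vanish on $(Y')^{\perp}$; that is what justifies writing it as $\lambda\,\Pi_{Y'}$. This is exactly the level of detail the paper itself elides in the proof of Proposition \ref{inducedProp1}, so it is a polish item, not a gap.
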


\begin{proof} Consider the action of $H$ on $Y^{\uparrow}$ and let $W$ denote the $Y$-isotypic component of $Y^{\uparrow}$. Then $W \cong Y \otimes \mathbb{C}^m$ as $H$-representations where $m$ is the multiplicity of the irrep $Y$ in $Y^{\uparrow \downarrow}$. Since $E$ is an $H$-invariant projection with image isomorphic to $Y$, we may assume that $\psi$ belongs to $W$ in addition to $V$. (Indeed, the support of $E$ is contained in $W$, so $E = \Pi_W E = E \Pi_W$, which implies $\langle \psi | E | \psi \rangle = \langle \Pi_W \psi | E \Pi_W | \psi \rangle$.) Fix an orthonormal basis $\{y_1, \dots, y_d\}$ of $Y$ so that we may write
$$
|\psi \rangle = \sum_{i=1}^d\lambda_i |y_i, u_i \rangle
$$
where the $u_i$'s are unit vectors in $\mathbb{C}^m$ and $\lambda_i \geq 0$ with $\sum_i \lambda_i^2 = 1$. We apply the averaging operator $R_H$ of Section \ref{sec:avgops} to the projection $| \psi \rangle \langle \psi |$. By Equation (\ref{eq-averagepart}) of Section \ref{sec:avgops} we have
$$
R_H(|\psi \rangle \langle \psi |) = \frac{1}{\dim Y}\sum_{i=1}^d \lambda_i^2 \left(I \otimes |u_i \rangle \langle u_i|\right).
$$
In particular $R_H(|\psi \rangle \langle \psi|) \leq \frac{1}{\dim Y}\Pi_W$. Note that since $|\psi \rangle \in V \cap W$, the support of $R_H(|\psi \rangle \langle \psi |)$ is also contained in the $H$-submodule $V \cap W$. Hence we deduce the stronger inequality
$$
R_H(|\psi \rangle \langle \psi|) \leq \frac{1}{\dim Y}\Pi_{V \cap W}.
$$

Now we may estimate $\langle \psi | E \psi \rangle$:
\begin{align*}
\langle \psi | E \psi \rangle &= \Tr(E |\psi \rangle \langle \psi |) = \Tr(R_H(E |\psi \rangle \langle \psi |) ) \quad \quad \text{ (since $R_H$ preserves traces}) \\
&= \Tr(E R_H(|\psi \rangle \langle \psi | ) \quad \quad (\text{since $E$ is $H$-invariant}) \\
&\leq \frac{1}{\dim Y} \Tr(E \Pi_{V \cap W})\\
&\leq \frac{1}{\dim Y} \Tr(E \Pi_V).
\end{align*}
Here $\Tr(E \Pi_V)$ can be computed by averaging over a set of coset representatives for $H$:
\begin{align*}
\Tr(E \Pi_V) &= \frac{1}{|G:H|}\sum_t \Tr(t E \Pi_V t^{-1})
\end{align*}
Using that $\Pi_V$ commutes with the action of $G$ and that $\sum_t tEt^{-1} = I$ we have
\begin{align*}
\Tr(E \Pi_V) &= \frac{1}{|G:H|}\sum_t \Tr(t E t^{-1}\Pi_V) = \\
&= \frac{1}{|G:H|}\Tr(\Pi_V) = \frac{\dim V}{|G:H|}.
\end{align*}
Therefore
$$
\langle \psi | E \psi \rangle \leq \frac{\dim V}{\dim Y |G:H|} = \frac{\dim V}{\dim Y^{\uparrow}}.
$$
\end{proof}

We are ready to prove the upper bound in Theorem \ref{CIThm}.

\begin{proof}[Proof of Theorem \ref{CIThm}, upper bound] Let $(G, V, \pi, f)$ specify an instance of coset identification and let $H$ denote the stabilizer of a chosen point $x_0 \in X$ (recall that the codomain of $f$ is a transitive $G$-set $X$). Suppose an optimal single-query algorithm is given by an input state $|\psi \rangle \in V$ (again we may assume there is no workspace by absorbing it into $V$) and POVM $\{\widehat{E}_x\}$. By Lemmas \ref{equivMsmt} and \ref{MsmtInduced}, there is a (not necessarily irreducible) representation $Y$ of $H$ and $\C G$-submodule of $Y^{\uparrow}$ isomorphic to $V$ (which we identify with $V$) such that the success probability of our algorithm is equal to the success probability of an algorithm using input state $|\psi \rangle \in V \subset Y^{\uparrow}$ and the projective measurement $\{tEt^{-1}\}_t$ where $E$ denotes orthogonal projection onto $e \otimes Y$ and $t$ ranges over a set of coset representatives for $H$.

Now decompose $Y$ into irreducible $H$-invariant orthogonal subspaces:
$$
Y = Y_1 \oplus \dots \oplus Y_r.
$$
Then $Y^{\uparrow} \cong \bigoplus_i Y_i^{\uparrow}$ as $\C G$-modules. Let $\Pi_i$ denote orthogonal projection onto $Y_i^{\uparrow}$. Then $|\psi \rangle$ can be decomposed as a combination of orthogonal unit vectors
$$
|\psi \rangle = \lambda_1 |\psi_1 \rangle + \dots + \lambda_r |\psi_r \rangle
$$
such that each $|\psi_i \rangle$ belongs to $Y_i^{\uparrow}$. Even more is true: since $\lambda_i |\psi_i \rangle = \Pi_i |\psi \rangle$ and $\Pi_i$ is a $\C G$-module map, we know $|\psi_i \rangle \in (Y_i^{\uparrow})_V$.

Note also that $E$ decomposes as $E = E_1 + \dots + E_r$ where $E_i$ is orthogonal projection onto $e \otimes Y_i$.

We are ready to bound the success probability of the algorithm. Recall that we are using the measurement $\{tEt^{-1}\}_t$ to distinguish the density operators $\{t \rho t^{-1}\}_t$ where $\rho = \frac{1}{|H|}\sum_{h \in H} h |\psi \rangle \langle \psi | h^{-1}$. Then
$$
P_{\rm succ} = \frac{1}{|G:H|}\sum_t \Tr((t \rho t^{-1}) t E t^{-1}) = \langle \psi | E | \psi \rangle.
$$
Now using the decomposition of $|\psi \rangle$ we have
$$
\langle \psi | E |\psi \rangle = \sum_{i=1}^r |\lambda_i|^2 \langle \psi_i | E_i |\psi_i \rangle.
$$
Now by Proposition \ref{inducedProp2} we have, for all $i$,
$$
\langle \psi_i | E_i |\psi_i \rangle \leq \frac{\dim (Y_i^{\uparrow})_V}{\dim Y_i^{\uparrow}}.
$$
Therefore
$$
P_{\rm succ} \leq \sum_i |\lambda_i|^2 \frac{\dim (Y_i^{\uparrow})_V}{\dim Y_i^{\uparrow}} \leq \max_{Y \in \Irr(H)} \frac{\dim Y^{\uparrow}_V}{\dim Y^{\uparrow}}.
$$ \end{proof}

\subsection{Query complexity}
We now know the success probability of an optimal single-query algorithm solving coset identification. As in Section 4, we combine this with the fact that an optimal $t$-query algorithm with access to the representation $V$ is the same as an optimal $1$-query algorithm to $V^{\otimes t}$ to determine the optimal success probability for $t$-query algorithms:

\begin{cor}\label{cor-tqueryopt} Let $(G, V, \pi, f)$ describe a case of coset identification. Then an optimal $t$-query algorithm succeeds with probability
$$
P_{\rm opt} = \max_{Y \in \Irr(H)} \frac{\dim Y^{\uparrow}_{V^{\otimes t}}}{\dim Y^{\uparrow}}.
$$
\end{cor}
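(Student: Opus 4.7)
The plan is to assemble this corollary from two already-established results, essentially as a one-line deduction. First I would invoke Corollary \ref{tQueryCor}, which reduces the analysis of a $t$-query algorithm for the instance $(G,V,f)$ to that of a single-query algorithm for the instance $(G, V^{\otimes t}, \pi^{\otimes t}, f)$: the optimal success probabilities of these two problems coincide. This step is unconditional — it uses the parallelization theorem (Theorem \ref{parallelqueries}) and Lemma \ref{tquery} — and requires no further argument here.

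Next I would apply Theorem \ref{CIThm} to the single-query problem $(G, V^{\otimes t}, \pi^{\otimes t}, f)$. The hypotheses of Theorem \ref{CIThm} are satisfied because coset identification is a property of the triple $(G, H, f)$ and does not depend on which representation is used to encode the oracle — the function $f: G \to X$ and the subgroup $H \leq G$ stabilizing $f(e)$ are the same in both instances. Theorem \ref{CIThm} then yields
$$
P_{\rm opt}^{(1\text{-query},\, V^{\otimes t})} \;=\; \max_{Y \in \Irr(H)} \frac{\dim (Y^{\uparrow})_{V^{\otimes t}}}{\dim Y^{\uparrow}}.
$$
Combining with the previous step gives the desired formula for the optimal $t$-query success probability on $(G, V, f)$.

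There is no real obstacle: everything is handled by the two prior theorems, and the only thing to check is that the subgroup $H$ appearing in Theorem \ref{CIThm} (the stabilizer of $f(e)$) is the same for the original instance and the tensored instance, which is immediate since $H$ depends only on $G$ and $f$, not on the representation. In the written proof I would simply state this chain of implications and cite \ref{tQueryCor} and \ref{CIThm}, with perhaps one sentence noting that replacing $V$ by $V^{\otimes t}$ does not alter $f$, $G$, or $H$.
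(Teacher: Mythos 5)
Your proposal is correct and matches the paper's own (implicit) proof exactly: the paper derives this corollary by combining Corollary \ref{tQueryCor} with Theorem \ref{CIThm}, precisely the two-step chain you describe. Your added remark that $H$ and $f$ are unchanged when $V$ is replaced by $V^{\otimes t}$ is a reasonable sanity check that the paper leaves unstated.
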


A straightforward consequence is the following:

\begin{thm}
Let $(G, V, \pi, f)$ describe a case of coset identification. Then the zero-error quantum query complexity of the problem is the minimum $t$ for which there exists some $Y \in \Irr(H)$ such that every irrep of $G$ appearing in $Y^{\uparrow}$ also appears in $V^{\otimes t}$.

The bounded error quantum query complexity is the minimum $t$ for which
$$
\max_{Y \in \Irr(H)} \frac{\dim (Y^{\uparrow})_{V^{\otimes t}}}{\dim Y^{\uparrow}} \geq 2/3.
$$
\end{thm}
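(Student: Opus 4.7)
The plan is to derive both statements as immediate corollaries of Corollary \ref{cor-tqueryopt}, which already provides a closed-form expression for the optimal $t$-query success probability,
\[
P_{\rm opt}(t) \;=\; \max_{Y \in \Irr(H)} \frac{\dim (Y^{\uparrow})_{V^{\otimes t}}}{\dim Y^{\uparrow}}.
\]
By definition, the bounded error quantum query complexity is the minimum $t$ for which some $t$-query algorithm has success probability at least $2/3$, and since $P_{\rm opt}(t)$ is the supremum over all such algorithms, the bounded error statement reduces to the inequality $P_{\rm opt}(t) \geq 2/3$. Substituting the formula from Corollary \ref{cor-tqueryopt} gives the characterization in the theorem verbatim.

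For the zero-error statement, I would first observe that each ratio $\dim(Y^{\uparrow})_{V^{\otimes t}}/\dim Y^{\uparrow}$ lies in $[0,1]$, since $(Y^{\uparrow})_{V^{\otimes t}}$ is by construction a subrepresentation of $Y^{\uparrow}$. Consequently $P_{\rm opt}(t) = 1$ if and only if there exists some $Y \in \Irr(H)$ for which the corresponding ratio equals one, i.e.\ $(Y^{\uparrow})_{V^{\otimes t}} = Y^{\uparrow}$. Unpacking the definition of the subscript notation $A_B$ from Section \ref{sec:5}, this equality is equivalent to $I_G(Y^{\uparrow}) \subseteq I_G(V^{\otimes t})$: every irreducible character of $G$ appearing in $Y^{\uparrow}$ must also appear in $V^{\otimes t}$. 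Taking the minimum such $t$ yields the stated zero-error complexity.

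There is no genuine obstacle at this stage, since the substantive content, namely the character-theoretic computation of $P_{\rm opt}(t)$, has already been established in Theorem \ref{CIThm} together with the parallelization result Corollary \ref{tQueryCor}. The present theorem is essentially a translation of Corollary \ref{cor-tqueryopt} from the language of success probabilities to the language of query complexity, with the only small step being the reformulation of the equality $(Y^{\uparrow})_{V^{\otimes t}} = Y^{\uparrow}$ as a containment of irreducible constituents.
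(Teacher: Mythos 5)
Your proposal is correct and matches the paper's treatment: the paper states this theorem as a direct consequence of Corollary \ref{cor-tqueryopt} without further argument, and your two observations (direct substitution for the bounded-error case, and the equivalence of $(Y^{\uparrow})_{V^{\otimes t}} = Y^{\uparrow}$ with containment of irreducible constituents for the zero-error case) are exactly the routine unpacking needed.
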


\section{New examples of coset identification}
\label{sec:6}
\subsection{Identifying the coset of the Klein 4 group}
Here we present an easy demonstration of the machinery of the previous section. Consider the symmetric group on $4$ letters $G = S_4$ with normal subgroup the Klein $4$-group $H = \{e, (12)(34), (13)(24), (14)(23)\}$. Given access to the defining permutation representation $V$ of $S_4$ we would like to identify which coset of $H$ our permutation belongs to. Classically this requires $2$ queries. To determine the quantum complexity we need to know the characters of $V$ and $S_4$. Of course $V$ is isomorphic to $\Z_2 \times \Z_2$ (say, using the generators $(12)(34)$ and $(13)(24)$) and has $4$ characters labelled $\psi_{\alpha, \beta}$ with $\alpha, \beta \in \{0, 1\}$. The group $S_4$ has $5$ characters parametrized by partitions of $4$, denoted $\chi_{[4]}, \chi_{[3,1]}, \chi_{[2^2]}, \chi_{[2, 1^2]}$ and $\chi_{[1^4]}$. The restriction/induction rules are conveniently described in a {\it Bratteli diagram} (Figure \ref{bratteli}). \\

\begin{figure}[h!]
\centering
\includegraphics[scale=.8]{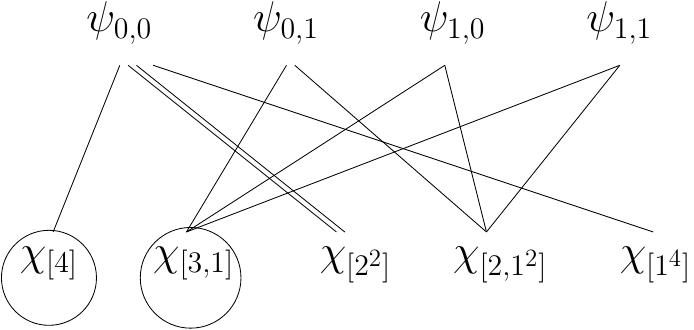}
\caption{Restriction/induction rules for $H < S_4$. The irreps appearing in $V$ are circled.}\label{bratteli}
\end{figure} 

The diagram indicates, for instance, that $\chi_{[2,1^2]}^{\downarrow} = \psi_{0,1} + \psi_{1,0} + \psi_{1,1}$ and $\psi_{0,0}^{\uparrow} = \chi_{[4]} + 2\chi_{[2^2]} + \chi_{[1^4]}$. Finally, we are given access to the defining permutation representation of $S_4$ which decomposes as $V = \chi_{[4]} + \chi_{[3,1]}$.

To find the optimal success probability of a single-query algorithm to determine which coset of $H$ a permutation belongs to, we examine the irreps of $H$ appearing in $V$. From the diagram we see that every irrep of $H$ appears in $V$, so we look at each one. First consider the trivial representation $\psi_{0,0}$. The only irrep of $S_4$ that appears in both $V$ and $\psi_{0,0}^{\uparrow}$ is $\chi_{[4]}$, which contributes a one dimensional subspace to the 6 dimensional $\psi_{0,0}^{\uparrow}$. Therefore using the irrep $\psi_{0,0}$ gives a success probability of $1/6$. Now consider $\psi_{0,1}$. In this case only $\chi_{[3,1]}$ appears in both $V$ and $\psi_{0,1}^{\uparrow}$, and it contributes $3$ dimensions to the 6 dimensional $\psi_{0,1}^{\uparrow}$. Therefore the success probability using this irrep is $3/6 = 1/2$. The other characters $\psi_{1,0}$ and $\psi_{1,1}$ give the same ratio so the optimal success probability of a single-query quantum algorithm is $1/2$ (note a single-query classical algorithm can do no better than probability $1/6$). 

That the optimal 2-query success probability is 1 can be verified using the fact that $V^{\otimes 2}$ contains a copy of every irrep of $S_4$ except the sign representation, and so using any of the irreps $\psi_{0,1}, \psi_{1,0}, \psi_{1,1}$ we can achieve probability 1.
\subsection{An action of the Heisenberg Group}

We now consider a natural action of the Heisenberg group over a finite field for which the oracle identification problem achieves a significant quantum speedup over the best classical algorithm. For this action, we also show that a single query suffices to solve the coset identification problem, where the chosen subgroup $H$ is the center of the group.

Specifically, let $p$ be prime and let $n$ be a positive integer.   Let $G=G(p,n)$ denote the Heisenberg group of all $(n+2)$-by-$(n+2)$ matrices with entries in $\mathbb{Z}_p$, $1$'s on the main diagonal and whose only other nonzero entries are in the first row and last column.  Such matrices are in correspondence with triples $(x,y,z)$, with $x,y\in \Z_p^n$ and $z\in \Z_p$, where $(1,x,z)$ is the first row of the matrix and $(z,y,1)$ is the last column of the matrix.  Then $G(p,n)$ is a $p$-group of order $p^{2n+1}$.

We consider the usual action of $G(p,n)$ on the set $X=\Z_p^{n+2}$, considered as column vectors, by matrix-vector multiplication.  The corresponding classical oracle identification problem turns out to have complexity $b(G)=n+1$.  To see this note that $y$ and $z$ can be determined by the single query $(0,\dots,0,1)$.  Further queries give affine conditions on $x$, and it requires at least $n$ of these to determine the value of $x$.

In contrast to the $n+1$ queries needed to solve this question classically, we now show that a single quantum query suffices to solve the problem with high probability, and that two queries suffice to solve the problem with certainty.

\begin{thm}\label{HeisOIP} Let $G(p,n)$ denote the Heisenberg group defined above acting by multiplication on the set of column vectors $X=\Z_p^{n+2}$.  Then an optimal single query quantum algorithm solves the oracle identification problem with probability
$$
P_{\rm opt} =1-\frac{1}{p} +\frac{2}{p^{n+1}}-\frac{1}{p^{2n+1}}.
$$
Furthermore, two queries suffices to solve the oracle identification problem with probability 1.
\end{thm}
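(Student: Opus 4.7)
The plan is to apply Theorem~\ref{mainSOD} to reduce the problem to a character-theoretic count. Oracle identification is the special case $H=\{e\}$ of coset identification, so the optimal $t$-query success probability is $d_{V^{\otimes t}}/|G|$, where $V = \C \Omega = \C[\Z_p^{n+2}]$ is the permutation representation and $d_W = \sum_{\chi \in I(W)} \chi(e)^2$. Thus both parts of the theorem reduce to identifying $I(V)$ and $I(V^{\otimes 2})$.

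First I record the irreducibles of $G = G(p,n)$. One computes $[G,G] = Z$, where $Z = \{(0,0,z) : z \in \Z_p\}$ is the center, of order $p$; hence $G/Z \cong \Z_p^{2n}$ and $G$ has $p^{2n}$ one-dimensional representations $\chi_{a,b}(x,y,z) = \omega^{a \cdot x + b \cdot y}$ (with $\omega$ a primitive $p$th root of unity). The usual dimension count then forces the remaining irreducibles to be $p-1$ Stone--von Neumann representations $\pi_\psi$ of dimension $p^n$, indexed by the nontrivial characters $\psi$ of $Z$. Next I analyze the $G$-orbits on $\Omega$: since $v_{n+2}$ is preserved by $G$, there are three orbit types---$p$ fixed points (with $v_{n+2}=0$ and $v_2=\cdots=v_{n+1}=0$), $p^n-1$ orbits of size $p$ parametrized by nonzero $w=(v_2,\ldots,v_{n+1})\in\Z_p^n$ (when $v_{n+2}=0$), and $p-1$ orbits of size $p^{n+1}$ parametrized by the nonzero values of $v_{n+2}$. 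Their stabilizers are, respectively, $G$ itself, the hyperplane $H_w = \{(x,y,z) : x \cdot w = 0\}$ of order $p^{2n}$, and a Lagrangian complement to $Z$ conjugate to $\{(x,0,0)\}$, of order $p^n$.

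Decomposing $V = \bigoplus_O \mathrm{Ind}_{H_O}^G \mathbbm{1}$, Frobenius reciprocity identifies $I(V)$ as the set of irreps having a nonzero invariant under some $H_O$. For the one-dimensional characters this is immediate from the structure of the stabilizers: $\chi_{a,b}$ appears in $V$ iff $a=0$ or $b=0$, giving $2p^n-1$ such characters. For the $\pi_\psi$, the stabilizers containing $Z$ kill all invariants (since $\psi\neq 1$), but the Lagrangian stabilizer $\{(x,0,0)\}$ meets $Z$ trivially, and restricting $\pi_\psi$ to it yields the regular representation of $\Z_p^n$ (this is the main technical step, readily proved using the realization of $\pi_\psi$ as $\mathrm{Ind}_A^G\lambda$ for the maximal abelian subgroup $A = \{(x,0,z)\}$ and an application of Mackey); in particular the trivial character of $\{(x,0,0)\}$ appears with multiplicity one, so every $\pi_\psi$ lies in $I(V)$. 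Hence $d_V = (2p^n-1) + (p-1)p^{2n}$, and dividing by $|G|=p^{2n+1}$ and simplifying gives the stated $P_{\rm opt}$. For the two-query claim, the irreducibles missing from $I(V)$ are precisely the $\chi_{a,b}$ with both $a$ and $b$ nonzero; writing $\chi_{a,b}=\chi_{a,0}\cdot\chi_{0,b}$ as a product of characters already present in $I(V)$ shows $\chi_{a,b}\in I(V^{\otimes 2})$, so $I(V^{\otimes 2})=\Irr(G)$ and $P_{\rm opt}=d_{V^{\otimes 2}}/|G|=1$. The main obstacle I anticipate is the Mackey calculation showing that $\pi_\psi$ restricts to the regular representation on the Lagrangian stabilizer; everything else is bookkeeping with one-dimensional characters and stabilizers.
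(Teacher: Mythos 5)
Your proposal is correct and reaches exactly the same description of $I(V)$ as the paper (all irreducibles except the $\chi_{a,b}$ with both $a,b$ nonzero), the same value of $d_V=(2p^n-1)+(p-1)p^{2n}$, and the identical two-query argument via $\chi_{a,b}=\chi_{a,0}\otimes\chi_{0,b}$. The route to $I(V)$ differs in mechanics. The paper computes the permutation character $\chi_V$ directly by counting fixed points (via the rank of $A-I$), then evaluates $\langle\chi_V,\phi\rangle$ for each irreducible $\phi$ by a single inner-product computation, using the trick of subtracting the constant $p^n$ from $\chi_V$ so that only the elements with $x=0$ or $y=0$ contribute. You instead decompose $X$ into $G$-orbits, write $V=\bigoplus_O \mathrm{Ind}_{H_O}^G\mathbbm{1}$, and apply Frobenius reciprocity, which reduces everything to computing invariants of the three stabilizer types; the one genuinely representation-theoretic input is that the Stone--von Neumann representation restricted to the Lagrangian $\{(x,0,0)\}$ is the regular representation of $\Z_p^n$. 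With the paper's explicit model $[\rho_c(x,y,z)f](w)=\omega^{c(y\cdot w+z)}f(w+x)$ this last step is immediate (restricting to $(x,0,0)$ gives pure translation on functions on $\Z_p^n$), so the Mackey machinery you flag as the main obstacle is not actually needed. Your approach buys a structural explanation of \emph{why} each irrep appears (which orbit contributes it, e.g.\ the $\chi_{a,0}$ come from the small orbits and the $\chi_{0,b}$ and $\rho_c$ from the large ones), at the cost of a case analysis of stabilizers; the paper's computation is shorter but more opaque. Both are complete proofs.
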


We will prove this theorem shortly.  Before doing so, let us consider a related coset identification problem. Let $H<G(p,n)$ be the subgroup in which $x=y=0$. Then $H$ is a subgroup of order $p$, and in fact $H$ is the center of $G(p,n)$.  The coset identification problem with respect to this subgroup $H$ asks us to determine the values of $x$ and $y$.  In the classical case, $n+1$ queries are again required. However this time, a single quantum query solves the coset identification problem with certainty.

\begin{thm}\label{HeisCIP} Let $G=G(p,n)$ denote the Heisenberg group acting by multiplication on the set of column vectors $X=\Z_p^{n+2}$.  Let $H$ be center of $G$, the set of all matrices in $G$ for which $x=y=0$. Then the coset identification problem can be solved with a single quantum query with probability 1.
\end{thm}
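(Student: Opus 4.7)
The plan is to apply Theorem~\ref{CIThm} (in the single-query case) and exhibit an irrep $\psi \in \Irr(H)$ for which the ratio $\dim (\psi^{\uparrow})_V / \dim \psi^{\uparrow}$ equals~$1$, i.e.\ for which every irreducible constituent of $\psi^{\uparrow}$ already appears in $V = \mathbb{C} X$. Since $H$ is the center and $G/H \cong \Z_p^{2n}$ is abelian, the representation theory of $G(p,n)$ is very constrained, which is what drives the argument.

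First I would recall the classification of irreducible representations of the Heisenberg group $G = G(p,n)$. Writing $H$ for the center, the irreps split into two families: the $p^{2n}$ one-dimensional characters factoring through $G/H$ (each trivial on $H$), and, for each nontrivial character $\psi$ of $H$, a unique irrep $\rho_\psi$ of dimension $p^n$ with central character $\psi$. A dimension count $p^{2n} \cdot 1 + (p-1)(p^n)^2 = p^{2n+1} = |G|$ confirms this list is complete. Consequently, for nontrivial $\psi$ the induced representation $\psi^{\uparrow}$ has dimension $[G:H] = p^{2n}$ and central character $\psi$, so it must be isotypic of type $\rho_\psi$; in particular $I(\psi^{\uparrow}) = \{\rho_\psi\}$. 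This is the key structural fact: the question ``does every irrep in $\psi^{\uparrow}$ appear in $V$?'' collapses to ``does the single irrep $\rho_\psi$ appear in $V$?''.

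Next I would show that $\rho_\psi$ appears in $V$ for every nontrivial $\psi$. Writing column vectors as $(a, b, c)^T$ with $a, c \in \Z_p$ and $b \in \Z_p^n$, one checks from the matrix form of $G$ that the last coordinate $c$ is preserved by the $G$-action, so $V$ decomposes as a $G$-module as $V = V_0 \oplus \bigoplus_{c \in \Z_p^*} V_c$ where $V_c = \mathrm{span}\{|(a,b,c)\rangle : a \in \Z_p, b \in \Z_p^n\}$. A central element acts by $(a,b,c) \mapsto (a + zc, b, c)$, so on $V_c$ with $c \neq 0$ the group $H$ acts as $p^n$ copies of the regular representation (indexed by $b$), and every character of $H$ appears in $V_c^{\downarrow}$ with multiplicity $p^n$. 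Because $H$ is central, the $\psi$-isotypic component of $V_c$ is $G$-stable (one uses $hgv = ghv = \psi(h)gv$), and by uniqueness of $\rho_\psi$ as the only irrep of $G$ with central character $\psi$, this isotypic component must be isomorphic to $\rho_\psi$. Hence $\rho_\psi \in I(V)$.

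Combining these two steps: for any nontrivial $\psi \in \Irr(H)$ we have $I(\psi^{\uparrow}) = \{\rho_\psi\} \subseteq I(V)$, so $(\psi^{\uparrow})_V = \psi^{\uparrow}$ and the ratio in Theorem~\ref{CIThm} equals $1$. Therefore $P_{\rm opt} = 1$, which means there is a single-query algorithm succeeding with certainty. The only mildly nontrivial step is the representation-theoretic input (classification of irreps of $G(p,n)$ and monotypicality of $\psi^{\uparrow}$); once that is in hand, the $H$-orbit analysis on $X$ is elementary.
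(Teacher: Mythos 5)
Your proposal is correct and follows essentially the same route as the paper: a nontrivial character $\psi$ of the center induces up to $p^n$ copies of the single irrep $\rho_\psi$, which appears in $V$, so the ratio in Theorem~\ref{CIThm} equals $1$. The only difference is that the paper establishes $\rho_\psi \in I(V)$ via the character inner-product computation $\langle \chi_V, \theta_c\rangle = p-1$ already carried out for Theorem~\ref{HeisOIP}, whereas you derive it directly from the orbit structure of the central $H$-action on the column vectors; both are valid.
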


In order to prove these theorems, we must understand the representation theory of $G=G(p,n)$, which we now describe briefly (for a concise and elegant review, see the letter by M. Isaacs to P. Diaconis published in the appendix to \cite{diaconis:2010}).
The group $G$ has $p^{2n}$ one-dimensional irreducible representations and $p-1$ irreducible representations 
of dimension $p^n$.  The one-dimensional representations will be denoted $\chi_{\alpha, \beta}$, indexed by tuples $\alpha, \beta\in \Z_p^n$. We identify these representations with their characters which are given by the formula
$$
\chi_{\alpha, \beta}(x,y,z) = \omega^{\alpha\cdot x+\beta\cdot y},
$$
with $\omega$ denoting a primitive $p$-th root of unity.

The $p^n$ dimensional representations denoted $\rho_c$, with $c\in\Z_p, c\neq 0$ are described as follows.     Let $U$ be the vector space of all complex-valued functions on $(\Z_p)^n$.   Fix  $c\in \Z_p$ with $c\neq 0$.  Then there is an irreducible representation 
$\rho_c$ of $G$ on $U$ given by
$$
[\rho_c(x,y,z)f](w) = \omega^{c(y\cdot w+z)} f(w+x).
$$
The character of this representation is given by
$$
\theta_c(x,y,z)=
\begin{cases}
p^n\omega^{cz} & \text{if }  x=y=0,\\
0 & \text{otherwise. }
\end{cases}
$$

In order to understand the query complexity of the oracle identification problem we must decompose the representation $V=\C^X$ into irreducible representations.  Since this representation comes from a permutation representation of $G$, each character value $\chi_V(x,y,z)$ is simply the number of fixed points of  the matrix $A=(x,y,z)$.  This number of fixed points is determined by the rank of the matrix
$A'= A - I$.  If $(x,y,z) =0$, then $A'$ has rank 0, and if $x$ and $y$ are both nonzero, then $A'$ has rank $2$.  In all other cases $A'$ has rank $1$.  We thus obtain the following character values of our given permutation representation $V$:
$$
\chi_V(x,y,z)=\begin{cases}
p^{n+2} & \text{if } (x,y,z) = (0,0,0) \\
p^{n} & \text{if } x\neq0 \text{ and } y\neq 0\\
p^{n+1} & \text{otherwise.}
\end{cases}
$$

To find the number of copies of the trivial representation $\chi_{0,0}$ appearing in $\chi_V$, we simply average these values and obtain $\langle \chi_V, \chi_{0,0}\rangle= p^n + 2(p-1)$.

Now let $\phi$ be any nontrivial irreducible character of $G$.  We compute the number  $\langle \chi_V, \phi \rangle$ of copies of $\phi$ appearing in $\chi_V$ as follows
\begin{align*}
\langle \chi_V, \phi \rangle &= \frac1{|G|}\sum_{(x,y,z)\in G} \chi_V(x,y,z) \phi(x,y,z) = \frac1{|G|}\sum_{(x,y,z)\in G}(\chi_V(x,y,z) - p^n) \phi(x,y,z) \\
&=\frac1{|G|}\sum_{(x,y,z)'} (p^{n+1} - p^n) \phi(x,y,z) + (p^{n+2} - p^n) \phi(0,0,0) \\
&= \frac{p-1}{p^{n+1}} \biggl[ \sum_{(x,y,z)'} \phi(x,y,z) + (p+1)\phi(0,0,0)\biggr]
\end{align*}
where $(x,y,z)'$ indicates a sum over those $(x,y,z)$ such that $x=0$ or $y=0$, but $(x,y,z)\neq(0,0,0)$. In the first line, we used the fact that if $\phi$ is nontrivial then $0 = \langle \phi, \mathbbm{1} \rangle = 1/|G| \sum_{(x,y,z) \in G} \phi(x,y,z)$.

Taking $\phi=\theta_c$ in this formula, we conclude that $V$ contains $p-1$ copies of $\rho_c$.  Taking $\phi=\chi_{\alpha,\beta}$, we get
$$
\langle \chi_V, \chi_{\alpha,\beta} \rangle =
\begin{cases}
p-1 &\text{if } \alpha=0 \text{ or } \beta=0, \text{ but not both}\\
0 & \text{if } \alpha\neq0 \text{ and } \beta\neq 0.
\end{cases}
$$

We conclude that our $V$ contains copies of all irreducible representations of $G$ except the $\chi_{\alpha,\beta}$ for which both $\alpha$ and $\beta$ are nonzero.  The optimal single-query quantum success probability is thus given by
$$
P_{\rm{opt}}=\frac1{|G|}\biggl(|G| - \sum_{\alpha,\beta\neq 0}\ 1 \biggr) = 1-\frac{1}{p} +\frac{2}{p^{n+1}}-\frac{1}{p^{2n+1}},
$$
as claimed.

If two queries are allowed, we have access to the representation $V\otimes V$.  Noting that $\chi_{\alpha, \beta}=\chi_{\alpha,0}\otimes\chi_{0,\beta}$, it follows that $V\otimes V$ contains every irreducible representation of $G$.  Hence, there is a probability 1 algorithm with two quantum queries.

Finally, we turn our attention to the coset identification problem for the subgroup $H=\{(0,0,z)|z\in \Z_p\}$.  To see that there is a probability one algorithm, note that any of the nontrivial characters of $H$ induces up to $p^n$ times one of the $\rho_c$. Since $\rho_c$ is contained in $V$, it follows that the coset identification problem can be solved with one query.
\qed

\subsection{Guessing the sign of a permutation}
Suppose there is an unknown permutation $g\in G=S_n$ for some $n\geq 2$. We wish to learn the sign of $g$ using queries to the standard action of $S_n$ on $\{1,..., n\}$. This is an instance of the hidden coset problem where $H=A_n$. Classically, $n-1$ queries are necessary to determine the sign of $g$.  In fact, any fewer queries and we do not learn anything about the sign.  Quantumly, we have

\begin{thm}\label{guessthesign} Let $n\geq 2$ and consider the standard action of $S_n$  on $\{1, \dots, n\}$. Consider the hidden coset problem for the subgroup $H=A_n$.  That is we wish to determine the sign of a hidden permutation.   For exact learning, $t=\lfloor \frac{n}{2} \rfloor$ quantum queries suffice.  With any smaller number of quantum queries, one cannot do any better than random guessing ($p=1/2$.)
\end{thm}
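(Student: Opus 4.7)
The plan is to apply Corollary~\ref{cor-tqueryopt} with $H = A_n$, which reduces the problem to computing
$$P_{\rm opt}(t) = \max_{Y \in \Irr(A_n)} \frac{\dim(Y^\uparrow)_{V^{\otimes t}}}{\dim Y^\uparrow},$$
and then to combine this with two ingredients already in hand: the description $I(V^{\otimes t}) = \{V_\mu : \mu_1 \geq n - t\}$ from the proof of Theorem~\ref{guesstheperm}, and the branching rules for $A_n \leq S_n$ recalled in the proof of Theorem~\ref{guesstheeggplantperm}. Frobenius reciprocity then gives $(W_\lambda^\pm)^\uparrow = V_\lambda$ for self-conjugate $\lambda$, and $W_\lambda^\uparrow = V_\lambda \oplus V_{\lambda^*}$ (with equidimensional summands) for $\lambda \neq \lambda^*$. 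Thus the whole question becomes a combinatorial one about Young diagrams.

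For the upper bound, I would exhibit a self-conjugate partition of $n$ whose first row has length exactly $\lceil n/2 \rceil$: the hook $[k, 1^{k-1}]$ works when $n = 2k-1$ is odd, and $[k, 2, 1^{k-2}]$ works when $n = 2k \geq 4$; the case $n=2$ is handled separately using the trivial irrep of $A_2$. Setting $Y = W_\lambda^+$ then gives $Y^\uparrow = V_\lambda$, and since $\lambda_1 = \lceil n/2 \rceil = n - \lfloor n/2 \rfloor$ the irrep $V_\lambda$ lies in $V^{\otimes \lfloor n/2 \rfloor}$, producing ratio $1$.

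For the lower bound, I would split on the type of $Y \in \Irr(A_n)$. If $Y = W_\lambda^\pm$ with $\lambda = \lambda^*$, the hook bound $n \geq 2\lambda_1 - 1$ forces $\lambda_1 \leq \lceil n/2 \rceil$, so for $t < \lfloor n/2 \rfloor$ the irrep $V_\lambda$ fails to appear in $V^{\otimes t}$ and the ratio is $0$. If instead $Y = W_\lambda$ with $\lambda \neq \lambda^*$, then $(Y^\uparrow)_{V^{\otimes t}}$ contains at most one of $V_\lambda, V_{\lambda^*}$, giving ratio at most $1/2$; both summands appearing would force $\lambda_1 + \lambda_1^* \geq 2(n-t)$, which contradicts the universal bound $\lambda_1 + \lambda_1^* \leq n + 1$ when $t < \lfloor n/2 \rfloor$. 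Finally, the trivial irrep of $A_n$ always realises ratio exactly $1/2$, since $\mathbbm{1}^\uparrow$ is the trivial plus the sign representation of $S_n$ and only the trivial summand lies in $V^{\otimes t}$ as long as $t \leq n-2$. Hence the maximum is exactly $1/2$, matching random guessing.

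The main subtlety I foresee is the parity bookkeeping needed to promote $\lambda_1 + \lambda_1^* \leq n+1$ into the conclusion $t \geq \lfloor n/2 \rfloor$: the derived inequality $t \geq (n-1)/2$ is not tight for $n$ even, and one must separately verify in both parity classes that an integer $t$ satisfying it exceeds $\lfloor n/2 \rfloor - 1$. This is a minor but essential check, best isolated at the start of the lower-bound argument so that the two-case analysis of $Y$ proceeds uniformly thereafter.
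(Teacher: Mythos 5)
Your proposal is correct and follows essentially the same route as the paper: reduce via Corollary~\ref{cor-tqueryopt} and the $A_n$ branching rules to the question of when $V^{\otimes t}$ contains both $V_\lambda$ and $V_{\lambda^*}$, exhibit an explicit partition for $t=\lfloor n/2\rfloor$, and rule out smaller $t$ via the bound $\lambda_1+\lambda_1^*\leq n+1$. The only differences are cosmetic: for even $n$ you use the self-conjugate partition $[k,2,1^{k-2}]$ where the paper uses the conjugate pair $(n/2+1,1^{n/2-1})$, $(n/2,1^{n/2})$, and you spell out the lower-bound and parity bookkeeping that the paper leaves implicit.
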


\begin{proof}
For facts and notation about representations of $S_n$ and $A_n$, we refer the reader to the proofs of Theorems \ref{guesstheperm} and \ref{guesstheeggplantperm}.

Let $V$ be the defining representation of $S_n$, and suppose we use $t$ queries so that we have access to $V'=V^{\otimes t}$.  Suppose $\lambda$ is a non-self-conjugate partition such that $V'$ contains  $V_{\lambda}$. Letting $Y=W_{\lambda}$, we see that $Y^{\uparrow}$ consists of one copy of $V_{\lambda}$ and one copy of $V_{\lambda^*}$.  Hence the quotient of dimensions
$$
\frac{\dim\ (Y^{\uparrow})_{V'}}{\dim\ Y^{\uparrow}}
$$
equals 1 if $V'$ contains both $V_{\lambda}$ and $V_{\lambda^*}$ and $\frac12$ if $V'$ contains $V_{\lambda}$ but not $V_{\lambda^*}$.  Now consider a self-conjugate partition $\lambda$ contained in $V'$. In this case, if we take $Y=W_{\lambda}^+$, then $Y^{\uparrow}$ is $V_{\lambda}$.  Hence in this case the quotient of dimensions is 1.
 
We thus wish to find the smallest $t$ such that $V^{\otimes t}$ contains both $V_{\lambda}$ and $V_{\lambda^*}$ for some partition $\lambda$ (including the possibility that $\lambda$ is self-conjugate).  For such $t$, we will have a $t$-query probability 1 algorithm and for fewer queries we cannot do better than probability 1/2, which is random guessing.

For even $n$, the value $t=n/2$ produces the partition $\lambda=(n/2 + 1, 1, \dots, 1)$ (with $n/2 -1$ 1's) and its conjugate $\lambda^*=(n/2, 1, \dots, 1)$ (with $n/2$ 1's).   For odd $n$, the value $t=\frac{n-1}{2}$ produces the self conjugate partition $(\frac{n+1}{2}, 1, \dots, 1)$ (with $\frac{n-1}{2}$ 1's).  In either case $t=\lfloor \frac{n}{2} \rfloor$ gives a probability 1 success, and fewer queries give success probability $1/2$.
\end{proof}

\section{Previously studied examples of coset identification} 
\label{sec:7}

Here we discuss the relation of this work to preceding work. To the authors knowledge, every previously studied special case of the general coset identification problem uses oracles sampled from an abelian group. Zhandry \cite{zhandry:2015} addresses this problem (calling it the {\it oracle classification problem}) and provides an expression for the optimal success probability essentially identical to \ref{CIThm}. Thus our results are a non-abelian generalization of Zhandry's work, which was a key inspiration for the present paper. We briefly explain why Zhandry's result is equivalent to ours and then examine some other more specialized and well-known problems.

Coset identification for an abelian group is described by a tuple $(A, V, \pi, f)$ with $A$ abelian and $f: A \to X$ distinct and constant on the cosets of a subgroup $B$. We remark that since $B$ is a normal subgroup it is possible to identify $X$ with the quotient group $A/B$ and $f$ with the standard homomorphism $A \to A/B$. Hence coset identification in this instance may also be called {\it homomorphism evaluation.} By Cor. \ref{cor-tqueryopt} the optimal success probability for a $t$-query algorithm to determine $f(a)$ is
$$
P_{\rm opt} = \max_{Y \in \Irr(B)} \frac{\dim Y^{\uparrow}_{V^{\otimes t}}}{\dim Y^{\uparrow}}.
$$
Since $B$ is abelian, $Y$ is 1-dimensional and $Y^{\uparrow}$ decomposes as $|A:B|$ many distinct $A$-characters (corresponding to the characters of $A/B$). Hence $\dim Y^{\uparrow}_{V^{\otimes t}}$ (which by definition is the dimension of the maximal subspace of $Y^{\uparrow}$ containing only characters in $V^{\otimes t}$) is exactly equal to the number of shared irreps, i.e. the cardinality of the set $I(Y^{\uparrow}) \cap I(V^{\otimes t})$. As $Y$ varies, these sets partition $I(V^{\otimes t})$ into equivalence classes $[\chi]$, and by Frobenius reciprocity two characters are equivalent if and only if their restrictions to $B$ are identical. Hence the equation above can be restated:

\begin{thm}\label{thm-zhandry}(Zhandry, (\cite{zhandry:2015}, Theorem 4.1))
The optimal success probability of a $t$-query algorithm for abelian coset identification is
$$
P_{\rm opt} = \frac{1}{|A:B|} \max_{\chi \in I(V^{\otimes t})} | [\chi] |.
$$
\end{thm}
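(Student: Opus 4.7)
The plan is to specialize Corollary \ref{cor-tqueryopt} to the abelian setting and translate its dimensional formula into a counting formula. By that corollary we start from
$$
P_{\rm opt} = \max_{Y \in \Irr(B)} \frac{\dim Y^{\uparrow}_{V^{\otimes t}}}{\dim Y^{\uparrow}}.
$$
Since $B$ is abelian, every $Y \in \Irr(B)$ is a $1$-dimensional character, so $\dim Y^{\uparrow} = |A:B|$ and the denominator is independent of $Y$. The remaining task is to interpret the numerator as the cardinality of an explicit subset of $\Irr(A)$.

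For this step I would decompose $Y^{\uparrow}$ using Frobenius reciprocity. Since $A$ is abelian, every irreducible of $A$ is $1$-dimensional, and Frobenius reciprocity gives, for each $\chi \in \Irr(A)$,
$$
\langle \chi, Y^{\uparrow} \rangle_A = \langle \chi|_B, Y \rangle_B = \begin{cases} 1 & \text{if } \chi|_B = Y, \\ 0 & \text{otherwise.} \end{cases}
$$
A dimension count then forces $Y^{\uparrow}$ to be the multiplicity-free sum of the $|A:B|$ characters of $A$ that restrict to $Y$ on $B$. Because $Y^{\uparrow}_{V^{\otimes t}}$ is by definition the maximal subrepresentation of $Y^{\uparrow}$ whose irreducible constituents lie in $I(V^{\otimes t})$, and every constituent of $Y^{\uparrow}$ is $1$-dimensional, this subrepresentation is just the direct sum of those constituents which happen to appear in $V^{\otimes t}$:
$$
\dim Y^{\uparrow}_{V^{\otimes t}} = \bigl|\{\chi \in I(V^{\otimes t}) : \chi|_B = Y\}\bigr|.
$$

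Finally I would reindex the maximum. Setting $[\chi] := \{\chi' \in I(V^{\otimes t}) : \chi'|_B = \chi|_B\}$ for $\chi \in I(V^{\otimes t})$, the sets appearing above are either empty (when $Y$ is not the restriction of any $\chi \in I(V^{\otimes t})$, in which case the numerator vanishes and $Y$ can be dropped from the maximum) or coincide with $[\chi]$ for any representative $\chi$. Replacing the maximum over $Y \in \Irr(B)$ by a maximum over $\chi \in I(V^{\otimes t})$ with $Y = \chi|_B$ then gives
$$
P_{\rm opt} = \max_{\chi \in I(V^{\otimes t})} \frac{|[\chi]|}{|A:B|} = \frac{1}{|A:B|} \max_{\chi \in I(V^{\otimes t})} |[\chi]|,
$$
as claimed. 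There is no substantive obstacle: Corollary \ref{cor-tqueryopt} already carries out the quantum-algorithmic content, and the proof of this theorem is essentially bookkeeping, the one nontrivial input being the Frobenius-reciprocity identification of $Y^{\uparrow}$ as a multiplicity-free sum of the characters of $A$ lying above $Y$.
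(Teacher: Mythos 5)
Your proposal is correct and follows essentially the same route as the paper: specialize Corollary \ref{cor-tqueryopt}, use that $Y$ is one-dimensional so $\dim Y^{\uparrow}=|A:B|$, identify $Y^{\uparrow}$ via Frobenius reciprocity as the multiplicity-free sum of the characters of $A$ restricting to $Y$, and reindex the maximum by the equivalence classes $[\chi]$. No gaps.
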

Under this interpretation we're aiming to find the largest collection of characters appearing in $V^{\otimes t}$ which have the same restriction to $B$. Zhandry includes several nice applications of the previous theorem, explained in a linear algebraic framework. Below we readdress a couple of these problems (polynomial interpolation and group summation) using character theoretic language, and we revisit the van Dam algorithm \cite{vanDam:1998}.

\subsection{Polynomial interpolation} 

The polynomial interpolation problem as outlined by Zhandry \cite{zhandry:2015} and Childs, van Dam, Hung and Shparlinski \cite{CvDHS:2016} is as follows.  Let $F=\F_q$ where $q=p^r$ for some prime $p$.  Suppose we have an unknown polynomial $f(X)$ over $F$ of degree less than or equal to $d$ and we wish to determine $f$ using queries that provide the value $f(x)$ for $x\in F$.  That is access to $f$ is provided via the oracle $U_f$ acting on $V=\C^F\otimes \C^F$ by
$$
U_f:|x, s\rangle \mapsto  |x, s+f(x)\rangle.
$$
This equation defines a representation on $V$ of the group $G$ of all polynomials of degree less than or equal to $d$ under addition. 

We would like to see which of the characters of $G$ appear in this representation. Let $\omega$ be a primitive $p$-th root of unity and let $\Tr$ denote the trace map from $\F_q$ to $\F_p$. The characters of the additive group $F$ are given by $\chi_y$ with $y\in F$ defined by
$$
\chi_y(x) = \omega^{\Tr(yx)}.
$$
For $y\in F$, define the character state
$$
|\omega_y\rangle = \sum_{s\in F} \chi_y(-s)|s\rangle. 
$$
It is easy to see that
$$
U_f|x,\omega_y\rangle = \chi_y(f(x)) |x,\omega_y\rangle. 
$$
Thus if we let $V_{x,y}$ denote the 1-dimensional space spanned by $|x,\omega_y\rangle$, we have the decomposition
$$
V= \bigoplus V_{x,y}
$$
into irreducible representations.

The characters of $F^{d+1}$, which is isomorphic to $G$, are given, for $a\in F^{d+1}$, by  $\phi_a$, where
$$
\phi_a(c) = \omega^{\Tr(a\cdot c)}
$$
The character of $V_{x,y}$ is  $\phi_a$, with 
\begin{equation}\label{eq.yyx}
    a = (y, yx, yx^2, \dots, yx^d).
\end{equation}
To see this note that if $f(x)= \sum c_i X^i$, then
$$
\chi_y(f(x)) = \omega^{\Tr(yf(x))}= \omega^{c\cdot (y, yx, yx^2, \dots, yx^d) }
$$
Thus the irreps that appear in $V$ are exactly the $\phi_a$, where $a$ has the form in Equation \ref{eq.yyx}.  Since $\phi_a\otimes \phi_{a'} =\phi_{a+a'}$, it follows that the $k$-fold tensor power contains those $\phi_b$ where $b$ can be expressed as a $k$-fold sum of vectors of the form in Equation \ref{eq.yyx}.  This is exactly in image of the map $Z$ as described by Childs, van Dam, Hung and Shparlinski \cite{CvDHS:2016}, so we have reproved their Theorem 1.

The computation of the optimal success probability is now reduced to an algebraic/combinatorial problem which is nontrivial to solve (and is achieved in \cite{CvDHS:2016}). Hence this example serves to show the limitations of our main results: they can be used to translate questions about query complexity into purely algebraic problems which may or may not be easily solvable. The character theoretic technique shown above could also be used to reduce the query complexity of multivariable polynomial interpolation to a counting problem, as was achieved by Chen, Childs and Hung \cite{CCH:2018} without referring to characters. So far though, the character based language has not led to any progress on this problem.

\subsection{Group summation problem}
Fix an abelian group $G$. The $k$-element group summation problem is the task of computing the sum $f(1) + \dots + f(k)$ given access to an evaluation oracle hiding a function $f: \{1,2, \dots k\} \to G$.

This is an instance of coset identification. The oracles form a representation of the group of functions $\Fun([k], G) = \{ f: \{1, \dots, k\} \to G\}$, which we identify with $G^k$. In the quantum version they act on the Hilbert space $V = \C^k \otimes \C G$ via
$$
U_{f}|i, b \rangle = |i, b + f(i)\rangle.
$$
We wish to determine $\Sigma(f) := \sum_{i=1}^k f(i)$, which is the same as determining the coset of $f$ w.r.t. the subgroup
$$
H = \{f: \Sigma(f) = 0\} \leq \Fun([k], G).
$$
The irreducible characters of $\Fun([k], G)$ are all of the form $\chi_1 \times \dots \times \chi_k: G^k \to \C$, where each $\chi_i \in \Irr(G)$. The Hamming weight of such a character, denoted $\wt(\chi_1 \times \dots \times \chi_k)$, is the number of components which are nontrivial. The characters appearing in the evaluation representation on $V = \C^k \otimes \C G$ are exactly those with Hamming weight $\leq 1$. This implies that the characters appearing in $V^{\otimes t}$ are those with Hamming weight $\leq t$.

Next we consider the irreps of the subgroup $H$. We may $H$ with $G^{k-1}$ via
$$
(a_1, \dots, a_{k-1}, -(a_1 + \dots + a_{k-1})) \leftrightarrow (a_1, \dots, a_{k-1}).
$$
Hence irreps of $H$ may be written as $\tau_1 \times \dots \times \tau_{k-1}$ where again the $\tau_i$ are irreps of $G$. Using the above equation one verifies that two irreps $\chi_1 \times \dots \times \chi_k$ and $\eta_1 \times \dots \times \eta_k$ have the same restriction to $H$ if and only if there exists $\psi \in \Irr(G)$ such that
$$
\chi_1 \times \dots \times \chi_k = \psi \eta_1 \times \dots \times \psi \eta_k.
$$
By Zhandry's theorem (Thm. \ref{thm-zhandry}) the optimal success probability for a $t$-query algorithm is obtained by finding the largest collection of characters in $V^{\otimes t}$ which restrict to the same irrep of $H$. We can describe an element of such a maximal equivalence class: the character $\chi_1 \times \dots \times \chi_k$ should have at least $k-t$ trivial components (to guarantee its Hamming weight is $\leq t$), then $k-t$ components equal to some nontrivial character $\psi_1$ (so then $\chi_1 \psi_1^{-1} \times \dots \times \chi_k \psi^{-1}$ also has Hamming weight $\leq t$), another $k-t$ components equal to $\psi_2$, and so on. For instance, we may pick
$$
\chi = \mathbbm{1} \times \mathbbm{1} \times \dots \times \psi_1 \times \psi_1 \times \dots \times \psi_N \times \psi_N \dots
$$
where the characters $\mathbbm{1}, \psi_1, \dots, \psi_N$ are distinct (but otherwise arbitrary), and each one appears at least $k-t$ many times. Then the equivalence class of $\chi$ has  size $N+1$, consisting of the characters
\begin{align*}
[\chi] = \{&\chi, \\ 
&\psi_1^{-1} \times \psi_1^{-1} \dots \times \mathbbm{1} \times \mathbbm{1} \times \dots \times \psi_1^{-1} \psi_N \times \psi_1^{-1} \psi_N \times \dots, \\
&\vdots \\
&\psi_N^{-1} \times \psi_N^{-1} \times \dots \times \psi_N^{-1} \psi_1 \times \psi_N^{-1} \psi_1 \times \dots \times \mathbbm{1} \times \mathbbm{1} \times \dots\}
\end{align*}

The size $N+1$ of this equivalence class is either $|G|$ (if we can fit every irrep of $G$, which happens iff $\lfloor \frac{k}{k-t} \rfloor \geq |G|$) or $\lfloor \frac{k}{k-t} \rfloor$. Hence for a $t$-query algorithm
$$
P_{\rm opt} = \frac{1}{|G|} \min \left( \left\lfloor \frac{k}{k-t} \right\rfloor, |G| \right).
$$

This is exactly Thm. 5.1 by Zhandry (\cite{zhandry:2015}). An efficient algorithm achieving this success probability had previously been described (for $G$ cyclic) by Meyer and Pommersheim \cite{MePo:2011}.

\subsection{The van Dam algorithm}
The van Dam learning problem \cite{vanDam:1998} is concerned with identifying a (total) Boolean function $f: \{1, \dots, n\} \to \mathbb{Z}_2$ given access to evaluation queries. This is a special case of symmetric oracle discrimination (see Section 4). The group of oracles is isomorphic to $\mathbb{Z}_2^n$ and irreps can be again written as a product $\chi_1 \times \dots \times \chi_n$ of characters of $\mathbb{Z}_2$. The characters appearing in the $t$-th tensor power of the evaluation oracle representation are exactly those with Hamming weight $\leq t$. Hence the optimal success probability of a $t$-query algorithm is
$$
P_{\rm opt} = \frac{1}{2^n} \left| \{ \text{characters of $\mathbb{Z}_2^n$ with $\wt\ \leq t$ } \} \right| = \frac{1}{2^n} \sum_{i=0}^t {n \choose i}
$$
which reproves the optimality of van Dam's algorithm.

\section*{Acknowledgements} We would like to thank Andrew Childs, Hanspeter Kraft, David Meyer, Marino Romero and Mark Zhandry for helpful communications. We are grateful to the reviewers for their useful comments which have given us direction towards future work.

\bibliographystyle{alphaurl}
\bibliography{references}
\end{document}